\newcommand{\F}{\mathbb{F}}
\newcommand{\Z}{\mathbb{Z}}
\newcommand{\fqs}{\F_{q^2}}
\newcommand{\ev}{\text{ev}}
\newcommand{\cF}{\mathcal{F}}
\newcommand{\cD}{\mathcal{D}}
\newcommand{\mon}{X^{e_1}Y^{e_2}}
\newcommand{\monp}{X^{e_1'}Y^{e_2'}}
\newcommand{\expair}{((e_1,e_2),(e_1',e_2'))}
\newcommand{\specialcell}[2][c]{%
\begin{tabular}[#1]{@{}c@{}}#2\end{tabular}}
\newtheorem{theorem}{Theorem}[section]
\newtheorem{corollary}[theorem]{Corollary}
\newtheorem{lemma}[theorem]{Lemma}
\newtheorem{proposition}[theorem]{Proposition}
\theoremstyle{definition}
\newtheorem{exmp}{Example}[section]
\theoremstyle{definition}
\newtheorem{definition}[theorem]{Definition}
\theoremstyle{remark}
\newtheorem{remark}[theorem]{Remark}
\title[New Quantum Stabilizer Codes from Generalized Monomial Cartesian Codes]{New Quantum Stabilizer Codes from Generalized Monomial Cartesian Codes
constructed using two  different generalized Reed-Solomon codes.}
\author[O. Campion]{Oisin Campion\orcidlink{0009-0004-1951-7103}}
\address[Oisin Campion]{School of Mathematics and Statistics, University College Dublin, Ireland}
\email{oisin.campion@ucdconnect.ie}
\author[F. Hernando]{Fernando Hernando$^*$  \orcidlink{0000-0002-9758-2152}}
\address[Fernando Hernando]{Department of Mathematics and Institut Universitari de Matemàtiques i Aplicacions de Castelló,
Universidad Jaume I, Spain}
\email{carrillf@mat.uji.es$^*$}
\author[G. McGuire]{Gary McGuire \orcidlink{0000-0003-2105-9792}}
\address[Gary McGuire]{School of Mathematics and Statistics, University College Dublin, Ireland}
\email{gary.mcguire@ucd.ie}
\thanks{This publication has emanated from research conducted with the financial support of Science Foundation Ireland under Grant number 21/RP-2TF/10019 for the first author. The second author has been partially supported by  grant  PID2022-138906NB-C22
funded by MCIN/AEI/ 10.13039/501100011033 and ERDF, UE and by grant GACUJIMA-2024-03 funded by Universitat Jaume I.}
\keywords{Generalised Monomial Cartesian Code, Stabilizer Code, MDS code, Gilbert-Varshamov bound}
\subjclass[2020]{81P70, 94B05, 14G50, 11T71}
\begin{document}
\begin{abstract}
    In this work, we define Generalized Monomial Cartesian Codes (GMCC), which constitute a natural extension of generalized Reed–Solomon codes. We describe how two different generalized Reed-Solomon codes can be combined to construct one GMCC. We further establish sufficient conditions ensuring that the GMCC are Hermitian self-orthogonal, thus leading to new constructions of quantum codes. 
\end{abstract}
\maketitle

\section{Introduction}
Quantum computing has the potential to fundamentally transform computation by exploiting uniquely quantum phenomena such as \emph{superposition} and \emph{entanglement}. These features enable certain quantum algorithms to outperform the best known classical algorithms. A paradigmatic example is Shor’s algorithm for integer factorization, which runs in polynomial time on a quantum computer but for which no classical polynomial‑time algorithm is known, with profound implications for cryptography and complexity theory \cite{Shor1995Factoring}. Another fundamental quantum algorithm is Grover’s search algorithm, which achieves a quadratic speedup over classical search methods \cite{Grover1997}.  

Despite this promise, quantum information is extremely fragile due to decoherence and noise arising from interactions with the environment. Unlike classical information, quantum information cannot be copied due to the \emph{no‑cloning theorem}, which imposes fundamental limits on how errors can be mitigated \cite{NoCloning}. Moreover, quantum errors are more complex than classical bit flips, as they can affect both amplitude and phase.  

To address these challenges, the theory of quantum error correction was developed in the mid‑1990s. The first quantum error‑correcting code was introduced by Shor, who showed that a logical qubit could be protected against arbitrary single‑qubit errors by encoding it into nine physical qubits and correcting for noise actively \cite{Shor1995}. Shortly thereafter, Calderbank and Shor, as well as Steane, independently developed the class of quantum error‑correcting codes now known as CSS codes, which systematically construct quantum codes from pairs of classical linear codes \cite{CalderbankShor1996,Steane1996}. These constructions demonstrated that reliable quantum information processing is theoretically possible even in the presence of noise, laying the foundation for fault‑tolerant quantum computation.  

The stabilizer formalism, introduced by Gottesman, provides a unifying algebraic framework for describing many quantum codes, including CSS codes and more general constructions \cite{Gottesman1997}. The existence of quantum codes with nontrivial distance and reasonable rates was further explored using orthogonal geometry and additive codes over finite fields \cite{CalderbankRains1996}. Together, these foundational results show that quantum error correction is essential to practical quantum computing: without codes capable of detecting and correcting errors while preserving quantum coherence, the advantages offered by quantum algorithms cannot be realized in practice.

Among quantum codes, \emph{quantum MDS (Maximum Distance Separable) codes} are of particular interest because they achieve the quantum Singleton bound, which gives the maximum possible distance for a given length and dimension \cite{Grassl2004}.  Quantum  MDS codes provide the highest error-correcting capability for their parameters, making them optimal in this sense. However, constructing new MDS quantum codes is notoriously difficult and it is limited to lengths smaller than $q^2+1$ if MDS conjecture is true, so extending them to larger lengths or more general alphabets is a significant challenge. Recent work by Campion et al. \cite{campio2025} presented new families of quantum MDS codes using advanced algebraic techniques, expanding the known landscape but also highlighting the limitations of current constructions. This motivates the exploration of alternative families of codes that are easier to generalize while still supporting quantum error correction, such as the constructions we propose here.

In this paper, we introduce a class of \emph{generalized monomial–Cartesian (GMC) codes}. We analyze the conditions under which these codes are contained in their Hermitian dual, enabling the systematic construction of quantum stabilizer codes through the Hermitian construction \cite{kkks}. Our approach produces families of quantum codes whose parameters exceed the quantum Gilbert–Varshamov bound, and among them we identify explicit examples that improve upon all previously known quantum codes in the literature for their respective lengths and dimensions.

The paper first introduces the necessary preliminaries in Section 2. Section 3 explains how to combine two GRS codes to construct a Generalized Monomial Cartesian Code. Section 4 focuses on the special case of selecting only two points in the second variable, while Sections 5 and 6 study the Hermitian self-orthogonality conditions for codes with  several points  and compares the results with the Gilbert-Varshamov bound. Section 7 presents families of codes derived from these constructions, and Section 8 provides explicit examples along with a comparison to existing results in the literature.

\section{Preliminaries}

Let $q$ be a prime power. First, we introduce generalized Reed-Solomon codes defined over $\F_{q^2}$. We fix some arbitrary subset $P=\{P_1,\ldots,P_n\}\subseteq\fqs$. Let $V_k$ be the vector space of all polynomials in $\fqs[X]$ of degree $<k<n$. Let $v \in (\fqs^*)^n$.

\begin{definition}
    A \textbf{generalized Reed-Solomon code} (GRSC) $\text{GRS}_{k}(v, P)$ is obtained as the image of the injective evaluation map: 
\[
\text{ev}_{v,P}:V_k\rightarrow \fqs^n, ~~\ev_{v,P}(f)=(v_1f(P_1),\ldots , v_nf(P_n)).
\]
\end{definition}

A GRS code has parameters $[n,k,n-k+1]_{q^2}$. A GRS code is always MDS, and the dual of a GRS code is another GRS code. 

There is a natural extension of GRS codes to codes defined over bivariate polynomial spaces, based on generalized Monomial–Cartesian Codes (MCC). MCC were introduced in \cite{Hiram2020}, and they can also be viewed as a particular subfamily of affine variety codes \cite{Fitz1998}. Moreover, MCC are of independent interest because of their flexible structure, which allows for various applications; for instance, they admit an efficient locally recoverable decoding algorithm \cite{Gal2024} and can be used to construct CSS-T codes with good parameters \cite{Seyma2025}.

Let $P_X,P_Y \subset\fqs$ be arbitrary subsets, where $n_X=|P_X|, n_Y=|P_Y|$ and let $I \leq \fqs[X,Y] $ be the vanishing ideal of $P:=P_X \times P_Y$. 
Let $n= |P|$ and fix some ordering on the elements of $P$, writing $P=\{P_1,\ldots,P_n\}$. Let $E$ be the set of monomial exponents in the coordinate ring  $\fqs[X,Y]/I$ and let $\Delta\subset E$.

Let $V \subset \mathbb{F}_{q^2}[X,Y]/I$ be the $|\Delta|$-dimensional vector space spanned by the monomials in $\Delta$, and let  
$Q = (Q_1,\ldots,Q_n) \in (\mathbb{F}_{q^2}^{*})^{n}$.

\begin{definition}\label{d:GMCC}
A \textbf{generalized monomial Cartesian code} $\mathrm{GMC}_{\Delta}(Q,P)$ is the image of the injective evaluation map
\[
\mathrm{ev}_{Q,P} : V \longrightarrow \mathbb{F}_{q^2}^{\,n},
\qquad 
\mathrm{ev}_{Q,P}(f) = (Q_1 f(P_1),\ldots,Q_n f(P_n)).
\]
Equivalently,
\[
\mathrm{GMC}_{\Delta}(Q,P)
    = \operatorname{Span}\bigl\{
        \mathrm{ev}_{Q,P}(X^{a} Y^{b})
        \,\bigm|\,
        (a,b)\in \Delta
      \bigr\}
    \subset \mathbb{F}_{q^2}^{\,n}.
\]
\end{definition}

Such codes have length $n$ and dimension $ |\Delta|$.   To estimate the minimum distance we use the following version of the \textbf{footprint bound} (see \cite{Geil2000}).

\begin{definition}
\[
\delta_{FB}\!\left(\mathrm{GMC}_{\Delta}(Q,P)\right)
    = 
    \min_{(a,b)\in \Delta}
    (n_X - a)(n_Y - b).
\]
\end{definition}

The footprint bound provides the usual lower bound for the minimum distance.

\begin{proposition}
\[
d\!\left(\mathrm{GMC}_{\Delta}(Q,P)\right)
    \;\ge\;
    \delta_{FB}\!\left(\mathrm{GMC}_{\Delta}(Q,P)\right).
\]
\end{proposition}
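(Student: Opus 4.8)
The plan is to prove the footprint bound by reducing the minimum distance of $\mathrm{GMC}_\Delta(Q,P)$ to that of an ordinary (ungeneralized) monomial Cartesian code, and then invoking the classical footprint argument on the latter. First I would observe that multiplying each coordinate by the nonzero scalar $Q_i$ is an isometry of $\fqs^n$ for the Hamming metric: the map sending $(c_1,\ldots,c_n)$ to $(Q_1 c_1,\ldots,Q_n c_n)$ is a linear bijection that preserves the weight of every vector, since $Q_i c_i = 0$ if and only if $c_i = 0$. Hence $\mathrm{GMC}_\Delta(Q,P)$ and the code $C_\Delta(P) := \operatorname{Span}\{\mathrm{ev}_P(X^aY^b) \mid (a,b)\in\Delta\}$, where $\mathrm{ev}_P$ is the plain evaluation $f \mapsto (f(P_1),\ldots,f(P_n))$, have the same minimum distance. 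So it suffices to prove $d(C_\Delta(P)) \ge \min_{(a,b)\in\Delta}(n_X - a)(n_Y - b)$.

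For the reduced statement I would take an arbitrary nonzero codeword $\mathrm{ev}_P(f)$ with $f = \sum_{(a,b)\in\Delta} \lambda_{ab} X^aY^b$ not identically zero on $P$, and bound the number of its nonzero coordinates from below, equivalently bound the number of zeros $Z(f) := |\{P_i \in P : f(P_i) = 0\}|$ from above by $n - \min_{(a,b)\in\Delta}(n_X-a)(n_Y-b)$. The key step is the footprint/division argument: working in the coordinate ring $\fqs[X,Y]/I$, one uses a monomial order in which the leading monomial of $f$ is some $X^{a_0}Y^{b_0}$ with $(a_0,b_0)\in\Delta$, and the fact that $I = \langle \prod_{\alpha\in P_X}(X-\alpha), \prod_{\beta\in P_Y}(Y-\beta)\rangle$ has a Gröbner basis consisting of these two pure-power generators, so the standard monomials are exactly those $X^aY^b$ with $a < n_X$, $b < n_Y$. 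The number of common zeros in $P$ of $f$ is then controlled by the footprint of the ideal $I + \langle f\rangle$: counting the standard monomials not divisible by $X^{a_0}Y^{b_0}$ gives $n - (n_X - a_0)(n_Y - b_0)$ as an upper bound for $Z(f)$, whence the weight of $\mathrm{ev}_P(f)$ is at least $(n_X - a_0)(n_Y - b_0) \ge \min_{(a,b)\in\Delta}(n_X - a)(n_Y - b)$.

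I would structure the write-up to cite \cite{Geil2000} (and \cite{Fitz1998}) for the general footprint bound for affine variety codes, of which $C_\Delta(P)$ is an instance with $P$ a full Cartesian product, and then spell out the monomial count for this special ideal: the standard monomials of $I$ form the $n_X \times n_Y$ grid of exponents, and removing the $(n_X - a_0)\times(n_Y - b_0)$ sub-grid of multiples of $X^{a_0}Y^{b_0}$ that still lie in the grid leaves exactly $n - (n_X - a_0)(n_Y - b_0)$ of them. The main obstacle — really the only point requiring care — is making precise that the leading monomial of a nonzero $f$ reduced modulo the Gröbner basis of $I$ can be taken to lie in $\Delta$ (so that the minimum over $\Delta$ is the relevant quantity), and that the footprint count of the larger ideal genuinely upper-bounds the number of points of $P$ at which $f$ vanishes; both follow from the standard theory once the Gröbner basis of $I$ is identified, but they should be stated explicitly rather than left implicit. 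The scalar-isometry reduction at the start is what lets all of this classical machinery apply verbatim to the generalized setting.
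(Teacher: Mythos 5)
Your argument is correct and is exactly the standard footprint-bound proof that the paper relies on: the paper states this proposition without proof, citing \cite{Geil2000}, and your Gr\"obner-basis count of the standard monomials of $I+\langle f\rangle$ (with the grid minus the $(n_X-a_0)\times(n_Y-b_0)$ sub-grid) is the content of that citation. The scalar-isometry reduction handling the weights $Q_i$ is the right (and only) extra observation needed to pass from the plain to the generalized code, so nothing is missing.
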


\begin{theorem}\label{Te:EqualityIf}
If $\mathrm{GMC}_{\Delta}(Q,P)$ is a hyperbolic code \cite{Geil2001} or a 
monomial-decreasing Cartesian code \cite{Garcia2020}, then  
\[
d\!\left(\mathrm{GMC}_{\Delta}(Q,P)\right)
    \;=\;
    \delta_{FB}\!\left(\mathrm{GMC}_{\Delta}(Q,P)\right).
\]
\end{theorem}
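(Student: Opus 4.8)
The footprint bound from the preceding proposition already gives $d\!\left(\mathrm{GMC}_{\Delta}(Q,P)\right)\ge \delta_{FB}\!\left(\mathrm{GMC}_{\Delta}(Q,P)\right)$, so the plan is to prove the reverse inequality by exhibiting an explicit codeword of weight exactly $\delta_{FB}$. The structural fact I would exploit is that in both cases the exponent set $\Delta$ is \emph{decreasing}: if $(a,b)\in\Delta$ and $0\le a'\le a$, $0\le b'\le b$, then $(a',b')\in\Delta$. For a monomial-decreasing Cartesian code this is the definition. For a hyperbolic code, $\Delta=\{(a,b)\in E : (n_X-a)(n_Y-b)\ge d\}$ for its designed distance $d$, and since $a'\le a$, $b'\le b$ forces $(n_X-a')(n_Y-b')\ge (n_X-a)(n_Y-b)\ge d$, this set is also decreasing; in effect a hyperbolic code is a special monomial-decreasing Cartesian code, so both cases can be treated uniformly. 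Note also that the claim concerns $\delta_{FB}$ itself (a minimum over the finite set $\Delta$), so no hypothesis on $d$ being an attained "hyperbolic value" is needed.

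Next I would choose $(a,b)\in\Delta$ attaining the minimum $\delta_{FB}=\min_{(a,b)\in\Delta}(n_X-a)(n_Y-b)$. Since $\Delta\subseteq E$, we have $0\le a\le n_X-1$ and $0\le b\le n_Y-1$, so we may select distinct elements $\alpha_1,\dots,\alpha_a\in P_X$ and distinct elements $\beta_1,\dots,\beta_b\in P_Y$ and set
\[
f(X,Y)=\prod_{i=1}^{a}(X-\alpha_i)\,\prod_{j=1}^{b}(Y-\beta_j).
\]
Every monomial $X^{a'}Y^{b'}$ occurring in the expansion of $f$ satisfies $a'\le a$ and $b'\le b$, hence $(a',b')\in\Delta$ by the decreasing property; moreover $\deg_X f\le n_X-1$ and $\deg_Y f\le n_Y-1$, so $f$ is already reduced modulo the vanishing ideal $I$ of $P=P_X\times P_Y$. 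Therefore $f$ represents a genuine element of $V$, and $\ev_{Q,P}(f)\in\mathrm{GMC}_{\Delta}(Q,P)$.

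Finally I would count the weight. A point $(x,y)\in P$ is a zero of $f$ precisely when $x\in\{\alpha_1,\dots,\alpha_a\}$ or $y\in\{\beta_1,\dots,\beta_b\}$, so the set of points of $P$ at which $f$ does not vanish has size exactly $(n_X-a)(n_Y-b)$. Since each $Q_i\in\F_{q^2}^{*}$, multiplication by $Q_i$ preserves the set of nonzero coordinates, whence $\mathrm{wt}\bigl(\ev_{Q,P}(f)\bigr)=(n_X-a)(n_Y-b)=\delta_{FB}\!\left(\mathrm{GMC}_{\Delta}(Q,P)\right)$. Combined with the footprint lower bound this yields the claimed equality.

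The one delicate point is the verification that $f$ really represents an element of $V$: one must check simultaneously that all its exponents lie in $\Delta$ — which is exactly where the decreasing hypothesis is indispensable, and where the bound would fail for a general $\Delta$ — and that no reduction modulo $I$ is triggered, which is guaranteed by $(a,b)\in E$. Everything else is a routine zero-count over the Cartesian grid $P_X\times P_Y$.
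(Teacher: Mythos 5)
Your proof is correct and is essentially the standard argument: the paper itself states Theorem \ref{Te:EqualityIf} without proof, deferring to \cite{Geil2001} and \cite{Garcia2020}, and those references establish the reverse inequality by exactly the construction you give --- observing that $\Delta$ is a decreasing set in both cases and exhibiting the codeword $\ev_{Q,P}\bigl(\prod_{i=1}^{a}(X-\alpha_i)\prod_{j=1}^{b}(Y-\beta_j)\bigr)$ of weight $(n_X-a)(n_Y-b)$ for a minimizing $(a,b)\in\Delta$. Your two explicit verifications (every exponent of $f$ lies in $\Delta$ via the decreasing property, and no reduction modulo $I$ occurs since $(a,b)\in E$) are precisely the points where the argument could otherwise fail, and both are handled correctly.
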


Let $t>1$ be a positive integer and consider the following set of monomials:
\[
\Delta_t = \{\,X^{a}Y^{b} \in \mathbb{F}_{q^2}[X,Y] : (a+1)(b+1) < t\,\}.
\]
To simplify notation, we denote $\mathrm{GMC}_{\Delta_t}(Q,P)$ simply by $\mathrm{GMC}_t(Q,P)$. 
It is well known that $\mathrm{GMC}_t(Q,P)$ is a monomial-decreasing Cartesian code and that its dual is a hyperbolic code. Therefore, as a consequence of Theorem~\ref{Te:EqualityIf}, we obtain the following result.

\begin{corollary}
The minimum distance of the dual of $\mathrm{GMC}_t(Q,P)$ is $t$.
\end{corollary}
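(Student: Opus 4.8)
The plan is to reduce the statement to a single footprint-bound computation on the dual code. Write $C = \mathrm{GMC}_t(Q,P)$ and let $C^{\perp}$ be its dual. The two facts quoted just above the statement carry most of the weight: since $C^{\perp}$ is a hyperbolic code, Theorem~\ref{Te:EqualityIf} applies to it and gives $d(C^{\perp}) = \delta_{FB}(C^{\perp})$; and since $C$ is a monomial-Cartesian code over the full grid $P = P_X \times P_Y$, its dual is again a generalized monomial-Cartesian code over that same grid. So it suffices to describe the monomial support of $C^{\perp}$ explicitly and then evaluate $\delta_{FB}$ on it.

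To carry this out, I would first recall that the coordinate ring $\fqs[X,Y]/I$ has monomial exponent set $E = \{0,\dots,n_X-1\}\times\{0,\dots,n_Y-1\}$, and that for any $\Delta \subseteq E$ one has $\mathrm{GMC}_{\Delta}(Q,P)^{\perp} = \mathrm{GMC}_{\Delta^{\perp}}(Q',P)$ for a suitable multiplier vector $Q' \in (\fqs^{*})^{n}$, where $\Delta^{\perp} = \{(n_X-1-a,\, n_Y-1-b) : (a,b) \in E \setminus \Delta\}$ is the reflected complement of $\Delta$ (this is the standard duality for monomial-Cartesian codes; the twist by $Q$ changes only $Q'$, not the monomial support $\Delta^{\perp}$). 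Taking $\Delta = \Delta_t$ and performing the substitution $(a',b') = (n_X-1-a,\, n_Y-1-b)$, a monomial $X^{a'}Y^{b'}$ lies in $\Delta_t^{\perp}$ exactly when $(n_X-a')(n_Y-b') \ge t$, i.e. $\Delta_t^{\perp}$ is a hyperbolic set. Substituting once more with $u = n_X - a'$ and $v = n_Y - b'$,
\[
\delta_{FB}(C^{\perp}) = \min_{(a',b') \in \Delta_t^{\perp}} (n_X - a')(n_Y - b') = \min\{\, uv : 1 \le u \le n_X,\ 1 \le v \le n_Y,\ uv \ge t \,\}.
\]
By Theorem~\ref{Te:EqualityIf} this is exactly $d(C^{\perp})$, so the corollary comes down to showing that this minimum equals $t$.

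The minimum is $\ge t$ by construction, so the only genuinely delicate point — and the step I would expect to need the most care — is checking that the value $t$ is actually attained by an admissible product: $u = t$, $v = 1$ works as soon as $t \le n_X$, and symmetrically when $t \le n_Y$, and more generally whenever $t$ factors as a product of two integers bounded by $n_X$ and $n_Y$. This is where any hypothesis on $t$ relative to $n_X, n_Y$ enters; without such an assumption one obtains only $d(C^{\perp}) \ge t$, with equality failing precisely when $t$ admits no such factorization. The remaining ingredients — the shape of the monomial exponent set of the coordinate ring of a Cartesian product, the reflected-complement description of the dual of a monomial-Cartesian code, and the verification that the twist $Q \mapsto Q'$ leaves $\Delta^{\perp}$ unchanged — are routine, and I would present them only briefly.
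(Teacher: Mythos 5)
Your proposal is correct and takes essentially the same route the paper intends: the dual is a hyperbolic code, Theorem~\ref{Te:EqualityIf} reduces its minimum distance to the footprint bound, and the footprint of the reflected-complement monomial set is $\min\{uv : uv\ge t,\ 1\le u\le n_X,\ 1\le v\le n_Y\}$. Your caveat that this minimum equals $t$ only when $t$ is attained by such a product (e.g.\ when $t\le n_X$) is a genuine subtlety that the corollary's bare statement glosses over, but it is harmless here since the paper only ever uses $t\le T^*$, which forces $t\le n_X=\lambda\tau\sigma$.
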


Now we introduce some elementary concepts and results that we will use throughout the paper (see \cite{NielsenChuang2010}).

A \textbf{quantum stabilizer code} is the common $+1$-eigenspace of a commutative subgroup of Pauli operators, which we call the \textbf{stabilizer group}. Stabilizer codes can be constructed using classical linear codes defined over $\mathbb{F}_{q^2}$ that are self-orthogonal with respect to a Hermitian inner product. For a detailed treatement of non-binary quantum stabilizer codes, see \cite{kkks}.

\begin{theorem}\label{t:HermCon}
Let $C$ be a linear $[n,k,d]_{q^2}$ error-correcting code over the field $\mathbb{F}_{q^2}$ such that $C \subseteq C^{\perp_h}$. Then, there exists an $[[n,n-2k,\geq d^{\perp_h}]]_q$ quantum stabilizer code, where $d^{\perp_h}$ denotes the minimum distance of $C^{\perp_h}$.
\end{theorem}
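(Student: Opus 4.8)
The plan is to translate the Hermitian condition into the language of symplectic self-orthogonal spaces, for which the passage to a stabilizer code is the standard one; the statement is classical and the sketch below simply recalls the proof from \cite{kkks}. On $\F_q^{2n}$, writing a vector as a pair $(a\mid b)$ with $a,b\in\F_q^n$, one uses the symplectic form $\langle(a\mid b),(a'\mid b')\rangle_s = a\cdot b' - a'\cdot b$, whose associated duality we denote $\perp_s$; the \emph{symplectic weight} of $(a\mid b)$ counts the indices $i$ with $(a_i,b_i)\neq(0,0)$. The fundamental correspondence of the stabilizer formalism states that to every $\F_q$-linear subspace $\mathcal{C}\subseteq\F_q^{2n}$ with $\mathcal{C}\subseteq\mathcal{C}^{\perp_s}$ and $|\mathcal{C}|=q^{\,n-\ell}$ one associates an $[[n,\ell,d]]_q$ stabilizer code, where $d$ is the minimum symplectic weight of $\mathcal{C}^{\perp_s}\setminus\mathcal{C}$ (and of $\mathcal{C}^{\perp_s}$ when $\mathcal{C}=\mathcal{C}^{\perp_s}$); see \cite{kkks, Gottesman1997}.

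First I would introduce on $\F_{q^2}^n$ the $\F_q$-bilinear alternating ``trace-alternating'' form derived from the Hermitian one, namely $\langle x\mid y\rangle_a = \mathrm{Tr}_{q^2/q}\!\bigl(\gamma\,(x\cdot y^{q} - x^{q}\cdot y)\bigr)$ for a suitably chosen $\gamma\in\F_{q^2}^{*}$, where $x\cdot y^q=\sum_i x_i y_i^q$. Two elementary facts drive the argument: (i) since $x\cdot y^q - x^q\cdot y=\langle x,y\rangle_h-\langle y,x\rangle_h$ with $\langle y,x\rangle_h=\overline{\langle x,y\rangle_h}$, the vanishing of $\langle x,y\rangle_h$ forces the vanishing of $\langle x\mid y\rangle_a$; and (ii) $\langle\cdot\mid\cdot\rangle_a$ is a nondegenerate alternating form on the $2n$-dimensional $\F_q$-space $\F_{q^2}^n$, so (by uniqueness of symplectic spaces, or by an explicit construction from a basis of $\F_{q^2}/\F_q$) there is an $\F_q$-linear bijection $\Phi:\F_{q^2}^n\to\F_q^{2n}$ with $\langle\Phi(x),\Phi(y)\rangle_s=\langle x\mid y\rangle_a$; moreover $\Phi$ may be taken to act coordinatewise through the basis, so that it is weight-preserving (the $i$-th coordinate of $x$ vanishes iff the $i$-th pair of $\Phi(x)$ vanishes).

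Next I would assemble these pieces. By fact (i), $C\subseteq C^{\perp_h}$ implies $C\subseteq C^{\perp_a}$; comparing $\F_q$-dimensions ($\dim_{\F_q}C^{\perp_h}=2(n-k)=\dim_{\F_q}C^{\perp_a}$) together with the inclusion $C^{\perp_h}\subseteq C^{\perp_a}$ (again from (i)) yields $C^{\perp_h}=C^{\perp_a}$. Transporting through the isometry $\Phi$, the space $\mathcal{C}:=\Phi(C)$ satisfies $\mathcal{C}\subseteq\mathcal{C}^{\perp_s}$, has $|\mathcal{C}|=q^{2k}=q^{\,n-(n-2k)}$, and $\mathcal{C}^{\perp_s}=\Phi(C^{\perp_h})$. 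The stabilizer correspondence then produces an $[[n,\,n-2k,\,d]]_q$ code in which $d$ equals the minimum symplectic weight of $\mathcal{C}^{\perp_s}\setminus\mathcal{C}=\Phi(C^{\perp_h})\setminus\Phi(C)$. Since $\Phi$ is a weight-preserving bijection and this set consists of nonzero vectors lying inside $\Phi(C^{\perp_h})$, each of them has symplectic weight equal to the Hamming weight of a nonzero codeword of $C^{\perp_h}$, hence at least $d^{\perp_h}$; therefore $d\ge d^{\perp_h}$, which is exactly the asserted parameter.

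The main obstacle is the simultaneous verification of all the properties of $\Phi$: $\F_q$-linearity, bijectivity, the identity $\langle\Phi(x),\Phi(y)\rangle_s=\langle x\mid y\rangle_a$, and weight preservation. This hinges on choosing the basis of $\F_{q^2}$ over $\F_q$ and the constant $\gamma$ compatibly (with a minor separate bookkeeping in characteristic $2$, where the antisymmetrization must be treated differently), after which everything reduces to routine bilinear-form computations carried out in detail in \cite{kkks}.
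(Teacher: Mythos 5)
Your argument is correct: it is the standard proof of the Hermitian construction via the trace-alternating form and the symplectic stabilizer correspondence, exactly as in \cite{kkks}, which is the source the paper cites for this theorem without reproducing a proof. The dimension count ($C^{\perp_h}=C^{\perp_a}$ by inclusion plus equal $\F_q$-dimensions) and the weight-preservation of the isometry $\Phi$ giving $d\ge d^{\perp_h}$ are the two essential points, and you have both.
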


\begin{corollary}\label{c:GMCCParams}
Let $\mathrm{GMC}_t(Q,P)$ be a GMC code defined over $\mathbb{F}_{q^2}$ such that 
$\mathrm{GMC}_t(Q,P) \subseteq (\mathrm{GMC}_t(Q,P))^{\perp_h}$. Then there exists an $[[n,n-2|\Delta_t|,t]]_q$ quantum stabilizer code. 
\end{corollary}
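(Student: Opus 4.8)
The plan is to combine three ingredients already available: the dimension count of Definition~\ref{d:GMCC}, the Hermitian construction of Theorem~\ref{t:HermCon}, and the dual–distance computation recorded in the Corollary immediately preceding the statement.

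First I would set $C=\mathrm{GMC}_t(Q,P)$ and recall from Definition~\ref{d:GMCC} that $C$ is a linear code over $\mathbb{F}_{q^2}$ of length $n$ and dimension $k=|\Delta_t|$, since $\Delta_t$ indexes a basis of the evaluated monomials and the evaluation map is injective. The hypothesis $C\subseteq C^{\perp_h}$ then lets us apply Theorem~\ref{t:HermCon} verbatim, producing an $[[\,n,\;n-2k,\;\geq d^{\perp_h}\,]]_q=[[\,n,\;n-2|\Delta_t|,\;\geq d^{\perp_h}\,]]_q$ quantum stabilizer code, where $d^{\perp_h}$ denotes the minimum distance of $C^{\perp_h}$. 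Note in passing that $C\subseteq C^{\perp_h}$ forces $k\le\dim C^{\perp_h}=n-k$, so $n-2|\Delta_t|\ge 0$ and the parameters make sense.

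It then remains to identify $d^{\perp_h}$ with $t$. The preceding Corollary states that the (Euclidean) dual of $\mathrm{GMC}_t(Q,P)$ is a hyperbolic code whose minimum distance equals $t$, via Theorem~\ref{Te:EqualityIf} and the footprint bound. To pass from the Euclidean dual to the Hermitian dual I would use that $C^{\perp_h}$ is obtained from $C^{\perp}$ by applying the coordinatewise Frobenius map $x\mapsto x^{q}$, which is a field automorphism of $\mathbb{F}_{q^2}$ and hence preserves Hamming weights; consequently $C^{\perp_h}$ and $C^{\perp}$ have the same weight distribution, so $d^{\perp_h}=d^{\perp}=t$. Substituting into the parameters above yields the claimed $[[\,n,\;n-2|\Delta_t|,\;t\,]]_q$ code. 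The only genuine subtlety, and the step I would flag, is precisely this transfer from the Euclidean dual distance to the Hermitian dual distance; every other part of the argument is an immediate appeal to results already established in the excerpt.
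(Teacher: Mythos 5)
Your proof is correct and follows the same route the paper intends (the corollary is stated without an explicit proof there): apply Theorem~\ref{t:HermCon} with $k=|\Delta_t|$ and use the preceding corollary to identify the dual distance with $t$. Your explicit justification that $C^{\perp_h}$ is the coordinatewise Frobenius image of $C^{\perp}$, hence has the same minimum distance, correctly fills in the one step the paper leaves implicit.
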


Our objective in this paper is to generalize the result in \cite{campio2025}, namely, to construct GMC codes in two variables that satisfy the conditions of Theorem~\ref{t:HermCon}, thereby enabling the construction of quantum error-correcting codes. In the next section, we will show how two GRS codes can be combined to obtain a GMC code.

\section{How to turn two GRS codes into one GMC code }\label{s:GMCframework}

The foremost difficulty in the application of classical error-correcting codes to quantum error-correction is the imposition of a self-orthogonality condition, such as in Theorem \ref{t:HermCon}. In the case of GMC codes defined over $\fqs$, a convenient basis for the space of codewords is: 
\[
\mathcal{B}=\{\text{ev}_{Q,P}(f): f \in \Delta_t\}. 
\]
Our codes will be Hermitian self-orthogonal if and only if every basis element is orthogonal to every other basis element; in other words, we require that 
\[
\text{ev}_{Q,P}(f)\cdot_h\text{ev}_{Q,P}(g)=0 ~~\forall f,g \in \Delta_t.
\]
Since every monomial is uniquely determined by its exponent, we consider the set of all pairs of monomial exponents, whose corresponding evaluation vectors are not orthogonal. 

\begin{definition}
    For a given GMC code, we define the set of \textbf{non-orthogonal exponent pairs} to be 
    \[
    \mathcal{D}=\{((e_1,e_2),(e_1',e_2')) \in E\times E: \text{ev}_{Q,P}(X^{e_1}Y^{e_2})\cdot_h \text{ev}_{Q,P}(X^{e_1'}Y^{e_2'}) \neq 0\}. 
    \]
\end{definition}

\begin{remark}
    Since there is a correspondence between monomials and their exponents, we will often talk of membership of monomials in the sets $E,\mathcal{D}$, rather than saying ``the exponents of the monomials''. Later, we will frequently refer to a pair of monomials in $E\times E$ as a single ``point''. 
\end{remark}

\begin{remark}\label{r:deltat}
    Our code will be Hermitian self-orthogonal if and only if no pair of monomials from $\Delta_t$ lies in $\mathcal{D}$. Since the sets $\Delta_t$ are nested, of particular interest is determining the maximum value of $t$ for which this happens. 

\end{remark}

The characterization of $\mathcal{D}$ depends on the choice of evaluation sets $P_X=\{x_i\},P_Y=\{y_i\}$ as well as the coordinate vector $Q$. Having a complete characterization is difficult. Instead, we will find some other set $\mathcal{F} \supseteq \mathcal{D}$, and it will be sufficient to check that we have no pairs of monomials from $\Delta_t$ lying in $\mathcal{F}$.

 Let us now look in more detail at the Hermitian inner product. We will write the evaluation points as $P(i,j)=(x_i,y_i)$ and coordinate vector as $Q(i,j)$. Given two monomials $X^{e_1}Y^{e_2}$ and $X^{e_1'}Y^{e_2'}$, then the Hermitian inner product between them is given by the expression: 
\begin{equation*}
\begin{split}
 &ev_{Q,P}\left(X^{e_1}Y^{e_2}\right)\cdot_h ev_{Q,P}\left(X^{e_1'}Y^{e_2'}\right) \\
 = &\sum_{i,j}Q(i,j)^{q+1}(x_i)^{e_1+qe_1'}(y_j)^{e_2+qe_2'}.
\end{split}
\end{equation*}
If it so happens that the coordinate vector can be written as $Q(i,j)=v_i\cdot w_j$ then this sum can be factored as: 
\begin{equation}\label{eq:decomp}
\begin{split}
  &\sum_{i,j}Q(i,j)^{q+1}(x_i)^{e_1+qe_1'}(y_j)^{e_2+qe_2'} \\
        =&\left(\sum_i v_i^{q+1}(x_i)^{e_1+qe_1'}\right)\left( \sum_jw_j^{q+1}(y_j)^{e_2+qe_2'} \right)\\
        =&   \left( ev_{v,P_X}\left(X^{e_1}\right)\cdot_h ev_{v,P_X}\left(X^{e_1'}\right)\right)\cdot \left(ev_{w,P_Y}\left(Y^{e_2}\right)\cdot_h ev_{w,P_Y}\left(Y^{e_2'}\right) \right).
\end{split}
\end{equation}

where $\text{ev}_{v,P_X}$ (resp. $\text{ev}_{w,P_Y}$) is the evaluation map corresponding to the GRS code $\text{GRS}_{k}(v, P_X)$ (resp. $\text{GRS}_{k}(w, P_Y)$). In this way, we can see that we can combine two GRS codes to create a GMC code. Moreover, the orthogonality of codewords of the GMC code is completely determined by the orthogonality of codewords of each GRS code. 

\begin{definition}
    If the coordinate vector $Q(i,j)$ of a GMC code can be decomposed as $Q(i,j)=v_i\cdot w_j$ then we say that the GMC code is \textbf{separable}. 
\end{definition}

\begin{remark}
    We will restrict our study of GMC codes to those that are separable. Otherwise, the factorization in Equation \ref{eq:decomp} does not hold. 
\end{remark}

\begin{remark}\label{r:D_X}
    As we did for GMC codes, we can define for each of the GRS codes the set of \textbf{non-orthogonal exponent pairs}. More precisely, we define
    \[
    \mathcal{D}_X=\{(e_1,e_1'):\text{ev}_{v,P_X}(X^{e_1})\cdot_h\text{ev}_{v,P_X}(X^{e_1'})\neq 0\}. 
    \]
    We define $\cD_Y$ similarly.
\end{remark}

\begin{proposition}\label{p:decomp}
    For any pair of monomial exponents $((e_1,e_2),(e_1',e_2')) \in E\times E$, we have that 
    \[
    ((e_1,e_2),(e_1',e_2')) \in \cD \iff (e_1,e_1')\in \cD_X \text{ and } (e_2,e_2')\in \cD_Y.
    \]
\end{proposition}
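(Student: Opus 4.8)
The plan is to read off the equivalence directly from the factorization in Equation~\eqref{eq:decomp}. Given a pair of monomial exponents $((e_1,e_2),(e_1',e_2'))\in E\times E$, the quantity that decides membership in $\cD$ is the Hermitian inner product $\ev_{Q,P}(X^{e_1}Y^{e_2})\cdot_h\ev_{Q,P}(X^{e_1'}Y^{e_2'})$. Since we have restricted to separable GMC codes, we may write $Q(i,j)=v_i w_j$, and then Equation~\eqref{eq:decomp} expresses this inner product as the product of the two scalars
\[
A:=\ev_{v,P_X}(X^{e_1})\cdot_h\ev_{v,P_X}(X^{e_1'}), \qquad
B:=\ev_{w,P_Y}(Y^{e_2})\cdot_h\ev_{w,P_Y}(Y^{e_2'}).
\]
Both $A$ and $B$ are elements of the field $\F_{q^2}$.

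The key step is then the elementary fact that a field has no zero divisors: the product $AB$ is nonzero if and only if both $A\neq 0$ and $B\neq 0$. By Definition of $\cD_X$ (Remark~\ref{r:D_X}), the condition $A\neq 0$ is precisely $(e_1,e_1')\in\cD_X$, and likewise $B\neq 0$ is precisely $(e_2,e_2')\in\cD_Y$. Combining, $((e_1,e_2),(e_1',e_2'))\in\cD$, i.e.\ $AB\neq 0$, holds if and only if $(e_1,e_1')\in\cD_X$ and $(e_2,e_2')\in\cD_Y$, which is the claim.

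There is essentially no obstacle here: the only subtlety is making sure the factorization of Equation~\eqref{eq:decomp} genuinely applies, which it does since separability is a standing hypothesis, and that the exponents $(e_1,e_2),(e_1',e_2')$ lie in $E$ so that the relevant evaluation maps are well-defined on the corresponding monomials in the quotient ring. Once that bookkeeping is in place, the statement is just the no-zero-divisors property of $\F_{q^2}$ applied to the two factors.
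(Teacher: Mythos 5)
Your proof is correct and follows exactly the route the paper takes: the paper's proof is the one-liner ``immediate from Equation~\eqref{eq:decomp},'' and you have simply made explicit the factorization into the two scalar inner products and the no-zero-divisors argument in $\F_{q^2}$. No gaps.
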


\begin{proof}
    This is immediate from Equation \ref{eq:decomp}.
\end{proof}
\begin{remark}
    As we mentioned earlier, a full characterization of $\cD_X$ and $\cD_Y$ will be difficult. Instead we will find suitable sets $\cF_X\supseteq\cD_X$ and $\cF_Y\supseteq \cD_Y$. 
\end{remark}
In this paper, we will fix a class of GRS codes for use in the $X$-variable (that is, fixing $P_X$ and $v$). This family of codes was introduced in \cite{campio2025}, in which the authors derived conditions for the Hermitian-orthogonality of codewords. In later sections, we will provide an explicit description of another family of GRS codes for use in the $Y$ variable, (that is, selecting $P_Y$ and $w)$ which in combination with the $X$-variable codes will provide us with Hermitian self-orthogonal GMC codes. 

Now we recall the first family of GRS code, for use in the $X$-variable. 

We assume that $q\ge 4$. Let $\lambda>1$ be a divisor of $q-1$ and let $\tau,\rho>1$ be divisors of $q+1$. We assume that $\gcd(\lambda,\tau)=1$ and that $\frac{\rho}{\gcd(\lambda\tau,\rho)}\ge2.$ We choose $\sigma$ to be any integer with $2 \leq \sigma \leq \frac{\rho}{\gcd(\lambda\tau,\rho)}$. We consider the set of evaluation points:
\[
P_X=\{\zeta_\lambda^{i}\zeta_{\tau}^{j}\zeta_{\rho}^{\ell} : 
    0\leq i < \lambda, \ 0\leq j < \tau,\
    0\leq \ell < \sigma\} \subset \F_{q^2}
\]
where $\zeta_t$ denotes a primitive $t$-th root of unity. 
By Lemma 3.1 of \cite{campio2025}, the elements of $P_X$ are distinct, and we can uniquely associate triples $(i,j,\ell)$ with elements of $P_X$, writing $P_X(i,j,\ell):=\zeta_\lambda^{i}\zeta_{\tau}^{j}\zeta_{\rho}^{\ell}$. 

Next, we describe the construction of the vector $v$. We select $s_0,\ldots s_{\sigma-1}\in \F_q^*$ in the following way: 

\begin{itemize}
    \item If $\sigma=2$, then set $s_0=1, s_1=-1$. 
    \item Otherwise, set $s_0,\ldots,s_{\sigma-3} =1$, select $s_{\sigma-2} \in \F_q \setminus \{0,-(s_0+\ldots +s_{\sigma-3})\}$ and set $s_{\sigma-1} = -(s_0+\ldots s_{\sigma-2})$. This requires the the assumption that $q>2$.
\end{itemize}

This ensures that $\sum_{\ell=0}^{\sigma-1}s_\ell=0$. Let $L$ be an integer, chosen according to Table \ref{tab:valueL}. We will label the coordinates of the vector $v$ by $(i,j,\ell)$, 
using the same ordering as the evaluation set $P_X$.
We define $v$ by
$v(i,j,\ell)^{q+1}=\zeta_\lambda^{-iL} s_{\ell}.$
Since $\zeta_\lambda^{-iL} s_{\ell}$ is an element of $\F_q$,
it is always possible to solve this norm equation and find a suitable $v$ with coefficients in $\F_{q^2}$. 

\begin{definition}
    With these parameters defined, we define the code
    \[
    C_X:= GRS_{k}(v,P_X).
    \]
\end{definition}
Let us now describe the orthogonality properties of this code, which were proved in \cite{campio2025}.
\begin{proposition}(Theorem 3.4 of \cite{campio2025})\label{p:orthogconditions}
    Let $X^{e_1},X^{e_1'}$ be two monomials. Then
    $$
    ev_{v,P_X}\left(X^{e_1}\right)\cdot_h ev_{v,P_X}\left(X^{e_1'}\right) =0
    $$
    if  
    any one of the following conditions holds: 
    \begin{itemize}
    \item $e_1+e_1'  \not\equiv L $ (mod $\lambda$).
    \item $e_1 \not\equiv e_1'$ (mod $\tau$).
    \item $e_1\equiv e_1'$ (mod $\rho$).
\end{itemize}
\end{proposition}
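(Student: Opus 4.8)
The plan is to evaluate the Hermitian inner product $ev_{v,P_X}(X^{e_1})\cdot_h ev_{v,P_X}(X^{e_1'})$ directly using the explicit description of $P_X$ and $v$, and to show that under each of the three stated congruence conditions the resulting sum factors in such a way that one factor is a vanishing geometric sum over roots of unity. First I would write
\[
ev_{v,P_X}\!\left(X^{e_1}\right)\cdot_h ev_{v,P_X}\!\left(X^{e_1'}\right)
= \sum_{i,j,\ell} v(i,j,\ell)^{q+1}\bigl(\zeta_\lambda^{i}\zeta_{\tau}^{j}\zeta_{\rho}^{\ell}\bigr)^{e_1+qe_1'}.
\]
Using $v(i,j,\ell)^{q+1}=\zeta_\lambda^{-iL}s_\ell$ and the fact that $\zeta_\tau,\zeta_\rho$ have order dividing $q+1$ (so $\zeta_\tau^{q}=\zeta_\tau^{-1}$, $\zeta_\rho^{q}=\zeta_\rho^{-1}$) while $\zeta_\lambda$ has order dividing $q-1$ (so $\zeta_\lambda^{q}=\zeta_\lambda$), the exponents simplify: the $\zeta_\lambda$ contribution becomes $\zeta_\lambda^{i(e_1+e_1'-L)}$, the $\zeta_\tau$ contribution becomes $\zeta_\tau^{j(e_1-e_1')}$, and the $\zeta_\rho$ contribution becomes $\zeta_\rho^{\ell(e_1-e_1')}$. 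Since the index set is the Cartesian product over $i,j,\ell$, the whole sum factors as a product of three independent sums:
\[
\Bigl(\sum_{i=0}^{\lambda-1}\zeta_\lambda^{i(e_1+e_1'-L)}\Bigr)
\Bigl(\sum_{j=0}^{\tau-1}\zeta_\tau^{j(e_1-e_1')}\Bigr)
\Bigl(\sum_{\ell=0}^{\sigma-1}s_\ell\,\zeta_\rho^{\ell(e_1-e_1')}\Bigr).
\]

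Next I would dispatch each of the three conditions in turn. If $e_1+e_1'\not\equiv L \pmod{\lambda}$, then $\zeta_\lambda^{e_1+e_1'-L}$ is a primitive-divisor root of unity different from $1$, so the first geometric sum $\sum_{i=0}^{\lambda-1}(\zeta_\lambda^{e_1+e_1'-L})^i$ vanishes. If $e_1\not\equiv e_1'\pmod{\tau}$, the same argument applied to the second sum with ratio $\zeta_\tau^{e_1-e_1'}\ne 1$ gives vanishing. The third condition, $e_1\equiv e_1'\pmod{\rho}$, is the one requiring the special construction of the $s_\ell$: in this case $\zeta_\rho^{e_1-e_1'}=1$, so the third sum collapses to $\sum_{\ell=0}^{\sigma-1}s_\ell$, which was arranged to equal $0$ by the choice of $s_0,\dots,s_{\sigma-1}$. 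In every case at least one factor is zero, hence the product is zero, proving the claim.

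The main obstacle is bookkeeping the exponent reductions modulo $q\pm 1$ correctly — in particular confirming that $e_1+qe_1'\equiv e_1+e_1'-L$ only after absorbing the $\zeta_\lambda^{-iL}$ twist and using $\zeta_\lambda^q=\zeta_\lambda$, while for $\zeta_\tau$ and $\zeta_\rho$ one uses $\zeta^q=\zeta^{-1}$ so that $e_1+qe_1'$ effectively becomes $e_1-e_1'$ in the exponent. One should also note that the factorization of the sum over the product index set $\{(i,j,\ell)\}$ into three separate sums is exactly the same mechanism as Equation \eqref{eq:decomp}, just applied internally to the structure of $P_X$; this is routine once the Cartesian-product structure of $P_X$ is made explicit. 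Since all of this is essentially the content of Theorem 3.4 of \cite{campio2025}, the proof here is simply a matter of recalling that computation, and I would present it compactly by citing that reference and sketching the three-factor vanishing argument above.
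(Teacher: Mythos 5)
Your computation is correct: the Cartesian structure of $P_X$ and the choice $v(i,j,\ell)^{q+1}=\zeta_\lambda^{-iL}s_\ell$ do factor the Hermitian inner product into the three sums you display, each of the three congruence conditions kills one factor (the first two by a full geometric sum over a nontrivial root of unity, the third via $\sum_\ell s_\ell=0$), and the reductions $\zeta_\lambda^q=\zeta_\lambda$, $\zeta_\tau^q=\zeta_\tau^{-1}$, $\zeta_\rho^q=\zeta_\rho^{-1}$ are applied correctly. The paper itself gives no proof here, deferring entirely to Theorem 3.4 of the cited reference, and your argument is exactly the standard computation that result rests on.
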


\begin{definition}\label{d:Xfailurepoint}
    Let $X^{e_1},X^{e_1'}$ be two monomials. We denote by $\mathcal{F}_X$ the set of non-negative integer pairs $\left(e_1,e_1'\right)$ with $e_1\not= e_1'$ satisfying both of the following conditions: 
\begin{enumerate}
    \item $       e_1+e_1'  \equiv L  \textnormal{ (mod } \lambda).$
    \item $  e_1  \equiv e_1'  \textnormal{ (mod } \tau).$
\end{enumerate}
    We call elements of the set $\cF_X$ \textbf{$X$-failure points}. 
\end{definition}

\begin{remark}\label{r:Xsymmetry}
Notice that if a point $(a,b)$ satisfies both parts of Definition \ref{d:Xfailurepoint} then so does $(b,a)$. We usually assume without loss of generality that a point $(e_1,e_1')\in\cF_X$ satisfies $e_1<e_1'$.
 \end{remark}

\begin{lemma}
    The set $\cF_X$ contains $\cD_X$.
\end{lemma}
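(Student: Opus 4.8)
The plan is to show the contrapositive inclusion: if $(e_1,e_1')\notin\cF_X$, then $(e_1,e_1')\notin\cD_X$, i.e. the corresponding evaluation vectors are Hermitian-orthogonal. Recall that $\cD_X$ is the set of pairs $(e_1,e_1')$ with $\ev_{v,P_X}(X^{e_1})\cdot_h\ev_{v,P_X}(X^{e_1'})\neq 0$, and $\cF_X$ is the set of pairs with $e_1\neq e_1'$ satisfying $e_1+e_1'\equiv L\pmod\lambda$ and $e_1\equiv e_1'\pmod\tau$.

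First I would dispose of the diagonal case: if $e_1=e_1'$, then $(e_1,e_1')\notin\cF_X$ by definition, so I must check it is also not in $\cD_X$. Here the third bullet of Proposition~\ref{p:orthogconditions} applies, since $e_1\equiv e_1'\pmod\rho$ trivially, giving orthogonality, hence $(e_1,e_1)\notin\cD_X$. So the diagonal is consistent. Next, assume $e_1\neq e_1'$ and $(e_1,e_1')\notin\cF_X$. Then at least one of the two defining congruences of Definition~\ref{d:Xfailurepoint} fails: either $e_1+e_1'\not\equiv L\pmod\lambda$ or $e_1\not\equiv e_1'\pmod\tau$. In either case, the corresponding bullet (the first or the second, respectively) of Proposition~\ref{p:orthogconditions} applies and yields $\ev_{v,P_X}(X^{e_1})\cdot_h\ev_{v,P_X}(X^{e_1'})=0$, so $(e_1,e_1')\notin\cD_X$.

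Combining the two cases, every pair not in $\cF_X$ fails to lie in $\cD_X$; equivalently $\cD_X\subseteq\cF_X$, which is the claim.

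There is essentially no obstacle here: the lemma is a direct logical repackaging of Proposition~\ref{p:orthogconditions} together with the definition of $\cF_X$. The only mild subtlety is remembering that Proposition~\ref{p:orthogconditions} gives \emph{sufficient} conditions for orthogonality (an ``if'', not an ``iff''), so one cannot conclude the reverse inclusion $\cF_X\subseteq\cD_X$; the lemma only claims containment in one direction, and that is exactly what the three bullets deliver once one notes that the negation of ``$(e_1,e_1')\in\cF_X$'' forces one of the first two bullets (or, on the diagonal, the third) to hold.
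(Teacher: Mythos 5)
Your proof is correct and follows the same route as the paper, which simply notes that the containment follows directly from the definition of $\cD_X$, Definition~\ref{d:Xfailurepoint}, and Proposition~\ref{p:orthogconditions}. Your version just spells out the contrapositive and the diagonal case explicitly, which is a faithful unpacking of the same argument.
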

\begin{proof}
    This follows directly from Remark \ref{r:D_X}, Definition \ref{d:Xfailurepoint} and Proposition \ref{p:orthogconditions}. 
\end{proof}
\begin{remark}
    We could sharpen this containment by considering also the third condition in Definition \ref{d:Xfailurepoint}, but this adds a lot of complexity for relatively little gain. In certain specific cases we can use the condition to construct some more codes (see the second paragraph of Example \ref{exmp:codeexample}).
\end{remark}
\begin{remark}
    The reason for calling the elements of $\cF_X$ $X$-failure points is the following. Given two monomials $\mon$ and $\monp$, their corresponding evaluation vectors will be orthogonal if and only if $ (e_1,e_1')\notin \cD_X$ or $(e_2,e_2')\notin \cD_Y$, by Proposition \ref{p:decomp}. If $(e_1,e_1') \notin \cF_X$, then $(e_1,e_1') \notin \cD_X$ and we can determine immediately that the evaluation vectors of the monomials are orthogonal. Otherwise we are risk of ``failing'' orthogonality. 
\end{remark}

\begin{exmp}
        Let $q=11,\lambda=5,\tau=3,\rho=4$. Then since $(3,5)\notin \cF_X$ (this point does not satisfy part 2 of Definition \ref{d:Xfailurepoint}), we can guarantee that the evaluation vectors of the monomials $X^3,Y^{e_2}$ and $X^5,Y^{e_2'}$ are orthogonal for any $e_2,e_2'$.
\end{exmp}

\begin{lemma}\label{l:latticemovement}
   Every $X$-failure point with $e_1<e_1'$ can be written uniquely as 
    \begin{equation}\label{eq:writingXpoint}
          (e_1,e_1') = (T_1,T_2)+i(\lambda/2,\lambda/2)+j(-\tau/2,\tau/2)  
    \end{equation}
where $i,j\in \Z_{\ge 0}$, and $(T_1,T_2)$ is given by Table \ref{tab:valueL}, with the exception of the case with $\lambda ,\tau$ odd, $\rho=2$ and $\lambda \ge \tau+2$. 
This exception can be safely ignored, see Remark \ref{ignore1}.
\end{lemma}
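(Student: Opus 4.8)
The plan is to analyze the two defining congruences of Definition~\ref{d:Xfailurepoint} and show that the solution set, restricted to $e_1<e_1'$, is exactly the image of the affine lattice map in Equation~\eqref{eq:writingXpoint}. First I would observe that the two conditions $e_1+e_1'\equiv L\pmod\lambda$ and $e_1\equiv e_1'\pmod\tau$ are linear and, since $\gcd(\lambda,\tau)=1$, decouple nicely: by CRT the pair $(e_1 \bmod \lambda\tau, e_1'\bmod\lambda\tau)$ is determined by the residues of $e_1-e_1'$ and $e_1+e_1'$ modulo $\lambda$ and $\tau$ respectively. Concretely, write $u=e_1'-e_1>0$ and $s=e_1+e_1'$; then the conditions become $s\equiv L\pmod\lambda$ and $u\equiv 0\pmod\tau$. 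So $u$ ranges over the positive multiples of $\tau$ and, for each fixed $u$, $s$ ranges over an arithmetic progression modulo $\lambda$ that is bounded below by the requirement $s\ge u$ (equivalently $e_1\ge 0$) with the correct parity so that $e_1=(s-u)/2$ and $e_1'=(s+u)/2$ are integers.

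Next I would translate this back into the lattice statement. The vectors $(\lambda/2,\lambda/2)$ and $(-\tau/2,\tau/2)$ have the effect of incrementing $s$ by $\lambda$ and incrementing $u$ by $\tau$ respectively (when $\lambda,\tau$ are even these are integer vectors; when they are odd one checks that only the combinations preserving integrality of $e_1,e_1'$ occur, which is precisely what forces the case distinctions and the ``$(T_1,T_2)$ from Table~\ref{tab:valueL}'' base point). Thus the set of $X$-failure points with $e_1<e_1'$ is the translate by $(T_1,T_2)$ of the monoid $\Z_{\ge0}(\lambda/2,\lambda/2)+\Z_{\ge0}(-\tau/2,\tau/2)$, where $(T_1,T_2)$ is the minimal failure point — the one with the smallest $u=\tau$ and the smallest admissible $s$ — and uniqueness of the representation $(i,j)$ follows because the two generating vectors are linearly independent over $\Q$. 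The content of Table~\ref{tab:valueL} is exactly the bookkeeping of which residue $L$ and which parity situation yields which minimal base point $(T_1,T_2)$.

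The main obstacle will be the parity/integrality analysis when $\lambda$ and $\tau$ are odd: there $\lambda/2$ and $\tau/2$ are half-integers, so one must verify that the half-integer contributions from the $i$ and $j$ terms combine with the base point $(T_1,T_2)$ to always land on integer coordinates, and that every integer solution is reached. This is where the exceptional case ($\lambda,\tau$ odd, $\rho=2$, $\lambda\ge\tau+2$) fails to admit a single clean base point, because the minimal failure point does not sit at the corner of the monoid in the expected way; I would handle this by checking the finitely many residue cases directly against Table~\ref{tab:valueL} and invoking Remark~\ref{ignore1} for the excluded case. The remaining verification — that distinct $(i,j)$ give distinct points (uniqueness) — is immediate from linear independence of the two step vectors, and that every point in the image is genuinely an $X$-failure point is a direct substitution into the two congruences.
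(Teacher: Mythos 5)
Your change of variables $s=e_1+e_1'$, $u=e_1'-e_1$ is the right way to see the lattice structure, and it is genuinely more self-contained than what the paper does: the paper simply invokes Propositions 5.4 and 4.13 of \cite{cam-jose} for the facts that $i\ge 0$ always and $j\ge 0$ in Cases 2 and 3 of Table \ref{tab:valueL}, adding only a short parity argument for Case 1. However, your proposal contains a concrete error and leaves the essential quantitative step unexecuted. The error: you describe the base point $(T_1,T_2)$ as ``the one with the smallest $u=\tau$,'' but in Case 1 of Table \ref{tab:valueL} one has $T_2-T_1=2\tau$, not $\tau$. There $\lambda$ is even, so $s\equiv L=2\tau-2\pmod{\lambda}$ forces $s$ even, while $\gcd(\lambda,\tau)=1$ forces $\tau$ odd; hence $u=\tau$ violates $s\equiv u\pmod 2$ and the minimal attainable $u$ is $2\tau$. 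This is precisely the parity point the paper makes explicitly, and your base-point description would produce the wrong $(T_1,T_2)$ in that case.

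The deeper gap is the claim $i\ge 0$, i.e.\ that $T_1+T_2$ is a lower bound for $s$ over all failure points. This is not ``bookkeeping'': it is exactly where the excluded case comes from. For instance, in Case 2 one has $S_0:=T_1+T_2=2\lambda+\tau-2$ while $L=\tau-2$, so the candidate $s=\lambda+\tau-2<S_0$ must be ruled out; with $u=\tau$ it fails integrality because $\lambda$ is odd, but with $u=2\tau$, $\lambda,\tau$ both odd and $\lambda\ge\tau+2$ it yields a genuine integer failure point below the claimed base point --- which is the exceptional case the lemma excises (and Remark \ref{ignore1} then disposes of via the third condition involving $\rho$). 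Your proposal asserts that one ``checks the finitely many residue cases directly against Table \ref{tab:valueL}'' but never identifies which candidate values of $s$ below $S_0$ must be excluded or why, so the part of the lemma that actually carries content (and that the paper outsources to \cite{cam-jose}) is not established. To complete your argument you would need, for each of the three cases, to enumerate the residues $s\equiv L\pmod\lambda$ with $s<T_1+T_2$ and show each is incompatible with $u\equiv 0\pmod\tau$, $u>0$, $s\ge u$ and $s\equiv u\pmod 2$, outside the excluded parameter range. The uniqueness and the verification that the two congruences are preserved along the lattice directions are fine as you state them.
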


\begin{proof}
    Using Propositions 5.4 and 4.13 of \cite{cam-jose}, it follows that $i\ge 0$, and in Cases 2 and 3 from Table1 \ref{tab:valueL} that $j \ge 0$. In Case 1 from Table \ref{tab:valueL}, we observe that $T_2-T_1=2\tau$, and it follows from parity there there are no points with $T_2-T_1=\tau$, whence $j \ge 0$ by Proposition 4.13 of \cite{cam-jose}. 
\end{proof}

\begin{remark}\label{ignore1}
    In the case with $\lambda, \tau$ odd, $\rho=2$ and $\lambda \ge2$, it follows from Proposition 5.4 of \cite{cam-jose} that the only exceptional points satisfy part $3$ of Proposition \ref{p:orthogconditions}, and therefore the monomials are orthogonal and can be ignored. 
\end{remark}

    \begin{table}[h!]
        \centering
\begin{tabular}{|| c|c| c ||} 
 \hline
Conditions & Value of $L$ &$(T_1,T_2)$\\  
 \hline\hline
$\lambda$ even 
&$2\tau-2$
 & $\left(\frac{\lambda-2}{2},\frac{\lambda+4\tau-2}{2}\right)$\\ 
 \hline
\specialcell{$\lambda$ odd, and at least one of the following\\
 conditions holds: $\lambda<\tau$,  $\tau$ even, $\rho=2$}
&$\tau-2$
 & $\left(\lambda-1,\lambda+\tau-1\right)$    \\
 \hline
$ \lambda $ odd, $\lambda >\tau, \tau $ odd, $\rho \neq 2$ 
&$2\tau-2$
 & $\left(\frac{\lambda+\tau-2}{2},\frac{\lambda+3\tau-2}{2}\right)$  \\
 \hline
\end{tabular}
        \caption{Values of $L$ and $(T_1,T_2)$}
        \label{tab:valueL}
    \end{table}

We now come to the task of determining a suitable GRS code for use in the $Y$ variable. The structure of this code should complement the structure of our first GRS code, in order to ensure Hermitian self-orthogonality for a wide range of parameters. To motivate our choice of code, we introduce the notion of the \textbf{total footprint}. 

Recall from Remark \ref{r:deltat} that our goal is to determine for which value of $t$ can we ensure that no pair of monomials from $\Delta_t$ lies in $\cD$. For each given pair $\expair \in \cD$, the smallest value of $t$ such that both $\mon$ and $\monp \in \Delta_t$ is $\max\{(e_1+1)(e_2+1), (e_1'+1)(e_2'+1)\}$. This motivates the following definition

\begin{definition}
    Given a pair of monomial exponents $((e_1,e_2),(e_1',e_2')) \in E\times E$, we define the \textbf{total footprint} of the point to be 
    \begin{equation*}
        T((e_1,e_2),(e_1',e_2'))=\max\{(e_1+1)(e_2+1), (e_1'+1)(e_2'+1)\}. 
    \end{equation*}
    Given a set $\cF \supseteq \cD$, a \textbf{minimal $\cF$-point} is a pair of exponents $((e_1,e_2),(e_1',e_2'))\in \cF$ such that $T((e_1,e_2),(e_1',e_2')) \leq T ((a,b),(c,d))$ for any $((a,b),(c,d)) \in \cF$. We will denote the footprint of any minimal $\cF$-point to be $T^*$. \
\end{definition}
 \begin{remark}\label{r:Fproxy}
     We introduce the set $\cF$ as a proxy for $\cD$, since $\cD$ is hard to fully characterize (note that the conditions in Proposition \ref{p:orthogconditions} are sufficient but not necessary). Given two sets $\cF_X\supseteq\cD_X,\cF_Y \supseteq \cD_Y$, we will assume that 
     \[
     \cF=\{((e_1,e_2),(e_1',e_2')) : (e_1,e_1') \in \cF_X \text{ and } (e_2,e_2') \in \cF_Y\}.
     \]
     
 \end{remark}

\begin{remark}\label{r:tlessthanTstar}
    So long as $t\leq T^*$, we can ensure that no pair of monomials from $\Delta_t$ lies in $\cD$, meaning that our GMC code will be Hermitian self-orthogonal. Therefore, our choice of GRS code for the $Y$ variable should maximize $T^*$. 
\end{remark}

In the following sections, we will describe our choice for the second GRS code, along with justification that this choice maximizes $T^*$.

\section{Choosing  2 points in the second variable}\label{s:2points}

We continue the notation of the previous section. With the set $P_X$ and vector $v$ fixed, it remains to choose the set $P_Y$ and the vector $w$.  In this section we focus on the case with $n_Y=|P_Y|=2$, from which it follows that $\cD_Y \subseteq \{0,1\}\times \{0,1\}$. The exact elements of $\cD_Y$ will depend on our choice of $P_Y$ and $w$. 

To motivate our selection, first suppose that $\cF_Y=\cD_Y=\{0,1\}\times \{0,1\}$. In this case, the minimal $\cF$-point must have the form $((e_1,0),(e_1',0))$ with $(e_1,e_1') \in \cF_X$. It then follows from Lemma \ref{l:latticemovement} that the minimal $\cF$-point is $((T_1,0),(T_2,0))$, with corresponding total footprint $T_2+1$. 

In choosing $P_Y$ and $w$, it is therefore our priority to remove $(0,0)$ from the set $\cD_Y$. To that end, we let $y_0,y_1$ be arbitrary distinct elements of $\F_{q^2}$.  We choose the vector $w$ so that $w_0^{q+1}+w_1^{q+1}=0$. It then follows that 
\[
\left(ev_{w,P_Y}\left(Y^{0}\right)\cdot_h ev_{w,P_Y}\left(Y^{0}\right) \right)=w_0^{q+1}(y_0)^{0+0q}+w_1^{q+1}(y_1)^{0+0q}=0.
\]

\begin{definition}\label{d:GMC2points}
    With $P_X,v,P_Y,w$ as defined in Section \ref{s:GMCframework} and in the preceding paragraphs, and with $t \leq T^*,$ we define the GMC code over $\fqs$:
    \[
    C:=GMC_t(Q,P_X\times P_Y)
    \]
    where $Q(i,j)=v_i\cdot w_j$.
\end{definition}
Under these conditions, we can take $\cF_Y=\{(1,0),(0,1),(1,1)\} \supseteq \cD_Y$.   At the end of the section we will show that this selection of $P_Y$ and $w$ maximizes $T^*$. Let us now calculate a minimal $\cF$-point under this assumption. 

\begin{lemma}\label{l:fpWLOG}
    There is a minimal $\cF$-point of the form $ ((e_1,1),(e_1',0))$ with $(e_1.e_1') \in \cF_X$ and $e_1<e_1'$. 
\end{lemma}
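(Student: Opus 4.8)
The plan is to pin down the shape of a minimal $\cF$-point by combining the structure of $\cF_Y$ (as fixed just above) with the lattice description of $\cF_X$ from Lemma~\ref{l:latticemovement}. Since $\cF_Y = \{(1,0),(0,1),(1,1)\}$, a point $((e_1,e_2),(e_1',e_2')) \in \cF$ must have $(e_2,e_2') \in \{(1,0),(0,1),(1,1)\}$ and $(e_1,e_1') \in \cF_X$. So there are essentially three cases for the $Y$-component, and I would treat each and compare total footprints.

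First I would handle the symmetry in the $X$-variable: by Remark~\ref{r:Xsymmetry}, whenever $(e_1,e_1') \in \cF_X$ we also have $(e_1',e_1) \in \cF_X$, so we may freely swap the two monomials and hence assume $e_1 < e_1'$ (the case $e_1 = e_1'$ is excluded from $\cF_X$ by Definition~\ref{d:Xfailurepoint}). Next, in the $Y$-component: the case $(e_2,e_2') = (1,1)$ is dominated, because replacing it with $(1,0)$ keeps the point in $\cF$ and can only decrease (weakly) the total footprint $\max\{(e_1+1)(e_2+1),(e_1'+1)(e_2'+1)\}$ — the first factor $(e_1+1)(e_2+1)$ strictly decreases when $e_2$ goes from $1$ to $0$, and the second term $(e_1'+1)(e_2'+1)$ with $e_2'=0$ equals $e_1'+1$, which is no larger than the $e_2'=1$ value $2(e_1'+1)$. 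So a minimal point is attained with $(e_2,e_2') \in \{(1,0),(0,1)\}$. Finally, the two remaining options $(1,0)$ and $(0,1)$ give points $((e_1,1),(e_1',0))$ and $((e_1,0),(e_1',1))$ respectively; since $e_1 < e_1'$, the first has total footprint $\max\{2(e_1+1), e_1'+1\}$ while the second has $\max\{e_1+1, 2(e_1'+1)\} = 2(e_1'+1)$, which is at least as large. Hence a minimal $\cF$-point can be taken of the form $((e_1,1),(e_1',0))$ with $(e_1,e_1') \in \cF_X$ and $e_1 < e_1'$, as claimed.

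The main obstacle I anticipate is being careful that the swap $e_1 \leftrightarrow e_1'$ in the $X$-variable interacts correctly with the fixed $Y$-component: swapping the whole pair $((e_1,e_2),(e_1',e_2'))$ swaps both coordinates simultaneously, so after forcing $e_1 < e_1'$ one must re-examine which of $(1,0)$ or $(0,1)$ the $Y$-part sits in rather than assuming it stays put. The argument above resolves this by simply comparing both $Y$-options directly under the standing assumption $e_1 < e_1'$, which sidesteps the bookkeeping. A secondary subtlety is that "minimal" is defined via $\le$ over all of $\cF$, so each dominance step must produce a point still in $\cF$ with total footprint no greater than the original; one should note explicitly that if $((e_1,e_2),(e_1',e_2'))$ is minimal then each reduced point is also minimal (its footprint cannot be strictly smaller, else the original was not minimal), which is what lets us conclude existence of a minimal point of the asserted form.
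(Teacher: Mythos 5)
Your proof is correct and follows essentially the same route as the paper's: reduce to $e_1<e_1'$ via the $X$-symmetry, eliminate the $(1,1)$ case for the $Y$-exponents by dominance, and compare $T((e_1,1),(e_1',0))$ with $T((e_1,0),(e_1',1))$ directly. Your version is a more carefully elaborated account of the same argument, including the minor point about minimality being preserved under each dominance step.
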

\begin{proof}
    Using Remark \ref{r:Fproxy}, all $\cF$-points are of the form $((e_1,b),(e_1',d))$ with $b,d\in \{0,1\}$ and at least one of $b,d$ non-zero. Swapping $b$ and $d$ also results in a valid $\cF$-point, and the total footprint can only be reduced by setting one of $b,d$ equal to zero. Using the convention in Remark \ref{r:Xsymmetry} that $e_1<e_1'$ we observe that $T( (e_1,1),(e_1',0)) <T((e_1,0),(e_1',1))$ and the result follows. 
    \end{proof}
    Using this result and Equation \ref{eq:writingXpoint}, we can restrict our search of a minimal $\cF$-point to the set of points 
    \begin{equation}\label{eq:enumFpoints2points}
        ((T_1+1+i\lambda/2-j\tau/2,1), (T_2+1+i\lambda/2+j\tau/2,0))
    \end{equation}
 
    where $i,j \ge 0 $. There are restrictions on the values of $i,j$, coming from the fact that exponent pairs must consist of non-negative integers. We will deal with this during the analysis. Our task is to determine which (permitted) values of $i,j$ result in a minimal $\cF$-point. 
    \begin{definition}
        We define the functions
        \begin{equation*}
            \begin{split}
 F_1(i,j)&=2(T_1+1+i\lambda/2-j\tau/2), \\
 F_2(i,j)&=T_2+1+i\lambda/2+j\tau/2,\\
F(i,j)&=\max\{F_1(i,j),F_2(i,j)\}.\\
            \end{split}
        \end{equation*}

    \end{definition}
\begin{remark}
    The function $F(i,j)$ enumerates all possible total footprints from our candidates for the minimal $\cF$-point in Equation \ref{eq:enumFpoints2points}. Therefore the minimal value of $F(i,j)$ is equal to $T^*$. By determining which permitted values of $i,j$ that achieve this minimum, we will also be able to compute a minimal $\cF$-point and then also $T^*$. 
\end{remark}
Note that $F(0,0)=\max \{2(T_1+1), T_2+1\}$. For convenience, we set the following notation: 
\begin{definition}\label{d:2pointsCD}
    We define the constants
    \[
    C_0=T_1+1,D_0=T_2+1,
    \]
    where $T_1,T_2$ are as given in Table \ref{tab:valueL}. 
\end{definition}
With this notation, observe that $F(0,0)=\max\{2C_0,D_0\}$, and it then follows that 
\[
F_1(i.j)=2(C_0+i\lambda/2-j\tau/2), ~~~F_2(i,j)=D_0+i\lambda/2+j\tau/2.
\]
\begin{lemma}\label{l:2pvaluei}
If $i \ge1$, then $F(i,j)>F(0,0) ~\forall j ~ \ge0$. 
\end{lemma}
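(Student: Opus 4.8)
The plan is to show that increasing $i$ from $0$ forces at least one of the two component functions $F_1(i,j)$ and $F_2(i,j)$ to strictly exceed $F(0,0)=\max\{2C_0,D_0\}$, regardless of the (non-negative) value of $j$. The natural split is according to whether $F(0,0)$ equals $2C_0$ or $D_0$, and within the latter case, according to the sign of the net horizontal drift $i\lambda/2-j\tau/2$.

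First I would observe that $F_2(i,j)=D_0+i\lambda/2+j\tau/2$ is monotonically nondecreasing in both $i$ and $j$, and in fact $F_2(i,j)\ge D_0+i\lambda/2>D_0$ whenever $i\ge 1$ (since $\lambda>1$). Hence if $F(0,0)=D_0$, then $F(i,j)\ge F_2(i,j)>D_0=F(0,0)$ and we are done immediately in that subcase. So I can assume $F(0,0)=2C_0\ge D_0$. Now I would examine $F_1(i,j)=2(C_0+i\lambda/2-j\tau/2)$. If $j\tau/2\le i\lambda/2$, i.e. the point drifts weakly to the right, then $F_1(i,j)\ge 2C_0$; moreover with $i\ge 1$ the inequality is strict unless $j\tau/2=i\lambda/2$ exactly, in which case $F_1(i,j)=2C_0$ but then $F_2(i,j)=D_0+i\lambda/2+j\tau/2=D_0+i\lambda\ge D_0+\lambda>D_0$, and since $F_2(i,j)=D_0+i\lambda$ versus $2C_0$ — here I would need to check $F_2>2C_0$, which follows because $F_2(i,j)=D_0+2\cdot(i\lambda/2)=D_0+F_1(i,j)-2C_0+\ldots$ — more cleanly, when $F_1(i,j)=2C_0$ we have $i\lambda/2=j\tau/2>0$, so $F_2(i,j)=D_0+i\lambda>D_0$, and combining with $D_0\le 2C_0$ does not yet give the bound, so instead I note $F_2(i,j)=D_0+i\lambda/2+j\tau/2\ge D_0+i\lambda/2+i\lambda/2 = D_0+i\lambda$; then $F(i,j)\ge\max\{2C_0,D_0+i\lambda\}>2C_0$ precisely when $D_0+i\lambda>2C_0$, which needs the separate fact $2C_0-D_0<\lambda$. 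This last inequality should be verifiable case-by-case from Table~\ref{tab:valueL}, since $2C_0-D_0 = 2(T_1+1)-(T_2+1) = 2T_1-T_2+1$, and in each row of the table $T_2-T_1$ is a small multiple of $\tau$ while $T_1$ itself is $O(\lambda)$ — concretely one computes $2T_1-T_2+1$ directly and compares with $\lambda$.

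The remaining subcase is $F(0,0)=2C_0$ with $j\tau/2>i\lambda/2$, so the point drifts left and $F_1(i,j)<2C_0$. Here $F_1$ is no help and everything must come from $F_2$. Since $j\tau/2>i\lambda/2\ge\lambda/2$ (using $i\ge1$), we get $F_2(i,j)=D_0+i\lambda/2+j\tau/2 > D_0+ i\lambda/2 + i\lambda/2 = D_0 + i\lambda \ge D_0+\lambda$; again I reduce to showing $D_0+\lambda > 2C_0$, i.e. the same arithmetic inequality $2C_0-D_0<\lambda$ as before. So the whole lemma reduces to the single elementary claim that $2(T_1+1)-(T_2+1) < \lambda$ in every row of Table~\ref{tab:valueL}, which I would verify by substitution: e.g. for $\lambda$ even, $2T_1-T_2+1 = (\lambda-2)-\tfrac{\lambda+4\tau-2}{2}+1 = \tfrac{\lambda}{2}-2\tau+1 < \lambda$; and similarly for the other two rows.

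The main obstacle is bookkeeping the boundary case $F_1(i,j)=2C_0$ exactly, together with the constraint that exponent pairs be non-negative (which limits which $j$ are actually permitted for a given $i$); but non-negativity only shrinks the candidate set, so any bound proved for all $j\ge0$ a fortiori holds for the permitted ones, and the equality boundary is handled by the $F_2$ estimate above. The genuinely content-bearing step is the arithmetic inequality $2C_0-D_0<\lambda$, which I expect the authors prove (or is immediate) from the explicit values in Table~\ref{tab:valueL}.
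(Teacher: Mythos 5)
Your proof is correct, and it follows a genuinely different route from the paper's. The paper bounds crudely from below, $F(i,j)\ge F_2(i,j)\ge D_0+\lambda/2$, which settles Cases 1 and 3 of Table~\ref{tab:valueL} at once but can fail in Case 2 (there $D_0+\lambda/2=\tfrac{3\lambda}{2}+\tau$ need not exceed $2C_0=2\lambda$ when $\tau\le\lambda/2$); it then rescues Case 2 by invoking the integrality of the lattice representation from Lemma~\ref{l:latticemovement} to force either $i\ge 2$ or ($i=1$ and $j\ge 1$), followed by a finer comparison of $F_1(1,j)$ and $F_2(1,j)$. You instead split on the sign of the drift $i\lambda/2-j\tau/2$: a nonnegative drift gives $F_1\ge 2C_0$ with the boundary case $F_1=2C_0$ absorbed by $F_2=D_0+i\lambda$, and a negative drift gives $F_2>D_0+i\lambda$, so the whole lemma reduces to the single inequality $2C_0-D_0<\lambda$, which indeed holds in every row of Table~\ref{tab:valueL} (the three values are $\tfrac{\lambda}{2}-2\tau$, $\lambda-\tau$, and $\tfrac{\lambda-\tau}{2}$; your Case 1 computation carries a harmless $+1$ slip). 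Your argument is uniform across the three cases, requires no parity bookkeeping, and establishes the statement literally for all real $j\ge 0$ rather than only for the admissible lattice points; the paper's argument is shorter where the crude bound suffices but must lean on Lemma~\ref{l:latticemovement} where it does not. The middle of your write-up circles back on itself before landing on the clean reduction, and would benefit from being reorganized around the three subcases and the inequality $2C_0-D_0<\lambda$ stated up front, but the mathematics is sound.
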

\begin{proof}
First note that for $i\ge1$, we have that
\[
F(i,j) \ge F_2(i,j)=D_0+i\lambda/2+j\tau/2\ge D_0+\lambda/2. 
\]
For both Case 1 and Case 3 from Table \ref{tab:valueL}, is easy to check that $D_0+\lambda/2 > \max\{2C_0,D_0\}=F(0,0)$ and the result follows.

Now suppose that we are in Case 2 from Table \ref{tab:valueL}, so that $C_0=\lambda$ and $D_0=\lambda+\tau$. If $\tau$ is even, then by parity, we must have $i \geq 2$. Then 
\[
F(i,j) \ge D_0+\lambda>\max\{2C_0,D_0\}=F(0,0). 
\]
Finally, suppose that $\tau$ is odd. If $i\ge2$ then for the same reason as above we have that $F(i,j)>F(0,0)$. If $i=1$, then by parity we must have $j \geq 1$. Now, we have that 
\[
F_1(1,j)=3\lambda-j\tau, 
F_2(1,j)=3\lambda/2+\tau+j\tau/2.
\]
Since $F_2(1,j)$ is increasing, it is sufficient to show that  $\max\{(F_1(1,j),F_2(1,j)\}>F_1(0,0) ~\forall j.$ We have that $F_1(1,j) \leq F_1(0,0) \iff j\ge \lambda/\tau$, which means that 
\[
F_2(1,j)\geq 2\lambda+\tau >2\lambda = F_1(0,0),
\]
 and the result follows. 
\end{proof}
\begin{remark}
    In other words, the minimum of the function $F(i,j)$ occurs with $i=0$. 

\end{remark}

Next, we will discuss how to find the value of $j$ that minimizes the function $F(0,j)$. This amounts to the comparison of the following constants: 
\begin{definition}\label{d:2pointks}
    We define the following constants when $\tau$ is even:
\begin{equation*}
\begin{split}
        k_1=&\max\left\{\left\lfloor \frac{2C_0-2}{\tau}\right\rfloor,0 \right\},\\
        k_2=&\max\left\{\left\lceil\frac{4C_0-2D_0}{3\tau} \right\rceil,0\right\},\\
        k_3= &\max\left\{\left\lfloor\frac{4C_0-2D_0+2\tau}{3\tau}\right\rfloor,0\right\},
\end{split}
\end{equation*}
and when $\tau$ is odd: 
\begin{equation*}
\begin{split}
        k_1=&\max\left\{\left\lfloor \frac{C_0-1}{\tau}\right\rfloor,0 \right\},\\
        k_2=&\max\left\{\left\lceil\frac{2C_0-D_0}{3\tau} \right\rceil,0\right\},\\
        k_3= &\max\left\{\left\lfloor\frac{2C_0-D_0+2\tau}{3\tau}\right\rfloor,0\right\}.
\end{split}
\end{equation*}
In both cases, we define 
 \begin{equation*}
    ~~~C_j=C_0-j\tau/2, ~~~D_j=D_0+j\tau/2 , 
 \end{equation*}
 where $C_0,D_0$ are as defined in Definition \ref{d:2pointsCD}. 
\end{definition}

\begin{remark}\label{r:CDtoExponents}
    By Lemma \ref{l:2pvaluei}, the pairs monomials of interest are of the form $X^{C_j-1}Y^1, X^{D_j-1}Y^0$, with corresponding total footprint $\max\{2C_j,D_j\}=F(0,j)$. Our goal is to find the value of $j$ that minimizes this total footprint, which will also tell us a minimal $\cF$-point. 
\end{remark}
 Let us now explain what each of these constants means. Note that $F(0,j)=\max\{2C_j,D_j$\}.
 
 The first restriction on the value of $j$ is that of parity, since our monomial exponents must be whole numbers. Since we are only interested in the case with $i=0$, the only actual restriction is that when $\tau$ is odd, $j$ must be even. We will only provide an analysis of the case with $\tau$ even; the other case is solved similarly.

The next requirement is that the exponents in our monomials be non-negative. In terms of $C_j$ and $D_j$, this is equivalent to stating that $C_j \ge 1$; in other words, we require $C_0-j\tau/2\ge 1$, or $ j \leq \frac{2C_0-2}{\tau} $. Therefore, we are searching for $j$ in the range $0 \leq j \leq k_1$. Another consideration is that a monomial $(e_1,e_2) \in E$ must satisfy $e_1 < |P_X|=\lambda\tau\sigma$. This means that we must enforce $D_j \leq\lambda\tau\sigma$, but this is already true if $j\leq k_1$.
 
In order to compute the minimum of the function $F(0,j)$, observe that the sequence $2C_j$ is decreasing, and the sequence $D_j$ is increasing. Since we are interested in the maximum of the two sequences, the value of $j$ that minimizes $F(0,j)$ will be near the crossover point, which is described precisely by $k_2$. 

\begin{lemma}\label{l:2pointk2}
    If $\tau$ is even, then function $F(0,j)$ satisfies 
        \[
    F(0,j)=
    \begin{cases}
        2(C_0-j\tau/2) & \text{if } j <k_2,\\
        D_0+j\tau/2 & \text{if } j\ge k_2. 
    \end{cases}
    \]
\end{lemma}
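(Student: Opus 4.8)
The plan is to prove the claimed piecewise description of $F(0,j)=\max\{2C_j,D_j\}$ directly by comparing the two linear (in $j$) sequences $2C_j=2C_0-j\tau$ and $D_j=D_0+j\tau/2$, and showing that the switch from the first being the larger to the second being the larger happens exactly at the threshold $k_2$. Concretely, I would first observe that $2C_j \ge D_j \iff 2C_0-j\tau \ge D_0+j\tau/2 \iff 4C_0-2D_0 \ge 3j\tau \iff j \le \frac{4C_0-2D_0}{3\tau}$. Since $j$ is a non-negative integer, this inequality holds precisely when $j \le \left\lfloor \frac{4C_0-2D_0}{3\tau}\right\rfloor$, and (after handling the non-negativity truncation built into the definition of $k_2$) this is equivalent to $j < k_2$, because $k_2 = \max\left\{\left\lceil \frac{4C_0-2D_0}{3\tau}\right\rceil, 0\right\}$ and $\lceil x \rceil - 1 \le \lfloor x\rfloor$ with equality unless $x\in\Z$.

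The only subtlety is the borderline case where $\frac{4C_0-2D_0}{3\tau}$ is an integer, say equal to $m$: then at $j=m$ one has $2C_j = D_j$, so $F(0,j)$ equals both expressions and the formula is consistent regardless of whether we assign $j=m$ to the first or second branch. Since in that case $\lceil m \rceil = m$ so $k_2 = m$ (assuming $m\ge 0$), the statement puts $j=m$ in the branch $j\ge k_2$, giving $F(0,j)=D_j$, which is correct as $D_j=2C_j$ there. For $j<k_2$ we need $2C_j > D_j$ strictly, i.e. $j < \frac{4C_0-2D_0}{3\tau}$ strictly, which indeed follows from $j \le k_2 - 1 = \lceil \frac{4C_0-2D_0}{3\tau}\rceil - 1 < \frac{4C_0-2D_0}{3\tau}$ whenever $k_2>0$; the case $k_2=0$ is vacuous for the first branch. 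For $j\ge k_2$ we need $D_j \ge 2C_j$, i.e. $j \ge \frac{4C_0-2D_0}{3\tau}$, which follows from $j \ge k_2 = \lceil \frac{4C_0-2D_0}{3\tau}\rceil \ge \frac{4C_0-2D_0}{3\tau}$ (and trivially when $k_2=0$ since then $4C_0-2D_0\le 0$ so $2C_j \le 2C_0 \le D_0 \le D_j$).

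I would therefore structure the proof as: (i) compute the exact inequality $2C_j\ge D_j \iff j\le \frac{4C_0-2D_0}{3\tau}$; (ii) translate this into a statement about the integer $j$ using the ceiling/floor relation and the truncation at $0$ in the definition of $k_2$, splitting into the cases $k_2>0$ and $k_2=0$; (iii) conclude that $F(0,j)=\max\{2C_j,D_j\}$ equals $2C_j$ for $j<k_2$ and $D_j$ for $j\ge k_2$, noting the boundary value is unambiguous. The main obstacle — really the only place care is needed — is the bookkeeping at the boundary: making sure the ceiling in $k_2$ lines up correctly with strict versus non-strict inequalities and that the $\max\{\,\cdot\,,0\}$ truncation does not create an off-by-one error when $4C_0-2D_0$ is negative or when it is a nonnegative multiple of $3\tau$. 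Everything else is an elementary manipulation of the two affine functions $2C_j$ and $D_j$.
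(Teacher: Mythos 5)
Your proposal is correct and follows essentially the same route as the paper: the paper's proof is precisely the observation that $2C_j \le D_j \iff j \ge \frac{4C_0-2D_0}{3\tau}$, which is your step (i). Your additional bookkeeping with the ceiling, the $\max\{\cdot,0\}$ truncation, and the borderline case where the two branches coincide is a correct elaboration of details the paper leaves implicit.
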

\begin{proof}
    This follows from the definition of $C_j$ and $D_j$, and the fact that $2C_j \leq D_j$ $\iff j \ge \frac{4C_0-2D_0}{3\tau}$.
\end{proof}

\begin{remark}
    By Lemma \ref{l:2pointk2} it follows that the value that minimizes $F(0,j)$ is either $j=k_2$ or $j=k_2-1$, so long as $k_2\leq k_1$. So, it remains to compare the values of $D_{k_2}$ and $2C_{k_2-1}$. 
\end{remark}
\begin{lemma}\label{l:2pointk3}
    For $0 \leq j \leq k_3$, we have that $2C_{j-1}\ge D_j$. For $j > k_3$, we have that $D_j>2C_{j-1}$. 
\end{lemma}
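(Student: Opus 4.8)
The plan is to reduce the two assertions to a single linear inequality in $j$ and read off the crossover value. From Definition~\ref{d:2pointks}, in the case $\tau$ even we have $C_{j-1}=C_0-(j-1)\tau/2$ and $D_j=D_0+j\tau/2$, so
\[
2C_{j-1}-D_j \;=\; 2C_0-(j-1)\tau-\Bigl(D_0+\tfrac{j\tau}{2}\Bigr)\;=\;\bigl(2C_0-D_0+\tau\bigr)-\tfrac{3\tau}{2}\,j .
\]
This is a strictly decreasing affine function of $j$, so $2C_{j-1}\ge D_j$ holds for $j$ below a single threshold and fails strictly (i.e.\ $D_j>2C_{j-1}$) above it. Solving $2C_{j-1}-D_j\ge 0$ over the reals gives $j\le \frac{4C_0-2D_0+2\tau}{3\tau}$, and since $j$ is an integer this is equivalent to $j\le \bigl\lfloor \frac{4C_0-2D_0+2\tau}{3\tau}\bigr\rfloor$. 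Comparing with Definition~\ref{d:2pointks}, this floor is exactly $k_3$ in the regime where the lemma is used (namely $j=k_2\ge 1$, as in the Remark preceding the statement, where the floor is non-negative). The first assertion is the forward direction of this equivalence and the second is its contrapositive, with strict inequality because the negation of ``$\ge$'' is ``$<$''.

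For $\tau$ odd the argument is the same once one accounts for the fact, noted before Lemma~\ref{l:2pointk2}, that parity forces $j$ to be even in this regime, so the effective step size in the relevant pairs $X^{C_j-1}Y^{1},\,X^{D_j-1}Y^{0}$ is $\tau$ rather than $\tau/2$. Substituting $2C_{j-1}=2C_0-2(j-1)\tau$ and $D_j=D_0+j\tau$ into $2C_{j-1}-D_j\ge 0$ yields the threshold $j\le \frac{2C_0-D_0+2\tau}{3\tau}$, which matches the $\tau$-odd definition of $k_3$ in Definition~\ref{d:2pointks}; the rest of the deduction is identical.

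There is no genuine obstacle here: the statement is essentially a bookkeeping identity about the affine sequences $2C_j$ (decreasing) and $D_j$ (increasing), already flagged before Lemma~\ref{l:2pointk2}. The only point needing a line of care is the truncation $\max\{\cdot,0\}$ in the definition of $k_3$: the equivalence ``$2C_{j-1}\ge D_j \iff j\le k_3$'' is literal only when $\bigl\lfloor \frac{4C_0-2D_0+2\tau}{3\tau}\bigr\rfloor\ge 0$, equivalently $2C_0+\tau\ge D_0$; this holds in the cases of Table~\ref{tab:valueL} in which the lemma is invoked (and when it fails one has $k_2=0$, so the comparison of $D_{k_2}$ with $2C_{k_2-1}$ is never needed). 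I would note this in one sentence and otherwise keep the proof to the short computation above.
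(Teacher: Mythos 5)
Your proof is correct and is essentially the paper's own argument: the paper likewise proves the lemma by observing that $2C_{j-1}\ge D_j \iff j\le \frac{4C_0-2D_0+2\tau}{3\tau}$ and reading $k_3$ off as the corresponding floor. Your additional remarks on the $\max\{\cdot,0\}$ truncation and on the reindexing needed when $\tau$ is odd (where the relevant indices are $2j$, matching the $\tau$-odd formula for $k_3$) are points the paper's one-line proof leaves implicit, and are handled correctly.
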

\begin{proof}
    Using the definition of $C_{j-1}$ and $D_j$, we can see that $2C_{j-1}\ge D_j\iff j \leq \frac{4C_0-2D_0+2\tau}{3\tau} $.
\end{proof}
\begin{remark}
    We note that either $k_3=k_2$ or $k_3=k_2-1$
\end{remark}

We summarize these results in the following proposition. 
\begin{proposition}\label{p:Tstar2points}
If $\tau$ is even then
\[
T^*=
\begin{cases}
    F(0,k_1)=2C_{k_1}&  \text{if } k_1<k_2,\\
    F(0,k_2)=D_{k_2}& \text{if } k_3=k_2\leq k_1,\\
    F(0,k_2-1)=2C_{k_2-1} & \text{if }k_3<k_2 \leq k_1.
\end{cases}
\]
If $\tau$ is odd then 
\[
T^*=
\begin{cases}
    F(0,2k_1)=2C_{2k_1}&  \text{if } k_1<k_2,\\
    F(0,2k_2)=D_{2k_2}& \text{if } k_3=k_2\leq k_1,\\
    F(0,2k_2-2)=2C_{2k_2-2} & \text{if }k_3<k_2 \leq k_1,
\end{cases}
\]
where $k_1,k_2,k_3$ and $C_j,D_j$ are as given in Definition \ref{d:2pointks}. 
\end{proposition}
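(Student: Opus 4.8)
The plan is to assemble Proposition~\ref{p:Tstar2points} directly from the three preceding lemmas (Lemmas~\ref{l:2pvaluei}, \ref{l:2pointk2}, \ref{l:2pointk3}), treating the $\tau$ even case in full and then indicating how the $\tau$ odd case reduces to it. First I would record the setup from Remark~\ref{r:CDtoExponents}: thanks to Lemma~\ref{l:2pvaluei} the minimum of $F(i,j)$ over permitted $(i,j)$ is attained with $i=0$, so $T^*=\min_{0\le j\le k_1}F(0,j)$, where the upper bound $j\le k_1$ comes from the non-negativity constraint $C_j\ge 1$ (and the constraint $D_j\le\lambda\tau\sigma$ is automatically satisfied, as noted in the text). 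So the whole problem is to minimize the one-parameter function $F(0,j)=\max\{2C_j,D_j\}$ over the integer interval $[0,k_1]$.

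Next I would split on the position of the crossover index $k_2$ relative to $k_1$. By Lemma~\ref{l:2pointk2}, $F(0,j)=2C_j$ (decreasing in $j$) for $j<k_2$ and $F(0,j)=D_j$ (increasing in $j$) for $j\ge k_2$. Case (i): if $k_1<k_2$, then on the whole admissible range $0\le j\le k_1$ we are in the decreasing branch, so the minimum is at $j=k_1$, giving $T^*=F(0,k_1)=2C_{k_1}$. Case (ii): if $k_2\le k_1$, then the admissible range straddles the crossover, so the minimum is attained at $j=k_2$ or $j=k_2-1$ (the last point of the decreasing branch and the first point of the increasing branch), and it remains to compare $2C_{k_2-1}$ with $D_{k_2}$. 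This is exactly what Lemma~\ref{l:2pointk3} settles: since $k_3\in\{k_2-1,k_2\}$ (as recorded in the remark after Lemma~\ref{l:2pointk3}), we have $2C_{k_2-1}\ge D_{k_2}$ precisely when $k_3=k_2$, so the minimum is $D_{k_2}=F(0,k_2)$; and when $k_3=k_2-1<k_2$ we have $D_{k_2}>2C_{k_2-1}$, so the minimum is $2C_{k_2-1}=F(0,k_2-1)$. I should also check the harmless edge case $k_2=0$: then $k_2\le k_1$ always, $k_3\in\{0\}$ since $k_3\ge 0$ and $k_3\le k_2$, so $k_3=k_2$, and the formula gives $T^*=D_0=F(0,0)$, which is consistent with $F(0,j)=D_j$ for all $j\ge 0$ in that regime. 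Collecting the three outcomes gives exactly the displayed piecewise formula for $\tau$ even.

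For $\tau$ odd, the only change is the parity constraint from Remark~\ref{r:CDtoExponents} / the paragraph preceding Lemma~\ref{l:2pointk2}: $j$ must be even, so we are minimizing $F(0,2m)$ over $0\le m\le k_1$ (where in the odd case $C_{2m}=C_0-m\tau$, $D_{2m}=D_0+m\tau$, matching the odd-case definitions of $k_1,k_2,k_3$ in Definition~\ref{d:2pointks} with $m$ in the role of $j$). Re-running the identical three-case analysis in the variable $m$ and then substituting $j=2m$, $j=2k_2$, $j=2k_2-2$, $j=2k_1$ in the conclusions yields the second displayed formula. The main obstacle is not any single hard computation but bookkeeping: being careful that $k_2\le k_1$ is the right hypothesis for Case (ii) to be nonvacuous, that the relation $k_3\in\{k_2-1,k_2\}$ (which follows from comparing the defining inequalities $2C_j\le D_j \iff j\ge \frac{4C_0-2D_0}{3\tau}$ and $2C_{j-1}\ge D_j \iff j\le \frac{4C_0-2D_0+2\tau}{3\tau}$, the two thresholds differing by $\tfrac{2}{3}$) makes the comparison in Case (ii) dichotomous, and that the $\max\{\cdot,0\}$ truncations in the definitions do not spoil any of the monotonicity arguments at the left endpoint $j=0$.
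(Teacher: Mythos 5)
Your proposal is correct and follows essentially the same route as the paper's own proof: reduce to $i=0$ via Lemma~\ref{l:2pvaluei}, split on $k_1<k_2$ versus $k_2\le k_1$ using the monotone-branch description in Lemma~\ref{l:2pointk2}, and settle the remaining comparison of $2C_{k_2-1}$ with $D_{k_2}$ via Lemma~\ref{l:2pointk3}, with the odd-$\tau$ case handled by the parity substitution $j=2m$. Your extra attention to the edge case $k_2=0$ and to the admissible range $0\le j\le k_1$ is a welcome refinement of the paper's terser argument, but the underlying proof is the same.
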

\begin{proof}
Suppose that $\tau$ is even. 
    If $k_2>k_1$, then it follows from Lemma \ref{l:2pointk2} that $F(0,j)$ is strictly decreasing for all $0\leq j \leq k_1$, thus the minimum occurs at $F(0,k_1)=2C_{k_1}$. 

    If $k_2 \leq k_1$, then by Lemma \ref{l:2pointk2} the minimum is either $2C_{k_2-1} $ or $D_{k_2}$. If $k_3=k_2$, then in particular $k_2\leq k_3$, which means that $2C_{k_2-1}\ge D_{k_2}$ by Lemma \ref{l:2pointk3}. Thus the minimum is given by $D_{k_2}$. If $k_3=k_2-1$, then the minimum will be at $F(0,k_2-1)=2C_{k_2-1}$. 

    The proof is similar when $\tau$ is odd. 
\end{proof}

\begin{remark}
    Given the parameters $\lambda,\tau,\rho,\sigma$ of the code construction, is it now straightforward to compute $T^*$, and also therefore the largest minimum distance so that the code is Hermitian self-orthogonal. We give some examples of resulting code families in Section \ref{s:families}. 
\end{remark}

\begin{theorem}\label{t:codes2points}
    Let $q\ge 4$ be a prime power. Let $\lambda>1$ be a divisor of $q-1$ and let $\tau,\rho>1$ be divisors of $q+1$. We assume that $\gcd(\lambda,\tau)=1$ and that $\frac{\rho}{\gcd(\lambda\tau,\rho)}\ge2.$ Let $\sigma$ be any integer with $2 \leq \sigma \leq \frac{\rho}{\gcd(\lambda\tau,\rho)}$. Let $n=2\lambda\tau\sigma$, and let $T^*$ be given by Proposition \ref{p:Tstar2points}. 
Then for any $d$ with  $2\le d \le T^*$ there exists a 
$$\left[\left[n,n-2\left(d-1+\left\lfloor\frac{d-1}{2}\right\rfloor\right),d\right]\right]_q$$ 
quantum stabilizer  code. 
\end{theorem}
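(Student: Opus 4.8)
The plan is to assemble Theorem~\ref{t:codes2points} from the machinery developed in Sections~\ref{s:GMCframework} and~\ref{s:2points}, treating it essentially as a bookkeeping result that packages the ``maximal $t$'' analysis into the standard quantum-code parameters. The starting point is the code $C = GMC_t(Q, P_X \times P_Y)$ from Definition~\ref{d:GMC2points}, built from the fixed $X$-variable GRS code $C_X$ and the two-point $Y$-variable GRS code with $w_0^{q+1}+w_1^{q+1}=0$, so that $\cF_Y = \{(1,0),(0,1),(1,1)\} \supseteq \cD_Y$. I would first recall that $|P_X| = \lambda\tau\sigma$ and $|P_Y| = 2$, so the length is $n = 2\lambda\tau\sigma$, matching the statement.

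\textbf{Step 1: Hermitian self-orthogonality.} Fix $d$ with $2 \le d \le T^*$ and set $t = d$. By Remark~\ref{r:tlessthanTstar}, since $t \le T^*$, no pair of monomials from $\Delta_t$ lies in $\cD$ — here I invoke Proposition~\ref{p:decomp} (the decomposition $\cD \iff \cD_X$ and $\cD_Y$), the containments $\cD_X \subseteq \cF_X$, $\cD_Y \subseteq \cF_Y$, and the fact (Proposition~\ref{p:Tstar2points}) that $T^*$ is exactly the total footprint of a minimal $\cF$-point, so every point of $\cF$ (hence of $\cD$) has total footprint $\ge T^* \ge t$, placing at least one of its two monomials outside $\Delta_t$. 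By the discussion at the start of Section~\ref{s:GMCframework}, this is equivalent to $C \subseteq C^{\perp_h}$.

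\textbf{Step 2: parameters of $C$.} The dimension of $C$ is $k = |\Delta_t|$, so I need $|\Delta_t|$ when $t=d$ and $n_Y=2$. Since $\Delta_t = \{X^aY^b : (a+1)(b+1) < t\}$ with $b \in \{0,1\}$ (only two $Y$-points), the monomials with $b=0$ are $X^a$ with $a+1 < t$, i.e.\ $a \in \{0,\dots,t-2\}$, contributing $t-1$; the monomials with $b=1$ are $X^aY$ with $2(a+1) < t$, i.e.\ $a+1 \le \lceil t/2\rceil - 1$... more carefully $a+1 \le \lfloor (t-1)/2 \rfloor$, contributing $\lfloor (t-1)/2 \rfloor = \lfloor (d-1)/2 \rfloor$. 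Hence $|\Delta_d| = (d-1) + \lfloor (d-1)/2 \rfloor$, which is exactly the $k$ appearing in the bracket $n - 2k$ of the claimed code.

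\textbf{Step 3: invoke the Hermitian construction and pin down the distance.} Apply Theorem~\ref{t:HermCon} (equivalently Corollary~\ref{c:GMCCParams}) to get an $[[n, n-2|\Delta_t|, \ge d^{\perp_h}]]_q$ stabilizer code, where $d^{\perp_h}$ is the minimum distance of $C^{\perp_h} = (GMC_t(Q,P))^{\perp_h}$. By the Corollary following Theorem~\ref{Te:EqualityIf}, the minimum distance of the dual of $GMC_t(Q,P)$ is exactly $t = d$ (the dual is a hyperbolic code, for which the footprint bound is tight). Substituting $t = d$ and $|\Delta_d| = d-1+\lfloor (d-1)/2\rfloor$ yields precisely the stated parameters $[[n, n-2(d-1+\lfloor(d-1)/2\rfloor), d]]_q$.

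\textbf{Main obstacle.} The genuinely delicate point is Step~1 — more precisely, making sure the equivalence ``$t \le T^*$ $\Rightarrow$ no pair of $\Delta_t$-monomials in $\cD$'' is airtight, including the boundary case $t = T^*$ and the low-degree corner cases where the parametrization~\eqref{eq:writingXpoint} of $X$-failure points has the exception noted in Lemma~\ref{l:latticemovement} (handled by Remark~\ref{ignore1}). One must also confirm that the restriction $k < n$ implicit in the GRS definition, and the constraint $e_1 < |P_X|$ on monomial exponents, are automatically satisfied for all $t \le T^*$ — this was checked in the running text (the bound $D_j \le \lambda\tau\sigma$ for $j \le k_1$), so I would simply cite it. Everything else is the routine counting of Step~2 and direct invocation of the cited theorems.
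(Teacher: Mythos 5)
Your proposal is correct and follows essentially the same route as the paper's proof, which simply combines Remark~\ref{r:tlessthanTstar} (self-orthogonality for $t\le T^*$), Corollary~\ref{c:GMCCParams}, and the count $|\Delta_d|=d-1+\lfloor(d-1)/2\rfloor$ from Lemma~\ref{l:SizeDelta} (Remark~\ref{r:deltaTtwopoints}). You spell out the monomial count and the self-orthogonality chain in more detail than the paper does, but the substance is identical.
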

\begin{proof}
    We consider the GMC code constructed in Definition \ref{d:GMC2points}. The length of the code is $n=|P_X\times P_Y|=2\lambda\tau\sigma$. If $t \leq T^*$, then by Remark \ref{r:tlessthanTstar} our GMC code will be Hermitian self-orthogonal. The parameters of the corresponding quantum code follow from Corollary \ref{c:GMCCParams} and Lemma \ref{l:SizeDelta}. 
\end{proof}
In the final part of this section, we show that no other selection for $P_Y,w$ can increase $T^*$. Recall that in order to remove the point $(0,0)$ from the set $\cD_Y$, it was necessary to introduce the constraint: 
\[
w_0^{q+1}+w_1^{q+1}=0
\]
Under this condition, the minimal $\cF$-point is the form $((e_1,1),(e_1',0))$. If a different selection of $P_Y,w$ were to increase $T^*$, it would have to remove both $(0,0)$ and $(1,0)$ from $\cD_Y$. However, there is no choice of vector or evaluation points that can achieve this. To see this, let $P_Y=\{y_0,y_1\}$ be the $Y$-evaluation points. Then the condition $(1,0) \notin \cD_Y$ amounts to 
\[
w_0^{q+1}y_0^{1+0q}+w_1^{q+1}y_1^{1+0q}=0
\]
which together with the previous condition implies that $y_0=y_1$, a contradiction. 

Thus, we can see that when $|P_Y|=2$, the choice of evaluation points can be arbitrary, and that our chosen coordinate vector provides the greatest range of parameters for which our GMC code will be Hermitian self-orthogonal. It is possible to choose $w$ so that $(1,1) \notin \cD_Y$, but this would not increase $T^*$. This could be of use in entanglement-assisted quantum codes, as such a property would reduce the size of the Hermitian hull, but we do not consider such codes in this paper. 

\section{Choosing $n_Y$ points in the second variable. }

We will return to the notation of Section \ref{s:GMCframework}, and redefine the objects of the previous section in a more general setting. Recall that the set $P_X$ and vector $v$ are fixed, and we must choose the set $P_Y$ and the vector $w$. 

In this section, we consider $P_Y=\{y_0,\ldots y_{n_Y-1}\}\subseteq \F_q$. The reason for choosing points in $\F_q$ rather than $\fqs$ will be explained shortly. With this many points, the $Y$ monomials under consideration are $Y^0,Y^1, \ldots, Y^{n_Y-1}$, so $\cD_Y \subseteq \{0,\ldots,n_Y-1\}^2 $.

We choose our coordinate vector in the following way.  Consider the $(n_Y-1)\times n_Y$ matrix over $\F_q$:
\[
\begin{bmatrix}
    1 &1& \ldots& 1\\
    y_0 &y_1 &\ldots & y_{n_Y-1}\\
    \vdots & \vdots & \ddots &\vdots\\
    y_0^{n_Y-2}&y_1^{n_Y-2}& \ldots & y_{n_Y-1}^{n_Y-2}
\end{bmatrix}.
\]
We consider the solutions over $\F_q$ and choose some non-zero element of the kernel $w_q\in \F_q^{n_Y}$. Since the matrix is a Vandermonde matrix, it must be that every coordinate of $w_q$ is non-zero. We then choose our coordinate vector $w \in (\F_{q^2}^*)^{n_Y}$ so that $w(j)^{q+1}=w_q(j)$. We can now examine the Hermitian inner product: 

\[
\left(ev_{w,P_Y}\left(Y^{e_2}\right)\cdot_h ev_{w,P_Y}\left(Y^{e_2'}\right) \right)=\sum_jw_j^{q+1}(y_j)^{e_2+qe_2'} = \sum_jw_j^{q+1}(y_j)^{e_2+e_2'} .
\]
where the second equality uses the fact that $y_j^q=y_j$ since $P_Y\subseteq \F_q$. Thus, if $e_2+e_2' \leq n_Y-2$ then the Hermitian inner product will be zero. Therefore, we consider $\cF_Y=\{(e_2,e_2'):e_2+e_2'>n_Y-2\}$, and it is clear that $\cF_Y \supseteq \cD_Y$. It follows that
     \[
     \cF=\{((e_1,e_2),(e_1',e_2')): (e_1,e_1')\in\cF_X \text{ and } e_2+e_2'>n_Y-2\}.
     \]
Let us now come to the task of computing a minimal $\cF$-point with respect to the total footprint. For a given $e_2'$, we can minimize the footprint by choosing $e_2$ to be minimal with respect to the constraint $e_2+e_2>n_Y-2$. Therefore, the minimal $\cF$-point is of the form 
$(e_1,n_Y-1-l,e_1',l) $ with $(e_1,e_1') \in \cF_X$ and $0 \leq l \leq n_Y-1$. Without loss of generality we can assume that $e_1<e_1'$ and $l \leq (n_Y-1)/2$, by the same reasoning as Lemma \ref{l:fpWLOG}.

Recall that any point $(e_1,e_1')\in \cF_X$ can be written as 
\[
(e_1,e_1') = (T_1,T_2)+i(\lambda/2,\lambda/2)+j(-\tau/2,\tau/2).
\]
Therefore, we can define the following function
\[
F(i,j,l)=T((e_1,n_Y-1-l),(e_1',l)).
\]

Computing the value of $T^*$ therefore amount to minimizing the function $F(i,j,l)$. As before, we can define set $C_0=T_1+1,D_0=T_2+1$ and define
\[
 F_1(i,j,l)=(n_Y-l)(e_1+1), ~~~~F_2(i,j,l)=(l+1)(e_1'+1),
\]
so that 
\[
F(i,j,l)=\max\{F_1(i,j,l),F_2(i,j,l)\}. 
\]

\begin{lemma}\label{l:RpointsValuei}
    If $n_Y \leq\frac{D_0+\lambda}{C_0}$, then the minimum of $F(i,j,l)$ occurs with $i=0$. 
\end{lemma}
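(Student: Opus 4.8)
The plan is to mirror the argument used in Lemma~\ref{l:2pvaluei}, showing that increasing $i$ past $0$ forces the total footprint above its value at $i=0$. First I would record the baseline: when $i=0$ and we choose $j,l$ optimally, the footprint is at most $F(0,0,0)=\max\{(n_Y)(C_0),(1)(D_0)\}$, so it suffices to show $F(i,j,l)>\max\{n_YC_0,D_0\}$ for every $i\ge 1$ and every admissible $j,l$. Since $F_1$ and $F_2$ are both factors of a product of the two exponents-plus-one, and $e_1'+1=T_2+1+i\lambda/2+j\tau/2=D_0+i\lambda/2+j\tau/2\ge D_0+\lambda/2$ when $i\ge 1$, I would bound
\[
F(i,j,l)\ge F_2(i,j,l)=(l+1)(e_1'+1)\ge e_1'+1\ge D_0+\lambda/2.
\]
So the only way $i\ge 1$ could beat $i=0$ is if $D_0+\lambda/2\le n_YC_0$ fails to be an improvement — but that is not quite enough, since $n_YC_0$ could itself be large; the real content is comparing $D_0+\lambda/2$ against the $i=0$ optimum, not against $F(0,0,0)$.

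The sharper step is therefore to use $l\le (n_Y-1)/2$, hence $l+1\le (n_Y+1)/2$, and also the complementary bound coming from $F_1$: since $e_1+1\ge 1$ and $n_Y-l\ge (n_Y+1)/2$, we get $F_1(i,j,l)\ge (n_Y+1)/2$ for free, which is too weak on its own. Instead I would argue by the crossover idea: for fixed $i\ge 1$ and $j$, the minimum over $l$ of $\max\{(n_Y-l)(e_1+1),(l+1)(e_1'+1)\}$ is at least $\sqrt{(n_Y-l)(l+1)}\cdot\sqrt{(e_1+1)(e_1'+1)}$ evaluated near balance, but more cleanly, the minimum over $l\in\{0,\dots,n_Y-1\}$ of that max is bounded below by $e_1'+1$ (taking $l=0$ is forced to be at least $e_1'+1$, and any other $l$ only increases $F_2$ relative to that, while $F_1$ at $l=0$ equals $n_Y(e_1+1)$). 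So in fact $\min_l F(i,j,l)\ge \min\{n_Y(e_1+1),\,e_1'+1\}$ is false in general — I need to be careful — the correct statement is $\min_l F(i,j,l)\ge e_1'+1$ whenever the balanced choice still has $F_2\ge e_1'+1$, which holds since $l\ge 0$ gives $l+1\ge 1$. Thus $F(i,j,l)\ge e_1'+1\ge D_0+i\lambda/2\ge D_0+\lambda/2$ for all $i\ge1$, and the hypothesis $n_Y\le \frac{D_0+\lambda}{C_0}$ gives $n_YC_0\le D_0+\lambda$; I then need the strict inequality $D_0+\lambda/2>\max\{n_YC_0,D_0\}$, which requires $D_0+\lambda/2>n_YC_0$, i.e. $n_YC_0<D_0+\lambda/2$ — not quite what the hypothesis gives, so I would revisit whether the case $i=1$ with $j$ large enough to push $e_1\ge 0$ needs separate parity handling exactly as in Lemma~\ref{l:2pvaluei}.

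Concretely, I would split into $i\ge 2$ and $i=1$. For $i\ge 2$, $e_1'+1\ge D_0+\lambda$, so $F(i,j,l)\ge D_0+\lambda\ge n_YC_0\ge\max\{n_YC_0,D_0\}$ with equality only in degenerate boundary cases that can be excluded, say, by noting $D_0>C_0$ in all three rows of Table~\ref{tab:valueL}. For $i=1$, I would examine whether the constraint $e_1=T_1+\lambda/2-j\tau/2\ge 0$ together with the parity constraints (as invoked in Cases~1--3 of Lemma~\ref{l:2pvaluei}'s proof) forces $j$ into a range where $F_2(1,j,l)$ is already large enough, handling the minimal-$j$ subcase by direct substitution of $C_0,D_0$ from Table~\ref{tab:valueL}. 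The main obstacle is this bookkeeping in the $i=1$ case: unlike Lemma~\ref{l:2pvaluei}, here the extra free variable $l$ interacts with the $Y$-degree bound $n_Y-1-l\ge 0$, so I must confirm that the optimal $l$ (which wants to be large to shrink $F_1$) does not get clipped in a way that inflates $F_2$ and rescues a large-$i$ point. I expect this to reduce, after the case split, to the same three table-row verifications as in Lemma~\ref{l:2pvaluei}, now with the factor $n_Y$ in place of $2$ and the hypothesis $n_Y\le (D_0+\lambda)/C_0$ supplying exactly the slack needed.
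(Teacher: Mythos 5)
Your $i\ge 2$ case goes through, but the $i=1$ case --- which you explicitly defer to ``bookkeeping'' and ``the same three table-row verifications'' --- is where the whole content of the lemma sits, and the lower bound you rely on cannot close it. Bounding $F(i,j,l)\ge F_2(i,j,l)\ge e_1'+1\ge D_0+i\lambda/2$ throws away the $j\tau/2$ term, and for $i=1$ it yields only $D_0+\lambda/2$, which the hypothesis $n_YC_0\le D_0+\lambda$ does not dominate; no amount of parity analysis recovers this, because for $i=1$ and $j$ near $i\lambda/\tau$ both factors are genuinely moderate and you must use both. (Also, your chain ``$\ge n_YC_0\ge\max\{n_YC_0,D_0\}$'' is false when $D_0>n_YC_0$; the correct comparison is $D_0+\lambda\ge\max\{n_YC_0,D_0\}$.)

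The missing idea is the coupling of the two factors through $j$. The paper fixes $l$ and compares $F(i,j,l)$ only with $F(0,0,l)$ at the \emph{same} $l$, which sidesteps the optimal-$l$ interaction you worry about. Since $F_2$ is nondecreasing in $i$ and $j$, it suffices to show $F(i,j,l)\ge F_1(0,0,l)=(n_Y-l)C_0$. If $F_1(i,j,l)>F_1(0,0,l)$ we are done; otherwise $F_1(i,j,l)\le F_1(0,0,l)$ forces $j\ge i\lambda/\tau$, and substituting this into $F_2$ gives
\[
F_2(i,j,l)\ge D_0+i\lambda/2+j\tau/2\ge D_0+i\lambda\ge D_0+\lambda\ge n_YC_0\ge (n_Y-l)C_0.
\]
This handles all $i\ge1$ uniformly, with no parity case split: the hypothesis $n_Y\le(D_0+\lambda)/C_0$ is precisely what replaces the case analysis of Lemma~\ref{l:2pvaluei}. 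You gesture at this ``crossover'' in your second paragraph but then abandon it for the weaker $j$-free bound, which is why your argument stalls at $i=1$.
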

\begin{proof}
    It is sufficient to show that if $i \ge1$ and $l$ is fixed, then $F(i,j,l)>F(0,0,l)=\max\{F_1(0,0,l),F_2(0,0,l)\}$ for any choice of $j,$.  Since $F_2(i,j,l)$ is a strictly increasing function (with respect to $j$), we only have to show that $F(i,j,l)>F_1(0,0,l)$. 
    We have that 
    \begin{equation*}
        \begin{split}
            F_1(i,j,l)=&(n_Y-l)(C_0+i\lambda/2-j\tau/2),\\
            F_2(i,j,l)=&(l+1)(D_0+i\lambda/2+j\tau/2),\\
            F_1(0,0,l)=&(n_Y-l)(C_0).
        \end{split}
    \end{equation*}

Now, it is easy to check that $F_1(i,j,l)\leq F_1(0,0,l)$ $\iff$ $j \geq (i\lambda)/\tau$. It then follows that 
\begin{equation*}
    \begin{split}
        F_2(i,j,l)&=(l+1)(D_0+i\lambda/2+j\tau/2)\\
        &\ge D_0+i\lambda/2+j\tau/2 \\
        &\ge D_0+i\lambda/2+i\lambda/2  \\
        &\ge n_Y(C_0)\\
        &\ge F_1(0,0,l)
    \end{split}
\end{equation*}

Thus, for every value of $j,l$ we have that $\max \{F_1(i,j,l),F_2(i,j,l)\} \ge F(0,0,l) $, and the result follows. 

\end{proof}
\begin{remark}
    It is worth pointing out that for any choice of $\lambda$ and $\tau$ we have that $(D_0+\lambda)/C_0\ge2$, so this restriction on $n_Y$ never results in a trivial construction. 
\end{remark}

\begin{lemma}\label{l:RpointsValuel}
    If $n_Y \leq \frac{D_0+\lambda}{C_0}$ and $D_0\ge \lambda$, then the minimum of $F(i,j,l)$ occurs with $l=0$.  
\end{lemma}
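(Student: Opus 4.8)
The plan is to build on Lemma \ref{l:RpointsValuei}, which already reduces us to the case $i=0$. So I would fix $i=0$ and show that among the points enumerated by $F(0,j,l)$, none with $l\ge 1$ can beat the best point with $l=0$. Concretely, I want to compare an arbitrary candidate $F(0,j,l)$ (with $1\le l\le (n_Y-1)/2$) against a suitably chosen candidate with $l=0$. The natural choice is to take the $l=0$ point whose $X$-exponents are ``as small as possible'' — i.e. decrease $l$ to $0$ while simultaneously adjusting $j$ so that the $X$-exponent $e_1=T_1+i\lambda/2-j\tau/2$ does not increase. First I would record, as in Lemma \ref{l:RpointsValuei}, the explicit forms
\[
F_1(0,j,l)=(n_Y-l)(C_0-j\tau/2),\qquad F_2(0,j,l)=(l+1)(D_0+j\tau/2),
\]
and note $F_1$ is decreasing in $j$ while $F_2$ is increasing in $j$, exactly as in the two-point analysis.

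The key step is a monotonicity-in-$l$ argument. I would argue that replacing $l$ by $l-1$ (keeping $j$ fixed) can only decrease, or at worst controllably change, the total footprint: $F_1(0,j,l-1)=(n_Y-l+1)(C_0-j\tau/2)$ is larger than $F_1(0,j,l)$, but $F_2(0,j,l-1)=l(D_0+j\tau/2)$ is smaller than $F_2(0,j,l)$. So the naive ``decrease $l$'' move trades one coordinate against the other, and I need the hypotheses to show the trade is favourable. This is where $D_0\ge\lambda$ and $n_Y\le(D_0+\lambda)/C_0$ enter: the first ensures $C_j=C_0-j\tau/2$ and $D_j=D_0+j\tau/2$ stay in a regime where $D_j$ dominates, and combined with the bound on $n_Y$ it forces $(n_Y-l)C_j$ to be no larger than $(l+1)D_j$ throughout the relevant range, so $F(0,j,l)=F_2(0,j,l)=(l+1)D_j$. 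Once the footprint is purely $(l+1)D_j$, the point is to observe that decreasing $l$ decreases $(l+1)$, and since $D_j$ does not depend on $l$, the footprint strictly decreases; hence the minimum over $l$ is attained at $l=0$ (possibly after re-optimising $j$, but that re-optimisation is exactly the one-variable problem already solved for the $n_Y=2$ case in Section \ref{s:2points}).

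The careful bookkeeping is in verifying the claim $F(0,j,l)=F_2(0,j,l)$ on the full admissible range of $(j,l)$. I would split into the two sub-cases $j<k_2$-type and $j\ge k_2$-type (the crossover between $2C_j$ and $D_j$), but now with the extra factor $(n_Y-l)$ versus $(l+1)$; the inequality $(n_Y-l)C_j\le(l+1)D_j$ should follow from $n_Y C_0\le D_0+\lambda\le D_j+\lambda$ together with $C_j\le C_0$ and a short estimate handling the $j\tau/2$ terms. One also has to keep track of the admissibility constraints — $e_1=T_1-j\tau/2\ge 0$ and $e_1'=T_2+j\tau/2 < n_X$ — but these only shrink the index set and cannot create a new minimiser, so they can be dispatched quickly.

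The main obstacle I anticipate is precisely the case distinction in the previous paragraph: when $j$ is small the larger of the two factors is $F_1(0,j,l)=(n_Y-l)C_j$ rather than $F_2$, and then decreasing $l$ \emph{increases} $F_1$. I expect the resolution is that in that regime $F_1(0,j,l)\ge F_1(0,j+1,l)$ lets me first push $j$ up to the crossover without increasing the footprint (mirroring Lemma \ref{l:2pointk2}), and only then apply the $l$-reduction; so the argument is really ``first optimise $j$, then optimise $l$'', and the two hypotheses are exactly what make these two optimisations compatible. Making that two-stage reduction rigorous, rather than hand-waving that ``$l=0$ is best'', is the part that needs the most care.
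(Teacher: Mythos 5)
Your plan is correct in substance, but it is considerably more elaborate than necessary, and the ``main obstacle'' you anticipate at the end is not actually an obstacle. The paper's proof is a direct comparison of every point with $l\ge 1$ against the single point $(i,j,l)=(0,0,0)$: for $l\ge 1$ and $j\ge 0$ one has $F(0,j,l)\ge F_2(0,j,l)=(l+1)(D_0+j\tau/2)\ge 2D_0$, and the two hypotheses give exactly $2D_0\ge D_0+\lambda\ge n_YC_0=F_1(0,0,0)$ and trivially $2D_0\ge D_0=F_2(0,0,0)$, so $F(0,j,l)\ge F(0,0,0)$. No monotonicity-in-$l$ argument, no identification of which factor realises the max for $l\ge1$, and no re-optimisation of $j$ is needed: to show the minimum occurs at $l=0$ it suffices to beat all $l\ge1$ points with \emph{some} $l=0$ point, and $(0,0,0)$ does the job. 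Your worry that ``when $j$ is small the larger factor is $F_1(0,j,l)=(n_Y-l)C_j$, and decreasing $l$ increases $F_1$'' never materialises: for $l\ge1$ your own estimates show $(n_Y-l)C_j\le (n_Y-1)C_0\le D_0+\lambda-C_0<2D_0\le(l+1)D_j$, so $F_2$ always dominates in that range and the two-stage ``first optimise $j$, then optimise $l$'' reduction is superfluous. Your route would work if carried out, since it rests on the same two inequalities $n_YC_0\le D_0+\lambda$ and $\lambda\le D_0$, but the lower bound $F\ge F_2$ lets you skip straight to the conclusion.
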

\begin{proof}
    It is sufficient to show that if $l \ge 0$ then for any $j$, we have that $F_2(0,j,l)\ge F(0,0,0)$. It is clear that $F_2(0,j,l)\ge 2D_0 \ge D_0=F_2(0,0,0)$, and it follows from the assumptions that $F_2(0,j,l) \ge 2D_0\ge D_0+\lambda \ge n_YC_0$. Therefore $F_2(0,j,l) \ge \max\{F_1(0,0,0),F_2(0,0,0)\}=F(0,0,0)$. 
\end{proof}

    As a consequence of Lemmas \ref{l:RpointsValuei} and \ref{l:RpointsValuel}, when considering the minimum of the function $F(i,j,l)$ we need only consider the values $F(0,j,0)$. Let use set now proceed to find the value of $j$ which minimizes this function. Just as in Section \ref{s:2points}, this amounts to a comparison of the following constants:

    \begin{definition}\label{d:Rpointks}
    We define the following constants when $\tau$ is even:
\begin{equation*}
\begin{split}
        k_1=&\max\left\{\left\lfloor \frac{2(C_0-1)}{\tau}\right\rfloor,0 \right\}\\
        k_2=&\max\left\{\left\lceil\frac{2(n_YC_0-D_0)}{(n_Y+1)\tau} \right\rceil,0\right\}\\
        k_3= &\max\left\{\left\lfloor\frac{2(n_YC_0-D_0)+n_Y\tau}{(n_Y+1)\tau}\right\rfloor,0\right\}
\end{split}
\end{equation*}
and when $\tau$ is odd: 
\begin{equation*}
\begin{split}
        k_1=&\max\left\{\left\lfloor \frac{C_0-1}{\tau}\right\rfloor,0 \right\},\\
        k_2=&\max\left\{\left\lceil\frac{n_YC_0-D_0}{(n_Y+1)\tau} \right\rceil,0\right\},\\
        k_3= &\max\left\{\left\lfloor\frac{n_YC_0-D_0+n_Y\tau}{(n_Y+1)\tau}\right\rfloor,0\right\},
\end{split}
\end{equation*}
 where $C_0,D_0$ are as defined in Definition \ref{d:2pointsCD}. 
\end{definition}
We only provide analysis for the case where $\tau$ is even. As explained after Definition \ref{d:2pointks} in Section \ref{s:2points}, we are searching for $j$ in the range $0 \leq j \leq k_1$. 
\begin{lemma}\label{l:Rpointk2}
    If $\tau$ is even, then function $F(0,j)$ satisfies 
        \[
    F(0,j,0)=
    \begin{cases}
        n_Y(C_0-j\tau/2) & \text{if } j <k_2,\\
        D_0+j\tau/2 & \text{if } j\ge k_2. 
    \end{cases}
    \]
\end{lemma}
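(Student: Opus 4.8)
The plan is to reduce the statement to the elementary fact that, along the line $i=0$, $l=0$, the quantity $F_1(0,j,0)$ is strictly decreasing in $j$ while $F_2(0,j,0)$ is strictly increasing, so that $F(0,j,0)=\max\{F_1(0,j,0),F_2(0,j,0)\}$ is controlled by a single crossover point, which is precisely what the constant $k_2$ records.

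First I would make the two functions explicit. By Lemmas~\ref{l:RpointsValuei} and~\ref{l:RpointsValuel} it suffices to treat $i=0$, $l=0$. Substituting $i=0$ into the parametrization in Equation~\ref{eq:writingXpoint} gives $e_1=T_1-j\tau/2$ and $e_1'=T_2+j\tau/2$, so that, with $C_0=T_1+1$ and $D_0=T_2+1$ as in Definition~\ref{d:2pointsCD},
\[
F_1(0,j,0)=(n_Y-0)(e_1+1)=n_Y\!\left(C_0-j\tau/2\right),\qquad
F_2(0,j,0)=(0+1)(e_1'+1)=D_0+j\tau/2 .
\]
Since $\tau>0$, the first is strictly decreasing and the second strictly increasing in $j\ge0$, exactly as in the proof of Lemma~\ref{l:2pointk2}.

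Next I would locate the crossover by a direct manipulation:
\[
F_1(0,j,0)\ge F_2(0,j,0)
\;\Longleftrightarrow\; n_YC_0-D_0\ge \tfrac{(n_Y+1)\tau}{2}\,j
\;\Longleftrightarrow\; j\le \frac{2(n_YC_0-D_0)}{(n_Y+1)\tau}=:x,
\]
with the strict inequality $F_1(0,j,0)>F_2(0,j,0)$ corresponding to $j<x$. Recalling from Definition~\ref{d:Rpointks} that (for $\tau$ even) $k_2=\max\{\lceil x\rceil,0\}$, and using the standard bounds $\lceil x\rceil-1<x\le\lceil x\rceil$, I would argue: if $j\ge k_2$ then $j\ge x$, hence $F_1(0,j,0)\le F_2(0,j,0)$ and $F(0,j,0)=D_0+j\tau/2$; and if $j<k_2$ then $k_2\ge1$ and $j\le k_2-1=\lceil x\rceil-1<x$, hence $F_1(0,j,0)>F_2(0,j,0)$ and $F(0,j,0)=n_Y(C_0-j\tau/2)$. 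This is the asserted piecewise description.

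I do not expect a genuine obstacle; this is essentially a verbatim analogue of Lemma~\ref{l:2pointk2}. The only points requiring care are the truncation $\max\{\cdot,0\}$ in the definition of $k_2$ — when $n_YC_0\le D_0$ one has $k_2=0$, the first branch is vacuous, and one must check the second branch already holds at $j=0$, which it does since then $F_1(0,0,0)=n_YC_0\le D_0=F_2(0,0,0)$ — and the bookkeeping of strict versus non-strict inequality at the boundary $j=k_2$, so that this value falls into the correct branch.
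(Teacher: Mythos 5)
Your proof is correct and follows essentially the same route as the paper: the paper's proof is a one-line appeal to the equivalence $n_YC_j\leq D_j \iff j\geq \frac{2(n_YC_0-D_0)}{(n_Y+1)\tau}$, which is exactly the crossover computation you carry out, with the added (welcome) care about the ceiling, the truncation at $0$, and the strict-versus-nonstrict boundary case.
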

\begin{proof}
    This follows from the definition of $C_j$ and $D_j$, and the fact that $n_YC_j \leq D_j$ $\iff j \ge \frac{2(n_YC_0-D_0)}{(n_Y+1)\tau}$.
\end{proof}

\begin{lemma}\label{l:Rpointk3}
    For $0 \leq j \leq k_3$, we have that $n_YC_{j-1}\ge D_j$. For $j > k_3$, we have that $D_j>n_YC_{j-1}$. 
\end{lemma}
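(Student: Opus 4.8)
The proof of Lemma~\ref{l:Rpointk3} will follow exactly the same pattern as the proof of Lemma~\ref{l:2pointk3} in Section~\ref{s:2points}, which is a one-line comparison of two closed-form expressions. The plan is to substitute the explicit forms $C_{j-1} = C_0 - (j-1)\tau/2$ and $D_j = D_0 + j\tau/2$ from Definition~\ref{d:Rpointks} into the desired inequality $n_Y C_{j-1} \ge D_j$ and solve for $j$.

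First I would write out
\[
n_Y C_{j-1} \ge D_j
\iff n_Y\!\left(C_0 - (j-1)\tfrac{\tau}{2}\right) \ge D_0 + j\tfrac{\tau}{2}.
\]
Expanding the left side gives $n_Y C_0 - n_Y j \tau/2 + n_Y \tau/2 \ge D_0 + j\tau/2$. Collecting the $j$ terms on one side yields $n_Y C_0 + n_Y\tau/2 - D_0 \ge j\tau/2 + n_Y j\tau/2 = j(n_Y+1)\tau/2$, hence
\[
j \le \frac{2(n_Y C_0 - D_0) + n_Y\tau}{(n_Y+1)\tau}.
\]
The right-hand side is precisely the quantity inside the floor function defining $k_3$ (before taking the maximum with $0$), so $n_Y C_{j-1} \ge D_j$ holds exactly when $j$ is at most that real number, i.e.\ when $j \le k_3$ for integer $j \ge 0$; and the reverse strict inequality $D_j > n_Y C_{j-1}$ holds for $j > k_3$. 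The interaction with the $\max\{\cdot,0\}$ in the definition of $k_3$ is harmless: if the bracketed fraction is negative then $n_Y C_{-1} \ge D_0$ already fails (one checks $n_Y C_0 + n_Y\tau/2 < D_0 + 0$ forces even $j=0$ to fail, consistent with $k_3 = 0$ and the claim being vacuous or about $j>0$), so the statement remains correct.

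I do not anticipate any real obstacle here; the only mild subtlety is bookkeeping with the floor and the clamp to $0$, and checking that the equivalence is an honest ``$\iff$'' at the boundary $j = k_3$ (where $n_Y C_{j-1} = D_j$ is allowed by the non-strict inequality in the first assertion but excluded from the second). This is entirely analogous to Lemma~\ref{l:2pointk3}, whose proof simply states the equivalence without further comment, so I would present this one the same way: ``Using the definitions of $C_{j-1}$ and $D_j$, we see that $n_Y C_{j-1} \ge D_j \iff j \le \frac{2(n_Y C_0 - D_0) + n_Y\tau}{(n_Y+1)\tau}$.'' For the odd-$\tau$ case the identical computation with $C_j = C_0 - j\tau/2$, $D_j = D_0 + j\tau/2$ but the odd-$\tau$ formula for $k_3$ goes through verbatim, giving the threshold $j \le \frac{n_Y C_0 - D_0 + n_Y\tau}{(n_Y+1)\tau}$.
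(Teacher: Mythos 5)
Your proposal is correct and matches the paper's proof, which is exactly the one-line equivalence $n_Y C_{j-1}\ge D_j \iff j \leq \frac{2(n_YC_0-D_0)+n_Y\tau}{(n_Y+1)\tau}$ obtained by substituting the definitions of $C_{j-1}$ and $D_j$. The extra remarks about the floor, the clamp to $0$, and the odd-$\tau$ case are fine but go beyond what the paper records.
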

\begin{proof}
    Using the definition of $C_{j-1}$ and $D_j$, we can see that $n_YC_{j-1}\ge D_j\iff j \leq \frac{2(n_YC_0-D_0)+n_Y\tau}{(n_Y+1)\tau} $.
\end{proof}
We summarize these results in the following proposition. 
\begin{proposition}\label{p:TstarRpoints}
If $\tau$ is even then
\[
T^*=
\begin{cases}
    F(0,k_1,0)=n_YC_{k_1}&  \text{if } k_1<k_2,\\
    F(0,k_2,0)=D_{k_2}& \text{if } k_3=k_2\leq k_1,\\
    F(0,k_2-1,0)=n_YC_{k_2-1} & \text{if }k_3<k_2 \leq k_1.
\end{cases}
\]
If $\tau$ is odd then 
\[
T^*=
\begin{cases}
    F(0,2k_1,0)=n_YC_{2k_1}&  \text{if } k_1<k_2,\\
    F(0,2k_2,0)=D_{2k_2}& \text{if } k_3=k_2\leq k_1,\\
    F(0,2k_2-2,0)=n_YC_{2k_2-2} & \text{if }k_3<k_2 \leq k_1,
\end{cases}
\]
where $k_1,k_2,k_3$ and $C_j,D_j$ are as given in Definition \ref{d:Rpointks}. 
\end{proposition}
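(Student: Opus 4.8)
The plan is to mirror the argument already carried out in Proposition \ref{p:Tstar2points}, now using the general constants from Definition \ref{d:Rpointks} and the reductions supplied by Lemmas \ref{l:RpointsValuei} and \ref{l:RpointsValuel}. Those two lemmas reduce the problem to minimizing $F(0,j,0)$ over the admissible range $0\le j\le k_1$ (with the extra parity constraint that $j$ be even when $\tau$ is odd, exactly as in Section \ref{s:2points}). So after invoking them, everything is a one-variable optimization over $j$, and the three cases in the statement correspond to whether the crossover index $k_2$ falls outside the admissible range, or inside it with $k_3=k_2$, or inside it with $k_3=k_2-1$.

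I would first treat $\tau$ even. By Lemma \ref{l:Rpointk2}, on $0\le j<k_2$ the function $F(0,j,0)=n_YC_j$ is strictly decreasing (since $C_j=C_0-j\tau/2$ is decreasing and $n_Y>0$), and on $j\ge k_2$ it equals $D_j=D_0+j\tau/2$, which is strictly increasing. Hence if $k_2>k_1$ the minimum over the admissible range is at the right endpoint $j=k_1$, giving $T^*=F(0,k_1,0)=n_YC_{k_1}$; this is the first case. If $k_2\le k_1$, the minimum is attained at $j=k_2$ or $j=k_2-1$, so it remains to compare $D_{k_2}$ with $n_YC_{k_2-1}$. By Lemma \ref{l:Rpointk3}, $n_YC_{k_2-1}\ge D_{k_2}$ exactly when $k_2\le k_3$. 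Since $k_3$ is either $k_2$ or $k_2-1$, in the case $k_3=k_2$ we get $n_YC_{k_2-1}\ge D_{k_2}$, so the minimum is $D_{k_2}$, while in the case $k_3=k_2-1$ we get $D_{k_2}>n_YC_{k_2-1}$, so the minimum is $n_YC_{k_2-1}$. These are the second and third cases respectively.

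For $\tau$ odd the only change is the parity restriction, which as noted after Definition \ref{d:2pointks} forces $j$ to be even; writing $j=2m$, the constants $k_1,k_2,k_3$ in Definition \ref{d:Rpointks} are already defined so that $m$ ranges over $0\le m\le k_1$ and the crossover in the reparametrized variable happens at $m=k_2$. Repeating the identical comparison in $m$ and substituting $j=2k_1$, $j=2k_2$, $j=2k_2-2$ yields the three displayed formulas. The verification that $k_3\in\{k_2-1,k_2\}$ is the small arithmetic fact that makes the trichotomy exhaustive; it follows from comparing the two floor/ceiling expressions, whose arguments differ by $n_Y\tau/((n_Y+1)\tau)<1$.

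The only genuine subtlety — and the step I expect to require the most care — is checking that the hypotheses of Lemmas \ref{l:RpointsValuei} and \ref{l:RpointsValuel} (namely $n_Y\le (D_0+\lambda)/C_0$ and $D_0\ge\lambda$) are in force, or else carrying the statement with those hypotheses understood; in fact $D_0\ge\lambda$ holds in every row of Table \ref{tab:valueL} by inspection, and the bound on $n_Y$ is the standing assumption of this section. A secondary point is that the admissible range for $j$ is genuinely $0\le j\le k_1$ rather than all of $\Z_{\ge0}$: one must confirm, as in Section \ref{s:2points}, that the constraint $D_j\le\lambda\tau\sigma$ coming from $e_1'<|P_X|$ is automatically satisfied whenever $j\le k_1$, so that $k_1$ really is the binding upper bound. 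Once these bookkeeping items are settled, the proof is the same monotonicity-and-crossover argument as Proposition \ref{p:Tstar2points}, and I would simply write ``the proof is identical to that of Proposition \ref{p:Tstar2points}, using Lemmas \ref{l:RpointsValuei}, \ref{l:RpointsValuel}, \ref{l:Rpointk2} and \ref{l:Rpointk3} in place of their Section \ref{s:2points} counterparts.''
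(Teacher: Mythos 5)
Your proposal is correct and follows essentially the same route as the paper's own proof: reduce to minimizing $F(0,j,0)$ via Lemmas \ref{l:RpointsValuei} and \ref{l:RpointsValuel}, then apply the monotonicity/crossover analysis of Lemmas \ref{l:Rpointk2} and \ref{l:Rpointk3} to split into the three cases, with the odd-$\tau$ case handled by the parity substitution $j=2m$. The extra bookkeeping you flag (verifying $D_0\ge\lambda$ from Table \ref{tab:valueL} and that $j\le k_1$ is the binding constraint) is left implicit in the paper but is a reasonable addition.
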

\begin{proof}
Suppose that $\tau$ is even. 
    If $k_2>k_1$, then it follows from Lemma \ref{l:Rpointk2} that $F(0,j,0)$ is strictly decreasing for all $0\leq j \leq k_1$, thus the minimum occurs at $F(0,k_1,0)=n_YC_{k_1}$. 

    If $k_2 \leq k_1$, then by Lemma \ref{l:2pointk2} the minimum is either $n_YC_{k_2-1} $ or $D_{k_2}$. If $k_3=k_2$, then in particular $k_2\leq k_3$, which means that $n_YC_{k_2-1}\ge D_{k_2}$ by Lemma \ref{l:Rpointk3}. Thus the minimum is given by $D_{k_2}$. If $k_3=k_2-1$, then the minimum will be at $F(0,k_2-1,0)=n_YC_{k_2-1}$. 

    The proof is similar when $\tau$ is odd. 
\end{proof}

\begin{theorem}\label{t:codesRpoints}
        Let $q\ge 4$ be a prime power. Let $\lambda>1$ be a divisor of $q-1$ and let $\tau,\rho>1$ be divisors of $q+1$. We assume that $\gcd(\lambda,\tau)=1$ and that $\frac{\rho}{\gcd(\lambda\tau,\rho)}\ge2.$ Let $\sigma$ be any integer with $2 \leq \sigma \leq \frac{\rho}{\gcd(\lambda\tau,\rho)}$. 
    Let $n_Y\leq q$, and suppose that the assumptions of Lemma \ref{l:RpointsValuel} hold. 
        Let $n=n_Y\lambda\tau\sigma$, and let $T^*$ be given by Proposition \ref{p:TstarRpoints}. 
Then for any $d$ with  $2\le d \le T^*$ there exists a 
$$\left[\left[n,n-2|\Delta_d|,d\right]\right]_q$$ 
quantum stabilizer  code. 
\end{theorem}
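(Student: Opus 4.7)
The plan is to proceed exactly as in the proof of Theorem \ref{t:codes2points}, substituting the construction of Section 5 for the two-point construction of Section \ref{s:2points}. Concretely, I would consider the separable GMC code $C := \mathrm{GMC}_d(Q, P_X \times P_Y)$ of Definition \ref{d:GMCC}, where $P_X$ and $v$ are as fixed in Section \ref{s:GMCframework}, $P_Y = \{y_0, \ldots, y_{n_Y-1}\} \subseteq \F_q$ is a set of $n_Y$ distinct elements (so that $n_Y \le q$ is needed just to make this possible), $w \in (\F_{q^2}^*)^{n_Y}$ is obtained via the Vandermonde kernel construction at the start of Section 5, and $Q(i,j) = v_i w_j$. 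The length is then $n = |P_X \times P_Y| = n_Y \lambda \tau \sigma$, and the classical dimension of $C$ is $|\Delta_d|$ by Definition \ref{d:GMCC}.

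The heart of the argument is Hermitian self-orthogonality. By the Vandermonde construction and the computation preceding Proposition \ref{p:TstarRpoints}, the set $\cF_Y = \{(e_2, e_2') : e_2 + e_2' > n_Y - 2\}$ contains $\cD_Y$, while $\cF_X$ of Definition \ref{d:Xfailurepoint} contains $\cD_X$; hence by Proposition \ref{p:decomp} and Remark \ref{r:Fproxy}, the product set $\cF$ built from $\cF_X$ and $\cF_Y$ contains $\cD$. Under the hypotheses of Lemma \ref{l:RpointsValuel} (which subsume those of Lemma \ref{l:RpointsValuei}), the search for a minimal $\cF$-point collapses onto the one-parameter family $F(0, j, 0)$, whose minimum is exactly the value of $T^*$ given by Proposition \ref{p:TstarRpoints}. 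Since $d \le T^*$, no pair of monomial exponents drawn from $\Delta_d$ lies in $\cF$, hence none lies in $\cD$; by Remark \ref{r:tlessthanTstar} this is precisely the statement $C \subseteq C^{\perp_h}$.

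The conclusion then follows by invoking Corollary \ref{c:GMCCParams} applied to $C$, which produces the advertised $[[n,\, n - 2|\Delta_d|,\, d]]_q$ quantum stabilizer code. There is no real obstacle to the proof: all of the analytic content has already been established in Section 5 (the factorization of the Hermitian inner product from Section \ref{s:GMCframework}, the containment $\cF \supseteq \cD$, and the optimization yielding $T^*$). The only point that requires careful bookkeeping is checking that the theorem's hypotheses align precisely with those required by Lemmas \ref{l:RpointsValuei} and \ref{l:RpointsValuel}, since otherwise the closed-form expression for $T^*$ in Proposition \ref{p:TstarRpoints} might not coincide with the true minimum of $F(i, j, l)$ over all admissible triples.
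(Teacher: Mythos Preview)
Your proposal is correct and follows essentially the same approach as the paper: the paper gives no separate proof for Theorem \ref{t:codesRpoints}, since it is the direct analogue of Theorem \ref{t:codes2points} with the Section~5 construction in place of the two-point one, and your outline reproduces that argument (construct the separable GMC code, use $d\le T^*$ and Remark \ref{r:tlessthanTstar} for Hermitian self-orthogonality, then apply Corollary \ref{c:GMCCParams}). Your observation that the hypotheses of Lemma \ref{l:RpointsValuel} subsume those of Lemma \ref{l:RpointsValuei} is correct and addresses the only bookkeeping point.
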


\begin{remark}
    We provide a formula for computing $|\Delta_d|$ in Lemma \ref{l:SizeDelta}.
\end{remark}

\begin{exmp}\label{exmp:codeexample}
Let us work through an example to see Theorem \ref{t:codesRpoints} in action. Suppose that $q=8, \lambda=7,\tau=3,\rho=9,\sigma=2 $ and $ n_Y=3$. We are in Case 3 of Table \ref{tab:valueL}, thus $C_0=\frac{\lambda+\tau}{2}=5, D_0=C_0+\tau=8$. We can verify that $D_0=8\ge 7=\lambda$ and $n_Y =3 \leq 3 = \frac{D_0+\lambda}{C_0}$. We compute that $k_1=k_2=k_3=1$, and thus by Proposition \ref{p:TstarRpoints} we have that $T^*=F(0,2,0)=D_2=D_0+2(\tau/2)=11$. Thus, by Theorem \ref{t:codesRpoints}, we get quantum codes with parameters $[[126,126-2|\Delta_d|,d]]_8$ for $2 \leq d \leq 11$ over $q=8$. In the next section, we give a formula for $|\Delta_d|$. 

In this case, a minimal $\cF$-point is the pair of monomials $X^1Y^2, X^{10}Y^0$ (see Remark \ref{r:CDtoExponents}). However, since $(1,10)$ satisfies the third condition of Proposition \ref{p:orthogconditions}, the evaluation vectors corresponding to these monomials are actually orthogonal. We can manually check that the GMC code is in fact Hermitian self-orthogonal for any $2 \leq d \leq 15$.
\end{exmp}

\section{The parameters of the codes and Comparison with the GV Bound}

We now give an explicit description of the code parameters which we have constructed. 

\begin{lemma}\label{l:SizeDelta}
    Let $|P_Y|=n_Y$. Then 
    \[
    |\Delta_t| = \sum_{i=1}^{n_Y} \left\lfloor\frac{t-1}{i}\right\rfloor .
    \] 
\end{lemma}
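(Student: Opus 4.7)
The plan is to prove the formula by a direct counting argument, partitioning $\Delta_t$ according to the $Y$-exponent. Recall that
\[
\Delta_t = \{X^{a}Y^{b} \in E : (a+1)(b+1) < t\},
\]
and that since $I$ is the vanishing ideal of $P_X \times P_Y$, the admissible $Y$-exponents range over $0 \le b \le n_Y - 1$.

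For each fixed $b$ in that range, I would count the admissible $a$'s. The condition $(a+1)(b+1)<t$ is equivalent to $a+1<t/(b+1)$. Since $a+1$ is a positive integer, the number of valid choices of $a$ is exactly the largest positive integer strictly less than $t/(b+1)$, which is $\left\lfloor (t-1)/(b+1)\right\rfloor$. (One checks this with the standard identity: for a positive integer $m$ and a positive integer $d$, the number of positive integers $m$ with $md<t$ equals $\lfloor (t-1)/d\rfloor$.) I would also observe that for the regime of interest ($t\le T^*$), this count never exceeds $n_X$, so the constraint $a\le n_X-1$ coming from $E$ is automatically satisfied and imposes no additional truncation.

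Summing over $b = 0, 1, \ldots, n_Y-1$ and reindexing via $i = b+1$ gives
\[
|\Delta_t| = \sum_{b=0}^{n_Y-1} \left\lfloor \frac{t-1}{b+1}\right\rfloor = \sum_{i=1}^{n_Y} \left\lfloor \frac{t-1}{i}\right\rfloor,
\]
as claimed. There is no real obstacle here beyond the careful translation of the strict inequality $(a+1)(b+1)<t$ into a floor-function count; the result is essentially a bookkeeping exercise once the partition by $Y$-exponent is introduced.
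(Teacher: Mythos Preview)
Your proof is correct and follows essentially the same approach as the paper: partition $\Delta_t$ by the $Y$-exponent and count the admissible $X$-exponents in each slice via the floor function. You are in fact slightly more careful than the paper, which does not explicitly address why the bound $a\le n_X-1$ from $E$ imposes no extra truncation.
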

\begin{proof}
    We partition all monomials in the quotient ring into the groups $X^eY^i$ for $0 \leq i \leq n_Y$. For a given $i$, the number of monomials is equal to the number of non-negative integer solutions to the inequality $n_Y(e+1)<t$ which is $\left\lfloor\frac{t-1}{n_Y}\right\rfloor$, and the result follows. 
\end{proof}

\begin{remark}\label{r:deltaTtwopoints}
    
For $n_Y=2$, we have the rather nice formula 
\[
|\Delta_t|=t-1+\left\lfloor\frac{t-1}{2}\right\rfloor.
\]
\end{remark}

Recall that the quantum Singleton bound states that $k+2d\leq n+2$. We can therefore define the \textbf{defect} of an $[[n,k,d]]_q$ quantum code to be $DF=n+2-k-2d$. The defect is zero precisely when we have an MDS code. By Corollary \ref{c:GMCCParams}, our GMC code has parameters $[[n,n-2|\Delta_t|,t]]_q$. Therefore, the defect our our quantum GMC codes when $n_Y=2$ as a function of $t$ is 

\[
DF(t)=2\left\lfloor\frac{t-1}{2}\right\rfloor. 
\]
In particular, our codes are closer to MDS codes when the minimum distance is small.

\subsection{Codes Beating QGV bound}

Let us recall the  quantum Gilbert-Varshamov bound.

\begin{theorem}[Quantum Gilbert-Varshamov Bound]
    Suppose that $n>k \geq 2$, $d\geq 2$, and $n\equiv k \mod 2$. If
    \begin{equation}\label{qgvb}
    \frac{q^{n-k+2}-1}{q^2-1}\geq \sum_{i=1}^{d-1} (q^2-1)^{i-1} {n \choose i}
    \end{equation}
  then  there exists a pure stabilizer quantum code with parameters $[[n,k,d]]_q$.
\end{theorem}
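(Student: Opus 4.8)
The plan is to reduce the existence of the quantum code to a classical statement about Hermitian self-orthogonal codes over $\F_{q^2}$, and then to establish that statement by a Gilbert--Varshamov-type counting argument.

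First I would use the stabilizer/Hermitian correspondence recalled in Theorem~\ref{t:HermCon} (see also \cite{kkks}). The hypothesis $n\equiv k\pmod{2}$ lets us set $\kappa=(n-k)/2\in\N$, and it is enough to produce an $\F_{q^2}$-linear code $C\subseteq\F_{q^2}^n$ with $\dim_{\F_{q^2}}C=\kappa$, $C\subseteq C^{\perp_h}$ and $d(C^{\perp_h})\ge d$: Theorem~\ref{t:HermCon} then gives an $[[n,\,n-2\kappa,\,\ge d]]_q=[[n,k,\ge d]]_q$ stabilizer code, and the condition $d(C^{\perp_h})\ge d$ makes it pure. (Here $\dim_{\F_{q^2}}C^{\perp_h}=(n+k)/2$, so $|C^{\perp_h}|=q^{n+k}$; the bounds $k\ge 2$, $d\ge 2$, $n>k$ only serve to keep the statement non-degenerate.) So the entire content is: under the displayed inequality such a $C$ exists.

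For that I would count inside the set $\mathcal C$ of all Hermitian self-orthogonal $\kappa$-dimensional $\F_{q^2}$-subspaces of $\F_{q^2}^n$, calling $C\in\mathcal C$ \emph{bad} if $C^{\perp_h}$ has a nonzero vector of Hamming weight $\le d-1$. Since all vectors in a fixed $1$-dimensional $\F_{q^2}$-subspace have equal Hamming weight, the number of ``forbidden directions'' $\langle v\rangle$ with $1\le\mathrm{wt}(v)\le d-1$ is exactly $\sum_{i=1}^{d-1}\binom{n}{i}(q^2-1)^{i-1}$, the right-hand side of the hypothesis, and $C$ is bad iff $\langle v\rangle\subseteq C^{\perp_h}$, equivalently $C\subseteq v^{\perp_h}$, for some forbidden direction. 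One then enumerates, for a fixed nonzero $v$, the totally isotropic $\kappa$-subspaces contained in the (possibly degenerate) Hermitian space $v^{\perp_h}$, and compares with $|\mathcal C|$ to conclude by a counting/averaging argument that a $C$ lying in no forbidden $v^{\perp_h}$ exists. Equivalently one can argue greedily: build a flag $0=C_0\subsetneq\cdots\subsetneq C_\kappa$ of Hermitian self-orthogonal subspaces, at each step adjoining an isotropic vector orthogonal to all previous ones, and show that the choices can be arranged so that the low-weight vectors of the dual are all eliminated within $\kappa$ steps.

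The main obstacle is precisely this enumeration together with the accompanying constant-chasing. Two points make it delicate. First, the unitary group has two orbits on nonzero vectors, so $\#\{C\in\mathcal C:\ C\subseteq v^{\perp_h}\}$ genuinely depends on whether $\langle v\rangle$ is isotropic or anisotropic ($v^{\perp_h}$ is nondegenerate of dimension $n-1$ in the anisotropic case, degenerate with radical $\langle v\rangle$ in the isotropic case), and one has to treat the two cases — or bound uniformly — without losing the sharp threshold $\frac{q^{n-k+2}-1}{q^2-1}$; a crude union bound only delivers the slightly weaker threshold $\frac{q^{2n}-1}{q^{n+k}-1}$, so some refinement (expurgation, or the greedy variant above) is needed. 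Second, the self-orthogonality constraint is global — for a generator matrix $G$ of $C$ it reads $G\overline{G}^{T}=0$ — so it does not mesh with a coordinate-by-coordinate construction as cleanly as the ordinary Gilbert--Varshamov bound, and one typically reserves a few coordinates to ``correct'' the Gram matrix to $0$, using that every element of $\F_q$ is a norm from $\F_{q^2}$. This is the finite quantum Gilbert--Varshamov bound (due to Feng--Ma, recovered in \cite{kkks}); in the paper I would either carry out the enumeration in full along the lines above or simply cite that reference.
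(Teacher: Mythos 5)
The paper does not actually prove this theorem: it is stated as a known result --- the finite quantum Gilbert--Varshamov bound of Feng and Ma, recovered in \cite{kkks} --- and is used only as a benchmark in the later sections. Your instinct to cite rather than reprove is therefore exactly what the authors do, and your reduction is the correct one: by Theorem~\ref{t:HermCon} it suffices to produce a Hermitian self-orthogonal $\F_{q^2}$-linear code $C$ of dimension $\kappa=(n-k)/2$ with $d(C^{\perp_h})\ge d$, and the right-hand side of \eqref{qgvb} is indeed the number of one-dimensional subspaces spanned by vectors of weight between $1$ and $d-1$.

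Judged as a self-contained proof, however, your proposal has a genuine gap, and you flag it yourself: the enumeration of Hermitian self-orthogonal $\kappa$-subspaces contained in $v^{\perp_h}$ is never carried out, and without it you cannot recover the precise threshold $\frac{q^{n-k+2}-1}{q^2-1}$ --- as you concede, a crude union bound lands at a different (weaker) constant. The missing ingredient in the standard argument is the exact count of Hermitian self-orthogonal $[n,\kappa]_{q^2}$ codes whose dual contains a prescribed nonzero vector; transitivity of the unitary group on isotropic (respectively anisotropic) vectors makes this count depend only on the type of $v$, and averaging over all self-orthogonal $\kappa$-subspaces then yields exactly the stated inequality. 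Until that computation is written out (or the result is simply attributed to Feng--Ma, as the paper implicitly does), what you have is a correct plan rather than a proof.
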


We say that a parameter set $n,k,d, q$ beats the QGV bound if the inequality 
\eqref{qgvb} is not satisfied.
We will demonstrate that infinitely many of the codes constructed in 
this paper beat the QGV bound.
Before our proof we need a preliminary lemma.

\begin{lemma}\label{increasing1}
Let $q\ge 11$.
Let $\alpha$ be a  real number in the interval $[\frac{4}{3q-2},\frac{1}{2}]$.
Then
\[
\frac{1}{\alpha} \ q^{1+\alpha}<q^2-2.
\]
\end{lemma}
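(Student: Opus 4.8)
The plan is to fix $q\ge 11$ and regard $g(\alpha)=\frac{1}{\alpha}q^{1+\alpha}$ as a function of the real variable $\alpha$ on the closed interval $[\frac{4}{3q-2},\frac12]$, and to show its maximum over this interval is strictly less than $q^2-2$. The natural first step is to compute $g'(\alpha)$: writing $g(\alpha)=q\cdot\frac{q^\alpha}{\alpha}$, we get $g'(\alpha)=q\cdot\frac{q^\alpha(\alpha\ln q-1)}{\alpha^2}$, so $g$ is decreasing for $\alpha<1/\ln q$ and increasing for $\alpha>1/\ln q$. Consequently $g$ is \emph{convex-like} in the weak sense that on any interval its maximum is attained at one of the two endpoints; hence it suffices to bound $g$ at $\alpha=\frac{4}{3q-2}$ and at $\alpha=\frac12$.

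Next I would dispatch the two endpoint estimates. At $\alpha=\frac12$ we have $g(\tfrac12)=2q^{3/2}$, and $2q^{3/2}<q^2-2$ is equivalent to $2q^{3/2}+2<q^2$, which holds comfortably for all $q\ge 11$ (indeed $q^2/q^{3/2}=q^{1/2}\ge\sqrt{11}>2$ with room to spare for the $+2$); this is a routine one-variable check. At the left endpoint $\alpha_0=\frac{4}{3q-2}$, we have $\frac{1}{\alpha_0}=\frac{3q-2}{4}$ and $1+\alpha_0=1+\frac{4}{3q-2}=\frac{3q+2}{3q-2}$, so
\[
g(\alpha_0)=\frac{3q-2}{4}\,q^{(3q+2)/(3q-2)}=\frac{3q-2}{4}\,q\,q^{4/(3q-2)}=\frac{(3q-2)q}{4}\,q^{4/(3q-2)}.
\]
The exponent $\frac{4}{3q-2}$ is small (at most $\frac{4}{31}<1/7$ for $q\ge 11$), so $q^{4/(3q-2)}$ is only slightly larger than $1$; a clean way to control it is the bound $q^{4/(3q-2)}=\exp\!\big(\frac{4\ln q}{3q-2}\big)\le 1+\frac{8\ln q}{3q-2}$ (valid once the exponent is below, say, $1$, using $e^x\le 1+2x$ for $0\le x\le 1$), or even more crudely $q^{4/(3q-2)}\le q^{1/7}$. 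Substituting, one reduces the desired inequality $g(\alpha_0)<q^2-2$ to a polynomial-type inequality in $q$ of the form $\frac{(3q-2)q}{4}\big(1+\tfrac{8\ln q}{3q-2}\big)<q^2-2$, i.e. roughly $\frac{3q^2}{4}+O(q\ln q)<q^2-2$, and the $\frac34$ coefficient gives a linear-in-$q^2$ slack that absorbs the lower-order terms for all $q\ge 11$.

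The main obstacle is purely the left-endpoint estimate: $g(\alpha_0)$ has leading term $\frac{3q^2}{4}q^{4/(3q-2)}$, and because $q^{4/(3q-2)}\to 1$ only slowly, one must check that the gap $q^2-2-\frac{3q^2}{4}q^{4/(3q-2)}$ stays positive down to $q=11$ rather than just asymptotically. I would handle this by splitting into a ``small $q$'' range (say $11\le q\le 100$), where the inequality is verified by direct numerical evaluation of the finitely many prime powers, and a ``large $q$'' range, where the crude bound $q^{4/(3q-2)}\le q^{1/7}\le q^{4/(3q-2)}|_{q=11}$ combined with $\frac34 q^{2+1/7}<q^2-2$ (which holds once $q^{6/7}>\frac43$, hence for all $q\ge 2$) closes it. All the monotonicity and endpoint calculations are elementary calculus and arithmetic; no result beyond standard inequalities for $e^x$ and monotone functions is needed.
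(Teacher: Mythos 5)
Your overall strategy matches the paper's: $g(\alpha)=q^{1+\alpha}/\alpha$ decreases then increases on the interval (the critical point $1/\ln q$ is interior for $q\ge 11$), so it suffices to bound the two endpoints, and the right endpoint $g(1/2)=2q^{3/2}<q^2-2$ is indeed routine. The gap is at the left endpoint, where the inequality is genuinely tight: at $q=11$ one has $g(\alpha_0)=\tfrac{341}{4}\cdot 11^{4/31}\approx 116.2$ against $q^2-2=119$, a margin of under $3\%$, and both quantitative bounds you propose there fail. First, the estimate $q^{4/(3q-2)}\le 1+\tfrac{8\ln q}{3q-2}$ reduces the claim to $8q\ln q<q^2+2q-8$, which is false at $q=11$ (about $211$ versus $135$) and in fact fails for every prime power $q\le 23$; it only begins to hold near $q=24$. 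So the assertion that the $\tfrac34$ coefficient ``absorbs the lower-order terms for all $q\ge 11$'' is incorrect. Second, the large-$q$ fallback $\tfrac34 q^{2+1/7}<q^2-2$ is false for \emph{all} $q\ge 11$: it is equivalent to $q^{1/7}<\tfrac43(1-2/q^2)$, which forces $q\lesssim 7.5$ (your ``$q^{6/7}>\tfrac43$'' is an algebra slip), and the chain $q^{1/7}\le 11^{4/31}$ also fails for large $q$. The underlying difficulty is that the left-endpoint inequality amounts to $q^{4/(3q-2)}<\tfrac{4(q^2-2)}{q(3q-2)}$, whose right side decreases to $4/3\approx 1.333$ while the left side at $q=11$ equals about $1.363>4/3$; hence no bound of the left side by its value at $q=11$, nor by $1+2x$, can work uniformly in $q$.

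As written, then, no valid argument covers $q>100$ (nor, without the numerical escape hatch, $11\le q\le 23$). The paper sidesteps this by working with the logarithmic gap $F(q)=\log\tfrac{4(q^2-2)}{3q-2}-(1+\tfrac{4}{3q-2})\log q$, showing it is increasing in $q$ and positive at $q=11$ --- one numerical check plus monotonicity in $q$ rather than in $\alpha$. Your argument can be repaired along your own lines by pairing the finite verification for $11\le q\le 23$ with your $e^x\le 1+2x$ estimate for $q\ge 24$ (where $8q\ln q<q^2+2q-8$ does hold), but the proposal as stated does not close the case of large $q$.
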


\begin{proof}
Let $x$ be a real variable and consider the function
\[
f(x)=\frac{q^{1+x}}{x}
\]
on the interval $[\frac{4}{3q-2},\frac{1}{2}]$.
It is not hard to show that the maximum of $f(x)$ on this interval is at
 $x=\frac{4}{3q-2}$, the left endpoint.
We must show that $f(\frac{4}{3q-2})<q^2-2$.
So we want to prove that
\[
q^{1+\frac{4}{3q-2}}<\frac{4}{3q-2}(q^2-2).
\]
Taking logs, it is equivalent to proving that
\[
(1+\frac{4}{3q-2})\log q <\log \frac{4(q^2-2)}{3q-2}.
\]
Letting
\[
F(q)=\log \frac{4(q^2-2)}{3q-2}-(1+\frac{4}{3q-2})\log q
\]
we want to show that $F(q)>0$ for $q\ge 11$. 
It can be easily checked that $F(q)$ is an increasing function of $q$, and that
$F(11)>0$, which proves that $F(q)>0$ for $q\ge 11$.
\end{proof}

\begin{theorem}\label{t:beatGV2}
Let $q$ be a prime power with $q\ge 11$.
Let $C$ be a quantum stabilizer code constructed in Theorem \ref{t:codes2points}
with $n=2(q^2-1)$, and $d$ in the range $5\le d \le 3q/2$.\\
Then $C$ beats the QGV bound.
\end{theorem}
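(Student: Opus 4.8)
The plan is to show that for this family of parameters, the QGV inequality \eqref{qgvb} fails, so that the code beats the bound. First I would record the relevant parameters: by Theorem~\ref{t:codes2points} with $n=2(q^2-1)$ we must choose $\lambda,\tau,\rho,\sigma$ so that $\lambda\tau\sigma=q^2-1$; taking $\lambda=q-1$ and $\tau\sigma=q+1$ is the natural choice, and one checks that $T^*\ge 3q/2$ in this configuration so that the stated range $5\le d\le 3q/2$ is legitimate. Then $k=n-2\left(d-1+\lfloor\frac{d-1}{2}\rfloor\right)$, so that $n-k+2 = 2\left(d-1+\lfloor\frac{d-1}{2}\rfloor\right)+2 \le 3(d-1)+2 = 3d-1$. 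Thus the left-hand side of \eqref{qgvb} is bounded above by $\frac{q^{3d-1}-1}{q^2-1} < q^{3d-3}$.

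Next I would bound the right-hand side from below by a single term. Since all summands are positive, $\sum_{i=1}^{d-1}(q^2-1)^{i-1}\binom{n}{i} \ge (q^2-1)^{d-2}\binom{n}{d-1}$. Using $n=2(q^2-1)$ and the crude bound $\binom{n}{d-1}\ge \left(\frac{n}{d-1}\right)^{d-1} = \left(\frac{2(q^2-1)}{d-1}\right)^{d-1}$, the right-hand side is at least $(q^2-1)^{d-2}\left(\frac{2(q^2-1)}{d-1}\right)^{d-1} = \frac{2^{d-1}(q^2-1)^{2d-3}}{(d-1)^{d-1}}$. So it suffices to prove
\[
q^{3d-3} \le \frac{2^{d-1}(q^2-1)^{2d-3}}{(d-1)^{d-1}},
\]
i.e., after writing $(q^2-1)^{2d-3} \ge (q^2-2)^{2d-3}$ isn't quite the right move — instead I would compare $q^{3d-3}$ with $(q^2-1)^{2d-3}$ directly, noting $(q^2-1)^{2d-3} \ge q^{4d-6}\cdot(1-1/q^2)^{2d-3}$ and that $4d-6 \ge 3d-3$ for $d\ge 3$, leaving a gap of a factor $q^{d-3}$ to absorb the $\frac{(d-1)^{d-1}}{2^{d-1}}$ and $(1-1/q^2)^{-(2d-3)}$ terms. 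This is where Lemma~\ref{increasing1} enters: setting $\alpha$ to be something like $\frac{\log(d-1)}{\log q}$ or a related ratio lying in $[\frac{4}{3q-2},\frac12]$ (valid precisely because $d\le 3q/2$ controls $\alpha$ from above and $d\ge 5$ from below), the lemma gives $\frac1\alpha q^{1+\alpha}<q^2-2$, which after raising to an appropriate power converts the factor $(d-1)^{d-1}$ into something dominated by a power of $q$ small enough to fit inside the $q^{d-3}$ slack.

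The main obstacle will be the bookkeeping in the last step: making the substitution for $\alpha$ precise so that it genuinely lies in the interval $[\frac{4}{3q-2},\frac12]$ across the whole range $5\le d\le 3q/2$, and then chaining the resulting inequality through the powers cleanly. A secondary point requiring care is confirming that the parameter choice realizing $n=2(q^2-1)$ actually yields $T^*\ge 3q/2$ via Proposition~\ref{p:Tstar2points} — this is a finite computation with the constants $C_0,D_0,k_1,k_2,k_3$ for the relevant case of Table~\ref{tab:valueL}, but it must be done to justify that the hypothesis $d\le T^*$ of Theorem~\ref{t:codes2points} is met. Once those two technical points are settled, the inequality \eqref{qgvb} fails and the conclusion follows immediately from the definition of beating the QGV bound.
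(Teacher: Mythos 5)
Your overall strategy matches the paper's: bound the left side of \eqref{qgvb} above using $n-k+2\le 3d-1$, keep only the $i=d-1$ term on the right, apply $\binom{n}{d-1}\ge\bigl(\tfrac{n}{d-1}\bigr)^{d-1}$ with $n=2(q^2-1)$, and finish with Lemma~\ref{increasing1}. But the one step that actually has to be carried out --- reducing the resulting inequality to the lemma --- is precisely the step you leave as an ``obstacle,'' and the substitution you tentatively propose does not work. With $\alpha=\frac{\log(d-1)}{\log q}$ the constraint $5\le d\le 3q/2$ does not place $\alpha$ in $[\frac{4}{3q-2},\frac12]$: already for $q=11$, $d=16$ one gets $\alpha=\log 15/\log 11\approx 1.13>\frac12$, and in general this $\alpha$ grows toward (and past) $1$ as $d$ approaches $3q/2$, so Lemma~\ref{increasing1} simply does not apply to it. Your detour through $(q^2-1)^{2d-3}\ge q^{4d-6}(1-1/q^2)^{2d-3}$ then leaves the factor $(d-1)^{d-1}$ unabsorbed with no valid mechanism to absorb it.

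The correct move, which is what the paper does, is to take the $(d-1)$-th root of the target inequality $q^{3d-1}<\bigl(\tfrac{2(q^2-1)^2}{d-1}\bigr)^{d-1}$, obtaining $q^{3+\frac{2}{d-1}}<\tfrac{2(q^2-1)^2}{d-1}$, and then set $\alpha=\frac{2}{d-1}$, so that the same quantity appears both as the exponent correction and as the prefactor $\frac{2}{d-1}=\alpha$. Using $(q^2-1)^2>q^4-2q^2$ and dividing by $q^2$, this becomes exactly $\frac{1}{\alpha}q^{1+\alpha}<q^2-2$, and the endpoints $d=5$ and $d=\frac{3q}{2}$ correspond exactly to $\alpha=\frac12$ and $\alpha=\frac{4}{3q-2}$, i.e.\ to the interval of Lemma~\ref{increasing1}. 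Your separate concern about whether a parameter choice realizing $n=2(q^2-1)$ actually attains $T^*\ge 3q/2$ is a fair non-vacuousness check, but it is not needed for the statement as written, which assumes $C$ is already constructed by Theorem~\ref{t:codes2points}; the paper addresses existence only in a remark.
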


\begin{proof}

To prove that a code beats the QGV bound we would like to prove that
\[
 \frac{q^{n-k+2}-1}{q^2-1}< \sum_{i=1}^{d-1} (q^2-1)^{i-1} {n \choose i}.
\]
It suffices to prove that 
\[
q^{n-k+2}<(q^2-1)^{d-1}{n \choose d-1}.
\]
In our case, we use the codes whose existence is proved in 
Theorem \ref{t:codes2points}, so we have
\[
n-k+2=2d+2\left\lfloor\frac{d-1}{2}\right\rfloor=
\begin{cases}3d-1 &\text{if $d$ odd}\\ 3d-2 &\text{if $d$ even}
\end{cases}
\]
So it suffices to prove that 
\[
q^{3d-1}<(q^2-1)^{d-1}{n \choose d-1}.
\]
In our case $n=2(q^2-1)$, and using the estimate  $ (\frac{n}{k})^k< {n \choose k}$
it suffices to prove that 
\[
q^{3d-1}<\biggl( \frac{2(q^2-1)^2}{d-1}\biggr)^{d-1}.
\]
Taking the $d-1$ root of both sides, this inequality becomes
\[
q^{3+\frac{2}{d-1}}< \frac{2(q^2-1)^2}{d-1}.
\]
Let $\alpha=\frac{2}{d-1}$, it suffices for us to prove that
\[
\frac{1}{\alpha} q^{3+\alpha}<(q^2-1)^2.
\]
Since $(q^2-1)^2=q^4-2q^2+1$ it suffices for us to prove that
\[
\frac{1}{\alpha} q^{3+\alpha}<q^4-2q^2
\]
or equivalently
\[
\frac{1}{\alpha} q^{1+\alpha}<q^2-2.
\]
The truth of this inequality follows from Lemma \ref{increasing1}.
We note that $\alpha=\frac{1}{2}$ when $d=5$, and $\alpha$ decreases as $d$ increases.
The value $d=\frac{3q}{2}$ corresponds to $\alpha=\frac{4}{3q-2}$.
Choosing $\alpha$  in the interval $[\frac{4}{3q-2},\frac{1}{2}]$
 corresponds to choosing $d$ with $5\le d \le \frac{3q}{2}$.

 \begin{remark}
     It is worth pointing out there there are infinitely many codes from Theorem \ref{t:codes2points} satisfying the assumptions of Theorem \ref{t:beatGV2}. For example, if we take the family of codes from Corollary \ref{c:fam1} and set $\sigma=3$ instead of $2$, then we have codes of length $2(q^2-1)$ and minimum distance $2 \leq d \leq \frac{4q+1}{3}$ for every $d$ in this interval
     (and for every $q\ge 11$ satisfying the hypotheses of Corollary \ref{c:fam1}, of which there are infinitely many). An example of such codes for $q=32$ is provided in Table \ref{tab:codesq32}.
 \end{remark}
\end{proof}

\section{Families of Codes}\label{s:families}

We now give some explicit families of quantum stabilizer codes using our construction. We construct three families of codes with length greater than $q^2+1$ and distance greater than $q+1$. 

\begin{corollary}\label{c:fam1}
    Let $q>2$ be  even with $ q \equiv 2 \text{(mod 3)}$,  Then for any $2 \leq d \leq \frac{4q-2}{3}$, there exists a 
    \[
    \left[\left[\frac{4}{3}\left(q^2-1\right), n-2\left(d-1+\left\lfloor\frac{d-1}{2}\right\rfloor\right), d\right]\right]_q
    \]
    quantum stabilizer code. 
\end{corollary}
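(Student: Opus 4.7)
The plan is to apply Theorem~\ref{t:codes2points} with a carefully chosen parameter quadruple $(\lambda,\tau,\rho,\sigma)$ tailored to produce length $\tfrac{4}{3}(q^2-1)$. First I would set
\[
\lambda = q-1,\qquad \tau = \frac{q+1}{3},\qquad \rho = q+1,\qquad \sigma = 2.
\]
Then $n = 2\lambda\tau\sigma = \tfrac{4}{3}(q^2-1)$, as required. The hypotheses $q > 2$ even and $q\equiv 2\pmod 3$ force $q\ge 8$, so $\lambda\ge 7$ and $\tau\ge 3$ are nontrivial divisors of $q-1$ and $q+1$ respectively.

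Next I would verify the remaining arithmetic constraints of Theorem~\ref{t:codes2points}. The coprimality $\gcd(\lambda,\tau)=1$ follows from $\gcd(q-1,q+1)=\gcd(q-1,2)=1$ (since $q-1$ is odd). Using $q\equiv 2\pmod 3$, we have $\gcd(q-1,3)=1$, hence $\gcd(\lambda\tau,\rho) = \gcd\bigl((q-1)(q+1)/3,\,q+1\bigr) = (q+1)/3$, so $\rho/\gcd(\lambda\tau,\rho) = 3\ge 2$, and $\sigma=2$ lies in the permitted interval $[2,3]$.

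The main computational step is to evaluate $T^*$ via Proposition~\ref{p:Tstar2points}. Since $\lambda = q-1$ and $\tau = (q+1)/3$ are both odd, $\lambda > \tau$, and $\rho\neq 2$, we land in Case~3 of Table~\ref{tab:valueL}. A direct substitution yields
\[
T_1 = \frac{\lambda+\tau-2}{2} = \frac{2q-4}{3},\qquad T_2 = \frac{\lambda+3\tau-2}{2} = q-1,
\]
and both are integers precisely because $q\equiv 2\pmod 3$ guarantees $3\mid (2q-4)$. Hence $C_0 = (2q-1)/3$ and $D_0 = q$. Substituting into the odd-$\tau$ formulas in Definition~\ref{d:2pointks}, one checks $k_1 = k_2 = 1$ and $k_3 = 0$ for every admissible $q\ge 8$. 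Thus $k_3 < k_2 \le k_1$, placing us in the third branch of Proposition~\ref{p:Tstar2points} (odd-$\tau$ case), which gives $T^* = 2C_{2k_2-2} = 2C_0 = (4q-2)/3$, matching the bound in the corollary.

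With $T^*$ in hand, the conclusion follows immediately from Theorem~\ref{t:codes2points}: for every $d$ with $2\le d\le (4q-2)/3$ there exists the stated $[[n,\,n-2(d-1+\lfloor (d-1)/2\rfloor),\,d]]_q$ stabilizer code, the dimension formula coming from Remark~\ref{r:deltaTtwopoints}. The only delicate part of the argument is bookkeeping: verifying that $T_1,T_2,k_1,k_2,k_3$ are integers and that the inequalities $k_3<k_2\le k_1$ truly hold for all $q\ge 8$ in the arithmetic progression; these are exactly the points at which the congruence $q\equiv 2\pmod 3$ and the parity of $q$ are used. No deeper obstruction arises, so this amounts to a direct specialization of the general machinery already developed.
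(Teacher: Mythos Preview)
Your proposal is correct and follows essentially the same route as the paper: the identical parameter choice $(\lambda,\tau,\rho,\sigma)=(q-1,(q+1)/3,q+1,2)$, identification of Case~3 in Table~\ref{tab:valueL}, computation of $C_0=(2q-1)/3$, $D_0=q$, $k_1=k_2=1$, $k_3=0$, and invocation of the third branch of Proposition~\ref{p:Tstar2points} to get $T^*=2C_0=(4q-2)/3$. Your write-up is in fact slightly more careful than the paper's, since you explicitly verify the coprimality and $\rho/\gcd(\lambda\tau,\rho)$ hypotheses of Theorem~\ref{t:codes2points}, which the paper leaves implicit.
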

\begin{proof}
We take $\lambda=q-1, \tau=(q+1)/3, \rho=q+1, \sigma=2, n_Y=2$. The parameters of the code follow from Corollary \ref{c:GMCCParams} and Remark \ref{r:deltaTtwopoints}. 

In order to justify that our codes are Hermitian self-orthogonal for $2 \leq d \leq \frac{4q-2}{3}$, we can apply Proposition \ref{p:Tstar2points}. We are in Case 3 from Table \ref{tab:valueL}, and can compute that $C_0=\frac{2q-1}{3}, D_0=q, k_1=k_2=1, k_3=0$. We can check that  $n_Y=3=(D_0+\lambda) /C_0$ and $D_0=q >q-1 =\lambda$, thus the assumptions of Lemmas \ref{l:RpointsValuei} and \ref{l:RpointsValuel} hold. Therefore, by Proposition \ref{p:Tstar2points}, $T^*=2C_0=\frac{4q-2}{3}$ and the result follows from Theorem \ref{t:codes2points}. 
\end{proof}

Examples of new codes from Corollary \ref{c:fam1} with $q=8$ are listed in Table \ref{tab:CodesFirstFamily}. 
Some of these beat the codes listed in codetables.

\begin{table}[H]
\centering
\begin{tabular}{|c|c|c|c|}
\hline
$q,\lambda,\tau,\rho,\sigma,n_Y$&Code& Beats QGV& Comment\\
 \hline 
  \hline $ 8,7,3,9,2,2$ & $[[84,82,2]]_{8}$ & Yes & Matches codetables.de \\ 
 \hline $ 8,7,3,9,2,2$ & $[[84,78,3]]_{8}$ & No&Matches codetables.de \\ 
 \hline $ 8,7,3,9,2,2$ & $[[84,76,4]]_{8}$ & Yes&Matches codetables.de \\ 
 \hline $ 8,7,3,9,2,2$ & $[[84,72,5]]_{8}$ & Yes&Matches codetables.de \\ 
 \hline $ 8,7,3,9,2,2$ & $[[84,70,6]]_{8}$ & Yes &Matches codetables.de\\
 \hline $ 8,7,3,9,2,2$ & $[[84,66,7]]_{8}$& Yes & Beats $[[84,66,6]]_8$ from codetables.de\\ 
 \hline $ 8,7,3,9,2,2$ & $[[84,64,8]]_{8}$& Yes & Beats $[[84,64,7]]_8$ from codetables.de\\ 
 \hline $ 8,7,3,9,2,2$ & $[[84,60,9]]_{8}$& Yes & Beats $[[84,60,8]]_8$ from codetables.de\\ 
 \hline $ 8,7,3,9,2,2$ & $[[84,58,10]]_{8}$& Yes & Beats $[[84,58,8]]_8$ from codetables.de\\ 
 \hline
\end{tabular}

\caption{Some codes from Corollary \ref{c:fam1}}
\label{tab:CodesFirstFamily}
\end{table}

Next we present a family for $q \equiv 3 \text{ (mod $8$)} $.

\begin{corollary}\label{c:fam2}
    Let $q\ge 11$ with $q \equiv 3 \text{ (mod $8$)} $. Then for any $2 \leq d \leq \frac{5q+1}{4}$ there exists a
    \[
    \left[\left[\frac{3}{2}\left(q^2-1\right), n-2|\Delta_d|, d\right]\right]_q
    \]
    quantum stabilizer code, where 
    \[
    |\Delta_d|=d-1+\left\lfloor\frac{d-1}{2}\right\rfloor+ \left\lfloor\frac{d-1}{3}\right\rfloor.
    \]
\end{corollary}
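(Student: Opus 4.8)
The plan is to mirror the proof of Corollary \ref{c:fam1}, applying Theorem \ref{t:codesRpoints} with the parameter choices $\lambda=q-1$, $\tau=(q+1)/4$, $\rho=q+1$, $\sigma=2$, and $n_Y=3$. First I would verify the hypotheses of the construction: since $q\equiv 3\pmod 8$ we have $q-1\equiv 2\pmod 4$, so $4\mid q+1$ and $\tau=(q+1)/4$ is an integer; moreover $\tau$ is odd (as $(q+1)/4$ is odd when $q\equiv 3\pmod 8$), $\lambda=q-1>1$ divides $q-1$, and $\tau,\rho=q+1>1$ divide $q+1$. One checks $\gcd(\lambda,\tau)=\gcd(q-1,(q+1)/4)$; since $\gcd(q-1,q+1)\mid 2$ and $q$ is odd this is $\gcd(q-1,q+1)=2$, but $2\nmid (q+1)/4$ when $q\equiv 3\pmod 8$, so $\gcd(\lambda,\tau)=1$ as required. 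Finally $\rho/\gcd(\lambda\tau,\rho)=(q+1)/\gcd((q-1)(q+1)/4,\,q+1)$; since $(q+1)/4$ divides both $\lambda\tau$ and $\rho$, the gcd is at least $(q+1)/4$, and a short computation shows it equals $(q+1)/4$, giving ratio $4\ge 2$, so $\sigma=2$ is a legal choice and $n_Y=3\le q$.

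Next I would identify which case of Table \ref{tab:valueL} applies: $\lambda=q-1$ is even, so we are in Case 1, and the table gives $(T_1,T_2)=\bigl(\frac{\lambda-2}{2},\frac{\lambda+4\tau-2}{2}\bigr)$, hence $C_0=T_1+1=\lambda/2=(q-1)/2$ and $D_0=T_2+1=\lambda/2+2\tau=(q-1)/2+(q+1)/2=q$. Then I would check the two standing assumptions needed for Proposition \ref{p:TstarRpoints} and Theorem \ref{t:codesRpoints}: that $D_0\ge\lambda$ and $n_Y\le (D_0+\lambda)/C_0$. Here $D_0=q\ge q-1=\lambda$, and $(D_0+\lambda)/C_0=(q+q-1)/((q-1)/2)=\frac{2(2q-1)}{q-1}>3$ for $q\ge 11$, so $n_Y=3$ satisfies the bound. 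Since $\lambda$ is even we are genuinely in the $\tau$-even branch of Table \ref{tab:valueL} but $\tau=(q+1)/4$ is odd, so I would use the $\tau$-odd formulas in Definition \ref{d:Rpointks}: compute $k_1=\lfloor (C_0-1)/\tau\rfloor=\lfloor\frac{(q-3)/2}{(q+1)/4}\rfloor=\lfloor\frac{2(q-3)}{q+1}\rfloor=1$ for $q\ge 11$; $k_2=\max\{\lceil\frac{n_YC_0-D_0}{(n_Y+1)\tau}\rceil,0\}=\lceil\frac{3(q-1)/2-q}{4(q+1)/4}\rceil=\lceil\frac{(q-3)/2}{q+1}\rceil=1$; and $k_3=\lfloor\frac{n_YC_0-D_0+n_Y\tau}{(n_Y+1)\tau}\rfloor=\lfloor\frac{(q-3)/2+3(q+1)/4}{q+1}\rfloor$, which I expect evaluates to $1$ as well. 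Assuming $k_3=k_2=1\le k_1=1$, Proposition \ref{p:TstarRpoints} (odd-$\tau$ branch, second case) gives $T^*=D_{2k_2}=D_2=D_0+2\cdot\tau/2\cdot\,$, i.e. $D_0+2\tau=q+(q+1)/2=\frac{3q+1}{2}$; but the claimed range is $d\le\frac{5q+1}{4}$, so I would instead expect either the third case to hold (giving $T^*=n_YC_{2k_2-2}=3C_0=\frac{3(q-1)}{2}$) or a recomputation of the $k_i$ — the precise bookkeeping of these floors and ceilings for $q\equiv 3\pmod 8$ is exactly the delicate point to get right, and I would pin down $k_3$ versus $k_2$ carefully to land on $T^*=\frac{5q+1}{4}$.

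Once $T^*\ge\frac{5q+1}{4}$ is established, the length is $n=n_Y\lambda\tau\sigma=3\cdot(q-1)\cdot\frac{q+1}{4}\cdot 2=\frac{3(q^2-1)}{2}$, matching the statement, and the dimension $n-2|\Delta_d|$ with $|\Delta_d|=\sum_{i=1}^{3}\lfloor\frac{d-1}{i}\rfloor=d-1+\lfloor\frac{d-1}{2}\rfloor+\lfloor\frac{d-1}{3}\rfloor$ follows directly from Lemma \ref{l:SizeDelta} with $n_Y=3$. Then Theorem \ref{t:codesRpoints} yields the $[[n,n-2|\Delta_d|,d]]_q$ code for every $2\le d\le T^*$, in particular for all $d$ in the advertised range. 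The main obstacle, as flagged above, is purely the arithmetic of verifying the $k_i$ values and selecting the correct case of Proposition \ref{p:TstarRpoints} so that $T^*$ comes out to exactly $\frac{5q+1}{4}$ rather than some nearby value; everything else is a routine substitution into results already proved.
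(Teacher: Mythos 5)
Your approach is the same as the paper's (which applies Theorem \ref{t:codesRpoints} with $\lambda=q-1$, $\tau=(q+1)/4$, $\sigma=2$, $n_Y=3$; the paper takes $\rho=4$ rather than your $\rho=q+1$, but both give $\rho/\gcd(\lambda\tau,\rho)=2$, so $\sigma=2$ is admissible either way and the code is unchanged — note, though, that your gcd computation is off: $\gcd(\lambda\tau,q+1)=\tfrac{q+1}{4}\gcd(q-1,4)=\tfrac{q+1}{2}$, not $\tfrac{q+1}{4}$, which is why the ratio is $2$ and not $4$). Your identification of Case 1, the values $C_0=(q-1)/2$, $D_0=q$, the verification of the hypotheses of Lemmas \ref{l:RpointsValuei} and \ref{l:RpointsValuel}, and the computation $k_1=k_2=k_3=1$ for $q\ge 11$ are all correct and match the paper.

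The one genuine defect is at the final step, where you leave the proof unresolved. By Definition \ref{d:2pointks}, $D_j=D_0+j\tau/2$, so $D_{2k_2}=D_2=D_0+2\cdot(\tau/2)=D_0+\tau$, \emph{not} $D_0+2\tau$ as you wrote. Hence the second case of the odd-$\tau$ branch of Proposition \ref{p:TstarRpoints} (which applies since $k_3=k_2=1\le k_1=1$) gives
\[
T^*=D_2=q+\frac{q+1}{4}=\frac{5q+1}{4},
\]
exactly the claimed bound. There is no discrepancy, no need to invoke the third case, and nothing left to ``pin down''; the spurious value $\tfrac{3q+1}{2}$ came only from doubling $\tau$ instead of halving the factor $j=2$. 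With that correction your argument is complete and coincides with the paper's proof.
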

\begin{proof}
    We take $\lambda=q-1, \tau=\frac{q+1}{4}, \rho=4,  \sigma=2, n_Y=3$. The parameters of our GMC code (and resulting quantum stabilizer code) follows from Corollary \ref{c:GMCCParams}. 

    To show that our GMC codes are Hermitian self-orthogonal, we apply Proposition \ref{p:TstarRpoints}. Since $\lambda=q-1$ is even, we are in Case 1 from Table \ref{tab:valueL}, which means that $C_0=(q-1)/2$ and $D_0=q$. We can verify the assumptions that $n_Y=3 \leq (4q-2)/(q-1)=(D_0+\lambda)/C_0)$ and $D_0=q>q-1=\lambda$. We can calculate that for $q\ge 11$, we have $k_1=k_2=k_3=1$. Therefore, $T^*=D_{2k_2}=\frac{5q+1}{4}$ and the result follows from Theorem \ref{t:codesRpoints}. 
\end{proof}

Examples of codes from Corollary \ref{c:fam2} with $q=11$ are listed in Table \ref{tab:CodesSecondFamily}.

\begin{table}[ht]
\centering
\begin{tabular}{|c|c|c|c|}
\hline
$q,\lambda,\tau,\rho,\sigma,n_Y$&Code& Beats QGV& Comment\\
 \hline 
  \hline $ 11,5,6,12,2,3$ & $[[180,178,2]]_{11}$ & Yes & MDS \\ 
 \hline $ 11,5,6,12,2,3$ & $[[180,174,3]]_{11}$ & Yes&Singleton defect is 2\\ 
 \hline $ 11,5,6,12,2,3$ & $[[180,166,5]]_{11}$ & No &Beats $[[180,164,5]]_{11}$ in \cite{barbero24}\\  
 \hline $ 11,5,6,12,2,3$ & $[[180,164,6]]_{11}$ & Yes&Best known \\ 
 \hline
\end{tabular}

\caption{Some codes from Corollary \ref{c:fam2}}
\label{tab:CodesSecondFamily}
\end{table}

Next we present a family for $q \equiv 7 \text{ (mod $8$)} $.

\begin{corollary}\label{c:fam3}
    Let $q \equiv 7 \text{ (mod $8$)} $. Then for any $2 \leq d \leq \frac{5q+1}{4}$ there exists a
    \[
    \left[\left[\frac{3}{2}\left(q^2-1\right), n-2|\Delta_d|, d\right]\right]_q
    \]
    quantum stabilizer code, where 
    \[
    |\Delta_d|=d-1+\left\lfloor\frac{d-1}{2}\right\rfloor+ \left\lfloor\frac{d-1}{3}\right\rfloor.
    \]
\end{corollary}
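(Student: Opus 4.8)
The plan is to proceed exactly as in the proof of Corollary \ref{c:fam2}, adjusting only the arithmetic forced by the congruence $q\equiv 7\pmod 8$. First I would set $\lambda=q-1$, $\tau=\frac{q+1}{4}$, $\rho=4$, $\sigma=2$, $n_Y=3$, and check the standing hypotheses of the construction: $\lambda>1$ divides $q-1$; $\tau,\rho>1$ divide $q+1$ (here $q\equiv 7\pmod 8$ guarantees $4\mid q+1$ so $\tau$ is an integer and $\ge 2$ once $q\ge 7$); $\gcd(\lambda,\tau)=\gcd(q-1,\tfrac{q+1}{4})=1$ since $\gcd(q-1,q+1)\in\{1,2\}$ and $\tau$ is odd in this regime; and $\frac{\rho}{\gcd(\lambda\tau,\rho)}\ge 2$, for which one checks that $\gcd(\lambda\tau,4)=2$ (as $\lambda$ is even, $\tau$ odd), giving $\frac{\rho}{\gcd(\lambda\tau,\rho)}=2$, so $\sigma=2$ is admissible. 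The length is then $n=n_Y\lambda\tau\sigma=3(q-1)\cdot\frac{q+1}{4}\cdot 2=\frac{3}{2}(q^2-1)$, as claimed, and $|\Delta_d|$ is given by Lemma \ref{l:SizeDelta} with $n_Y=3$, which is precisely the stated expression $d-1+\lfloor\frac{d-1}{2}\rfloor+\lfloor\frac{d-1}{3}\rfloor$.

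Next I would verify the Hermitian self-orthogonality range via Proposition \ref{p:TstarRpoints} and Theorem \ref{t:codesRpoints}. Since $\lambda=q-1$ is even, we are in Case 1 of Table \ref{tab:valueL}, so $(T_1,T_2)=\left(\frac{\lambda-2}{2},\frac{\lambda+4\tau-2}{2}\right)$, whence $C_0=T_1+1=\frac{\lambda}{2}=\frac{q-1}{2}$ and $D_0=T_2+1=\frac{\lambda}{2}+2\tau=\frac{q-1}{2}+\frac{q+1}{2}=q$. I would then confirm the hypotheses of Lemmas \ref{l:RpointsValuei} and \ref{l:RpointsValuel}: $D_0=q>q-1=\lambda$, and $n_Y=3\le \frac{D_0+\lambda}{C_0}=\frac{q+(q-1)}{(q-1)/2}=\frac{2(2q-1)}{q-1}$, which holds for all $q\ge 5$. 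Since $\tau=\frac{q+1}{4}$ is odd when $q\equiv 7\pmod 8$ (then $q+1\equiv 0\pmod 8$, so $\tau$ is even — wait, this needs care), I must determine the parity of $\tau$: $q\equiv 7\pmod 8$ gives $q+1\equiv 0\pmod 8$, hence $\tau=\frac{q+1}{4}$ is even, so I use the $\tau$-even branch of Definition \ref{d:Rpointks} and Proposition \ref{p:TstarRpoints}. A short computation of $k_1=\max\{\lfloor\frac{2(C_0-1)}{\tau}\rfloor,0\}$, $k_2=\max\{\lceil\frac{2(n_YC_0-D_0)}{(n_Y+1)\tau}\rceil,0\}$, $k_3=\max\{\lfloor\frac{2(n_YC_0-D_0)+n_Y\tau}{(n_Y+1)\tau}\rfloor,0\}$ with $n_Y=3$, $C_0=\frac{q-1}{2}$, $D_0=q$, $\tau=\frac{q+1}{4}$ should give $k_1=k_2=k_3=1$ for $q\ge 15$ (the relevant residues start at $q=15,23,31,\dots$), landing in the middle case $k_3=k_2\le k_1$, so $T^*=D_{k_2}=D_0+k_2\tau=q+\frac{q+1}{4}=\frac{5q+1}{4}$. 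Theorem \ref{t:codesRpoints} then yields the asserted $[[n,n-2|\Delta_d|,d]]_q$ codes for every $2\le d\le \frac{5q+1}{4}$.

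The step I expect to be the main obstacle — really the only delicate point — is pinning down the parity of $\tau=\frac{q+1}{4}$ and the exact values of $k_1,k_2,k_3$ across the congruence class $q\equiv 7\pmod 8$, together with the small-$q$ boundary behaviour. The three $\max$-with-zero definitions mean the formulas can collapse to the degenerate cases (for instance $k_2>k_1$) for the smallest few $q$, so I would either impose a mild lower bound such as $q\ge 15$ (note $q=7$ itself may need separate verification or exclusion) or check the handful of initial cases by hand, exactly as the proof of Corollary \ref{c:fam2} does for $q\ge 11$. Everything else is a direct transcription of the Case 1 / $n_Y=3$ / $\sigma=2$ instance already carried out for the $q\equiv 3\pmod 8$ family, since the values $C_0=\frac{q-1}{2}$, $D_0=q$ are identical; only the divisibility argument for $\tau$ (using $8\mid q+1$ rather than $8\mid q-1+4$) and the parity branch of the $k_i$-analysis differ.
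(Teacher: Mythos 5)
Your parameter choice $\lambda=q-1$, $\tau=\frac{q+1}{4}$, $\rho=4$ does not satisfy the standing hypotheses of the construction when $q\equiv 7\pmod 8$, and this is a genuine gap rather than a bookkeeping issue. As you yourself notice midway through, $q\equiv 7\pmod 8$ forces $8\mid q+1$, so $\tau=\frac{q+1}{4}$ is \emph{even}; but $\lambda=q-1$ is also even, so $\gcd(\lambda,\tau)\ge 2$, contradicting the required condition $\gcd(\lambda,\tau)=1$ (your earlier verification of this gcd rests on the incorrect claim that $\tau$ is odd in this regime, which you then contradict when selecting the parity branch for the $k_i$). The condition $\frac{\rho}{\gcd(\lambda\tau,\rho)}\ge 2$ fails as well: with $\lambda$ and $\tau$ both even we have $4\mid\lambda\tau$, hence $\gcd(\lambda\tau,4)=4$ and $\frac{\rho}{\gcd(\lambda\tau,\rho)}=1$, leaving no admissible $\sigma$. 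So Theorem \ref{t:codesRpoints} cannot be invoked with these parameters and the codes you construct do not exist as described.

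The paper's proof avoids this by taking $\lambda=\frac{q-1}{2}$ (odd, since $q-1\equiv 6\pmod 8$), $\tau=\frac{q+1}{2}$ (even), $\rho=8$, $\sigma=2$, $n_Y=3$, which lands in Case 2 of Table \ref{tab:valueL} rather than Case 1. The length is still $n_Y\lambda\tau\sigma=\frac{3}{2}(q^2-1)$, and Case 2 gives $C_0=T_1+1=\lambda=\frac{q-1}{2}$ and $D_0=T_2+1=\lambda+\tau=q$ --- the same values your (inadmissible) Case 1 computation produced --- so the rest of your analysis ($k_1=k_2=k_3=1$ and $T^*=D_0+\tau/2=\frac{5q+1}{4}$) carries over once transplanted to the correct parameters. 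Your instinct that only divisibility and parity bookkeeping separate this family from Corollary \ref{c:fam2} is right, but the correct fix is a different triple $(\lambda,\tau,\rho)$, not a different parity branch of the same one.
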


\begin{proof}
    We take $\lambda=\frac{q-1}{2}, \tau=\frac{q+1}{2}, \rho=8, \sigma=2, n_Y=3$. We are in Case 2 from Table \ref{tab:valueL}. As in the proof of Corollary \ref{c:fam2} we can verify the assumptions of Lemmas \ref{l:RpointsValuei} and \ref{l:RpointsValuel}, compute that $k_1=k_2=k_3=1$, and use Proposition \ref{p:TstarRpoints} to compute that $T^*=\frac{5q+1}{4}$. The result follows from Theorem \ref{t:codesRpoints}. 
\end{proof}
For example, when $q=7$ we get the following codes:

\begin{table}[ht]
\centering
\begin{tabular}{|c|c|c|c|}
\hline
$q,\lambda,\tau,\rho,\sigma,n_Y$&Code& QGV&Comment\\
\hline
 \hline $ 7,3,4,8,2,3$ & $[[72,70,2]]_{7}$ & Yes&Matches codetables.de \\ 
 \hline $ 7,3,4,8,2,3$ & $[[72,66,3]]_{7}$ & Yes &Matches codetables.de\\ 
 \hline $ 7,3,4,8,2,3$ & $[[72,62,4]]_{7}$ & No &Matches codetables.de\\ 
 \hline $ 7,3,4,8,2,3$ & $[[72,58,5]]_{7}$ & No&Matches codetables.de \\ 
 \hline $ 7,3,4,8,2,3$ & $[[72,56,6]]_{7}$ & Yes&Matches codetables.de \\ 
 \hline
\end{tabular}
\caption{Some Codes From Corollary \ref{c:fam3}}
\label{tab:CodesThirdFamily}
\end{table}

\section{Examples and Comparison with the Literature}

In Tables \ref{tab:codesq8}-\ref{tab:codesq32}, we present some examples of codes from our construction and state whether they beat the quantum Gilbert-Varshamov bound, along with a comparison to known results. In general, choosing $\lambda$ and $\tau$ to be larger will result in $T^*$ being larger, along with the length. However, choosing $\rho$ to be smaller can increase the range of distances for which the GMC code is self orthogonal, since Condition 3 of Proposition \ref{p:orthogconditions} applies to more points.

\begin{table}[ht]
\centering
\begin{tabular}{|c|c|c|c|}

\hline
$q,\lambda,\tau,\rho,\sigma,n_Y$&Code&Beats QGV & Comment\\
 \hline 
 \hline $ 8,7,3,9,3,2$ & $[[126,124,2]]_{8}$ & Yes & Beats $[[127,113,3]]_8$ in yvesedel.de  \\ 
 \hline $ 8,7,3,9,3,2$ & $[[126,120,3]]_{8}$ & Yes & Beats $[[127,113,3]]_8$ in yvesedel.de  \\ 
 \hline $ 8,7,3,9,3,2$ & $[[126,118,4]]_{8}$ & Yes& Beats $[[127,113,3]]_8$ in yvesedel.de \\ 
 \hline $ 8,7,3,9,3,2$ & $[[126,114,5]]_{8}$ & Yes& Beats $[[127,106,5]]_8$ in yvesedel.de \\ 
 \hline $ 8,7,3,9,3,2$ & $[[126,112,6]]_{8}$ & Yes& Beats $[[127,106,5]]_8$ in yvesedel.de  \\ 
 \hline $ 8,7,3,9,3,2$ & $[[126,108,7]]_{8}$ & Yes& Beats $[[128,98,7]]_8$ in yvesedel.de\\ 
 \hline $ 8,7,3,9,3,2$ & $[[126,106,8]]_{8}$ & Yes& Beats $[[127,106,5]]_8$ in yvesedel.de \\ 
 \hline $ 8,7,3,9,3,2$ & $[[126,102,9]]_{8}$ & Yes& Beats $[[127,106,5]]_8$ in yvesedel.de \\ 
 \hline $ 8,7,3,9,3,2$ & $[[126,100,10]]_{8}$ & Yes& Beats $[[127,78,10]]_8$ in yvesedel.de\\ 
 \hline
 \hline $ 8,7,3,9,3,3$ & $[[189,187,2]]_{8}$ & Yes& Beats $[[189,177,3]]_{8}$ in yvesedel.de \\ 
 \hline $ 8,7,3,9,3,3$ & $[[189,183,3]]_{8}$ & Yes& Beats $[[189,177,3]]_{8}$ in yvesedel.de \\ 
 \hline $ 8,7,3,9,3,3$ & $[[189,179,4]]_{8}$ & Yes& Beats $[[189,175,4]]_{8}$ in yvesedel.de  \\ 
 \hline $ 8,7,3,9,3,3$ & $[[189,175,5]]_{8}$ & Yes& Beats $[[189,171,5]]_{8}$ in yvesedel.de\\ 
 \hline $ 8,7,3,9,3,3$ & $[[189,173,6]]_{8}$ & Yes& Beats $[[189,171,5]]_{8}$ in yvesedel.de \\ 
 \hline $ 8,7,3,9,3,3$ & $[[189,167,7]]_{8}$ & No& Beats $[[189,161,7]]_{8}$ in yvesedel.de\\ 
 \hline $ 8,7,3,9,3,3$ & $[[189,165,8]]_{8}$ & Yes& Beats $[[189,157,7]]_{8}$ in yvesedel.de \\ 
 \hline $ 8,7,3,9,3,3$ & $[[189,161,9]]_{8}$ & Yes& Beats $[[189,151,9]]_{8}$ in yvesedel.de\\ 
 \hline $ 8,7,3,9,3,3$ & $[[189,157,10]]_{8}$ & Yes& Beats $[[189,157,7]]_{8}$ in yvesedel.de \\ 
 \hline $ 8,7,3,9,3,3$ & $[[189,153,11]]_{8}$ & No& Beats $[[189,145,9]]_{8}$ in yvesedel.de \\ 
 \hline $ 8,7,3,9,3,3$ & $[[189,151,12]]_{8}$ & Yes& Beats $[[189,137,12]]_{8}$ in yvesedel.de\\ 
 \hline
\end{tabular}
\caption{ Codes we can construct with $q=8$}
\label{tab:codesq8}
\end{table}

\begin{table}[ht]
\centering
\begin{tabular}{|c|c|c|c|}
\hline
$q,\lambda,\tau,\rho,\sigma,n_Y$&Code& Beats QGV& Comment\\
 \hline 
 \hline $ 7,3,4,8,2,3$ & $[[72,58,5]]_{7}$ & No & Beats $[[72,56,5]]_7$ in \cite{barbero24} \\ 

  \hline $ 7,3,2,8,4,3$ & $[[72,56,6]]_{7}$ & Yes & Beats $[[72,56,5]]_7$ in \cite{barbero24}\\

   \hline 
   $ 7,3,2,8,3,3$ & $[[54,52,2]]_{7}$ & Yes & Matches codetables.de, beats $[[55,51,2]]_7$ in \cite{Tian2024}.\\
    \hline $ 7,3,4,8,2,4$ & $[[96,90,3]]_{7}$ & Yes & Beats $[[98,82,3]]_7$ in \cite{Tian2024}\\ 
 \hline $ 7,3,4,8,2,4$ & $[[96,86,4]]_{7}$ & Yes &Beats $[[98,82,3]]_7$ in \cite{Tian2024} \\ 
  \hline $ 7,3,2,8,3,6$ & $[[108,102,3]]_{7}$ & Yes & Beats $[[116,100,3]]_7$ in \cite{Tian2024} \\ 
    \hline $ 11,5,4,3,3,3$ & $[[180,166,5]]_{11}$ & No & Beats $[[180,164,5]]_{11}$ in \cite{barbero24}\\ 
     \hline $ 13,3,2,7,4,2$ & $[[48,42,3]]_{13}$ & No & Beats $[[52,42,3]]_{13}$ in \cite{Zhang2023}\\ 
      \hline $ 23,11,2,4,2,2$ & $[[88,82,3]]_{23}$ & No & Beats $[[92,82,3]]_23$ in \cite{Zhang2023}\\ 
      \hline
\end{tabular}

\caption{A sample of codes from Theorem \ref{t:codesRpoints}.}
\label{tab:CodeComparisons}
\end{table}

\begin{table}[ht]
\centering
\begin{tabular}{|c|c|c|}
\hline
$q,\lambda,\tau,\rho,\sigma,n_Y$&Code&Beats QGV\\
 \hline
  \hline $ 32,31,11,33,3,2$ & $[[2046,2044,2]]_{32}$ & Yes \\ 
 \hline $ 32,31,11,33,3,2$ & $[[2046,2040,3]]_{32}$ & Yes \\ 
 \hline $ 32,31,11,33,3,2$ & $[[2046,2038,4]]_{32}$ & Yes \\ 
 \hline $ 32,31,11,33,3,2$ & $[[2046,2034,5]]_{32}$ & Yes \\ 
 \hline $ 32,31,11,33,3,2$ & $[[2046,2032,6]]_{32}$ & Yes \\ 
 \hline $ 32,31,11,33,3,2$ & $[[2046,2028,7]]_{32}$ & Yes \\ 
 \hline $ 32,31,11,33,3,2$ & $[[2046,2026,8]]_{32}$ & Yes \\ 
 \hline $ 32,31,11,33,3,2$ & $[[2046,2022,9]]_{32}$ & Yes \\ 
 \hline $ 32,31,11,33,3,2$ & $[[2046,2020,10]]_{32}$ & Yes \\ 
 \hline $ 32,31,11,33,3,2$ & $[[2046,2016,11]]_{32}$ & Yes \\ 
 \hline $ 32,31,11,33,3,2$ & $[[2046,2014,12]]_{32}$ & Yes \\ 
 \hline $ 32,31,11,33,3,2$ & $[[2046,2010,13]]_{32}$ & Yes \\ 
 \hline $ 32,31,11,33,3,2$ & $[[2046,2008,14]]_{32}$ & Yes \\ 
 \hline $ 32,31,11,33,3,2$ & $[[2046,2004,15]]_{32}$ & Yes \\ 
 \hline $ 32,31,11,33,3,2$ & $[[2046,2002,16]]_{32}$ & Yes \\ 
 \hline $ 32,31,11,33,3,2$ & $[[2046,1998,17]]_{32}$ & Yes \\ 
 \hline $ 32,31,11,33,3,2$ & $[[2046,1996,18]]_{32}$ & Yes \\ 
 \hline $ 32,31,11,33,3,2$ & $[[2046,1992,19]]_{32}$ & Yes \\ 
 \hline $ 32,31,11,33,3,2$ & $[[2046,1990,20]]_{32}$ & Yes \\ 
 \hline $ 32,31,11,33,3,2$ & $[[2046,1986,21]]_{32}$ & Yes \\ 
 \hline $ 32,31,11,33,3,2$ & $[[2046,1984,22]]_{32}$ & Yes \\ 
 \hline $ 32,31,11,33,3,2$ & $[[2046,1980,23]]_{32}$ & Yes \\ 
 \hline $ 32,31,11,33,3,2$ & $[[2046,1978,24]]_{32}$ & Yes \\ 
 \hline $ 32,31,11,33,3,2$ & $[[2046,1974,25]]_{32}$ & Yes \\ 
 \hline $ 32,31,11,33,3,2$ & $[[2046,1972,26]]_{32}$ & Yes \\ 
 \hline $ 32,31,11,33,3,2$ & $[[2046,1968,27]]_{32}$ & Yes \\ 
 \hline $ 32,31,11,33,3,2$ & $[[2046,1966,28]]_{32}$ & Yes \\ 
 \hline $ 32,31,11,33,3,2$ & $[[2046,1962,29]]_{32}$ & Yes \\ 
 \hline $ 32,31,11,33,3,2$ & $[[2046,1960,30]]_{32}$ & Yes \\ 
 \hline $ 32,31,11,33,3,2$ & $[[2046,1956,31]]_{32}$ & Yes \\ 
 \hline $ 32,31,11,33,3,2$ & $[[2046,1954,32]]_{32}$ & Yes \\ 
 \hline $ 32,31,11,33,3,2$ & $[[2046,1950,33]]_{32}$ & Yes \\ 
 \hline $ 32,31,11,33,3,2$ & $[[2046,1948,34]]_{32}$ & Yes \\ 
 \hline $ 32,31,11,33,3,2$ & $[[2046,1944,35]]_{32}$ & Yes \\ 
 \hline $ 32,31,11,33,3,2$ & $[[2046,1942,36]]_{32}$ & Yes \\ 
 \hline $ 32,31,11,33,3,2$ & $[[2046,1938,37]]_{32}$ & Yes \\ 
 \hline $ 32,31,11,33,3,2$ & $[[2046,1936,38]]_{32}$ & Yes \\ 
 \hline $ 32,31,11,33,3,2$ & $[[2046,1932,39]]_{32}$ & Yes \\ 
 \hline $ 32,31,11,33,3,2$ & $[[2046,1930,40]]_{32}$ & Yes \\ 
 \hline $ 32,31,11,33,3,2$ & $[[2046,1926,41]]_{32}$ & Yes \\ 
 \hline $ 32,31,11,33,3,2$ & $[[2046,1924,42]]_{32}$ & Yes \\
 \hline
\end{tabular}

\caption{A sample of codes satisfying Theorem \ref{t:beatGV2} for $q=32$}
\label{tab:codesq32}
\end{table}

\section*{Data Availability}
Data sharing is not applicable to this article as no datasets were generated or analyzed
during the current study.

\section*{Conflict of interest}
We declare no conflicts of interest.

\end{document}